\documentclass[12pt]{article}
\usepackage{a4wide}
\usepackage{latexsym}
\usepackage{amsmath}
\usepackage{amsfonts}
\usepackage{amssymb}
\usepackage{amscd}
\usepackage{amsthm}
\usepackage{cite}
\usepackage{placeins}
\usepackage{bm}
\usepackage{axodraw2}
\usepackage{tcolorbox}
\tcbuselibrary{breakable}
\usepackage{pslatex}
\usepackage{graphicx}
\usepackage[latin1,utf8]{inputenc}
\usepackage[T1]{fontenc}
\usepackage{graphicx}
\usepackage{amsmath}
\usepackage{amsfonts}
\usepackage{acronym}
\usepackage{ifthen}

\usepackage{hyperref}

\allowdisplaybreaks

\newcommand{\bq}{\begin{eqnarray}}
\newcommand{\eq}{\end{eqnarray}}


\newcommand{\Eulerconstant}{\gamma_{\mathrm{E}}}

\newcommand{\Dint}{D_{\mathrm{int}}}

\newcommand{\eps}{\varepsilon}

\newcommand{\loopnumber}{l}

\newcommand{\nedges}{N}

\newcommand{\nexternal}{n_{\mathrm{ext}}}

\newcommand{\nexternalindependent}{e}

\newcommand{\NB}{N_B}

\newcommand{\NF}{N_F}

\newcommand{\NL}{N_L}

\newcommand{\NE}{N_E}

\newcommand{\NV}{n}

\newcommand{\ND}{N_D}

\newcommand{\laportaorder}{(a,w,o,|\mu|,\dots)}


\newcommand{\arbitraryscale}{M}


\newcommand{\Divisor}{P}
\newcommand{\divisor}{p}

\newcommand{\Baikovvariable}{\sigma}

\newcommand{\preall}{C_{\eps}}
\newcommand{\preabs}{C_{\mathrm{abs}}}
\newcommand{\prerel}{C_{\mathrm{rel}}}
\newcommand{\preclutch}{C_{\mathrm{clutch}}}
\newcommand{\prebaikov}{C_{\mathrm{Baikov}}}

\newcommand{\Fcomb}{F_{\mathrm{comb}}}
\newcommand{\Fgeom}{F_{\mathrm{geom}}}
\newcommand{\Fgen}{F}

\newcommand{\Hcomb}{H_{\mathrm{comb}}}
\newcommand{\Hgeom}{H_{\mathrm{geom}}}
\newcommand{\Hgen}{H}

\newcommand{\hcomb}{h_{\mathrm{comb}}}
\newcommand{\hgeom}{h_{\mathrm{geom}}}

\newcommand{\differentialform}{\Psi}

\newcommand{\Acomb}{\Omega_{\mathrm{comb}}}
\newcommand{\Ageom}{\Omega_{\mathrm{geom}}}
\newcommand{\Agen}{\Omega}

\newcommand{\absmu}{|\mu|}


\newcounter{algocounter}

\theoremstyle{plain}
\newtheorem{theoremcounter}{}[]

\newtheorem{examplecounter}{}[]


\newtheorem{theorem}[theoremcounter]{Theorem}
\newtheorem{corollary}[theoremcounter]{Corollary}

\newtheorem{lemma}[theoremcounter]{Lemma}

\newtheorem{myexample}[examplecounter]{Example}

\newboolean{longversion}
\setboolean{longversion}{false}


\begin{document}

\thispagestyle{empty}



\begin{center}
  {\Large\bf New algorithms for Feynman integral reduction and $\varepsilon$-factorised differential equations \\
  }
  \vspace{1cm}
  {\large The $\varepsilon$-collaboration: 
          Iris Bree${}^{a}$,
          Federico Gasparotto${}^{b}$,
          Antonela Matija\v{s}i\'c${}^{a}$,
          Pouria Mazloumi${}^{a}$,
          Dmytro Melnichenko${}^{a}$,
          Sebastian~P\"ogel${}^{c}$, 
          Toni~Teschke${}^{a}$, 
          Xing~Wang${}^{d}$,
          Stefan~Weinzierl${}^{a}$,
          Konglong~Wu${}^{e}$ and
          Xiaofeng~Xu${}^{f}$
\\
  \vspace{1cm}
      {\small \em ${}^{a}$ PRISMA Cluster of Excellence, Institut f{\"u}r Physik, Staudinger Weg 7,} \\
      {\small \em Johannes Gutenberg-Universit{\"a}t Mainz, D-55099 Mainz, Germany}\\
  \vspace{2mm}
      {\small \em ${}^{b}$ Bethe Center for Theoretical Physics, Universität Bonn, D-53115 Bonn, Germany} \\
  \vspace{2mm}
      {\small \em ${}^{c}$ Paul Scherrer Institut, CH-5232 Villigen PSI, Switzerland} \\
  \vspace{2mm}
      {\small \em ${}^{d}$ School of Science and Engineering,} \\
      {\small \em The Chinese University of Hong Kong, Shenzhen, 518172 Guangdong, China} \\
  \vspace{2mm}
      {\small \em ${}^{e}$ School of Physics and Technology,} \\
      {\small \em Wuhan University, No. 299 Bayi Road, Wuhan 430072, China} \\
  \vspace{2mm}
      {\small \em ${}^{f}$ Department of Physics, Xiamen University, Xiamen, 361005, China} 
  } 
\end{center}

\vspace{1.0cm}

\begin{abstract}\noindent
  {
In this paper, we give a detailed account of the algorithm outlined in \cite{e-collaboration:2025frv}
for Feynman integral reduction and $\varepsilon$-factorised differential equations.
The algorithm consists of two steps.
In the first step, we use a new geometric order relation in the integration-by-parts reduction to obtain a basis of master integrals,
whose differential equations on the maximal cut are of a Laurent polynomial form in the regularisation parameter $\varepsilon$ and compatible with a filtration.
This step works entirely with rational functions.
In a second step, we provide a method to $\varepsilon$-factorise the aforementioned Laurent differential equations.
The second step may introduce algebraic and transcendental functions.
We illustrate the versatility of the algorithm by applying it to different examples with a wide range of complexity.   
}
\end{abstract}

\vspace*{\fill}

\newpage

\tableofcontents

\newpage

\section{Introduction}
\label{sect:intro}
Feynman integrals play a pivotal role in  Quantum Field Theory (QFT). 
Advancements in the experimental programme at the Large Hadron Collider require constant progress in pushing the frontier of perturbative calculations further.  Besides collider phenomenology, Feynman integrals nowadays play an important role in the context of gravitational wave physics, due to the success of QFT-based methods in the computation of classical corrections in general relativity.
In both scenarios, the precision requirements of state-of-the-art computations necessitate the evaluation of Feynman integrals of increasing complexity. 
Current research focuses therefore on advancing the techniques for these calculations \cite{DHoker:2023khh,Marzucca:2023gto,delaCruz:2024xit,Baune:2024biq,Baune:2024ber,Jockers:2024uan,Gehrmann:2024tds,Pogel:2024sdi,Duhr:2024xsy,Gasparotto:2024bku,Duhr:2024uid,DHoker:2025szl,DHoker:2025dhv,Duhr:2025ppd,Duhr:2025tdf,Becchetti:2025oyb,Duhr:2025lbz,Chaubey:2025adn}.
At the level of the integrals, the complexity of Feynman integral computations is represented in the number of loops and kinematic invariants involved in the process.

Perturbative calculations are carried out with the help of computer algebra.
The available computing resources dictate which higher-order calculations can be done.
Almost all perturbative higher-order calculations rely on a few basic algorithms for Feynman integral reduction and the computation of master integrals.
These algorithms require significant computing resources in the form of memory and CPU time and are therefore often the bottleneck.
Improving the efficiency of these basic algorithms will have a direct impact on the higher-order calculations which can be done.

Feynman integral reduction is one of the basic algorithms.
Integration-by-parts identities~\cite{Tkachov:1981wb,Chetyrkin:1981qh} allow us to express a Feynman integral from a (large) set of Feynman integrals as a linear
combination of Feynman integrals from a smaller set. The Feynman integrals in the smaller set are called master integrals
and we may think of the master integrals as a basis of an (abstract) vector space.
The linear system of integration-by-parts identities is usually solved with the Laporta algorithm \cite{Laporta:2000dsw}. 
The Laporta algorithm is based on Gau{\ss} elimination in combination with an order relation.
The order relation defines which of two Feynman integrals is considered more complicated.
In each linear equation, the most complicated Feynman integral is eliminated in favour of simpler ones.
Up to now, standard Feynman integral reduction programs \cite{vonManteuffel:2012np,Smirnov:2014hma,Maierhoefer:2017hyi,Wu:2023upw,Guan:2024byi}
order Feynman integrals by sectors, but within a sector only an ad-hoc order is used, usually based on the number of dots and the number of irreducible scalar products.
While the number of master integrals is independent of the chosen order relation, the size of the reduction tables may depend on the choice.
In this paper, we propose an order relation derived from mathematics, which orders the Feynman integrals within sectors
according to their geometric properties.
Our pilot studies show that this order relation performs significantly better.
This can be traced back to the observation that 
this order relation eliminates to a large extent the occurence of spurious polynomials in the denominator.

A second basic algorithm is the computation of master integrals 
through the method of differential equations~\cite{Kotikov:1990kg,Kotikov:1991pm,Remiddi:1997ny,Gehrmann:1999as,Henn:2013pwa}.
For the master integrals, one derives (again by using integration-by-parts identities)
differential equations in the external invariants or internal masses.
In general, the right-hand side of this differential equation will involve rational functions 
in the kinematic variables $x$ and in the dimensional regularisation parameter $\eps$.
A differential equation is said to be in $\eps$-factorised form if the only explicit occurrence of the
dimensional regularisation parameter is a simple overall prefactor on the right-hand side of the differential equation \cite{Henn:2013pwa}.
An $\eps$-factorised differential equation has the advantage that a solution in terms of iterated integrals \cite{Chen} is straightforward,
provided appropriate boundary values are known.
The required boundary values are usually not the bottleneck,
in fact, they can be recursively reduced to single-mass vacuum integrals \cite{Liu:2022chg,Liu:2017jxz,Liu:2022mfb}.
The application of the above-mentioned $\eps$-factorised form was successfully employed to compute Feynman integrals 
evaluating to polylogarithmic functions depending on many kinematic scales, see e.g.~ref.~\cite{Henn:2025xrc}. 
Besides this, there have been numerous contributions investigating the role of $\eps$-factorisation in the case of Feynman integrals 
whose underlying geometry is non-trivial, e.g.~elliptic curve(s)~\cite{Adams:2018yfj,Honemann:2018mrb,Bogner:2019lfa,Muller:2022gec,Giroux:2022wav,Dlapa:2022wdu,Gorges:2023zgv,Delto:2023kqv,Jiang:2023jmk, Ahmed:2024tsg,Giroux:2024yxu,Duhr:2024bzt,Schwanemann:2024kbg,Marzucca:2025eak,Becchetti:2025oyb,Becchetti:2025rrz, Ahmed:2025osb, Chen:2025hzq,Coro:2025vgn}, Calabi--Yau manifolds~\cite{Pogel:2022vat,Pogel:2022yat,Pogel:2022ken,Duhr:2022dxb,Forner:2024ojj,Frellesvig:2024rea,Duhr:2025lbz,Maggio:2025jel,Duhr:2025kkq,Pogel:2025bca} or higher genus surfaces~\cite{Duhr:2024uid}. 
One of the distinct features of these cases is that the transformation to the $\eps-$factorised form is not restricted to algebraic functions (contrary to the polylogarithmic case). 
Going beyond case-by-case studies of specific Feynman integrals, 
there are a few methods which allow the construction of an $\eps$-factorised differential equation under specific circumstances,
for example, by restricting to the case of multiple polylogarithms \cite{Moser:1959,Lee:2014ioa,Lee:2017oca,Prausa:2017ltv,Gituliar:2017vzm,Meyer:2017joq},
by requiring a guess of a good initial basis \cite{Gorges:2023zgv}
or by requiring advance knowledge of the alphabet \cite{Dlapa:2022wdu}.
A key observation of \cite{e-collaboration:2025frv} is that the geometric order relation in the Laporta algorithm leads to a differential equation,
whose right-hand-side is on the maximal cut a Laurent polynomial in the dimensional regularisation parameter $\eps$ and for each entry of the connection matrix
the occurring powers of $\eps$ are compatible with a filtration $F^{\bullet}$.
We call such a differential equation an $F^{\bullet}$-compatible differential equation.
In this paper, we give a constructive proof that we may always transform an $F^{\bullet}$-compatible differential equation to an $\eps$-factorised form.

The improvements of this paper will be beneficial both to analytical approaches for the computation of Feynman integrals~\cite{Chicherin:2020oor,Chicherin:2021dyp,Becchetti:2025rrz,Coro:2025vgn} 
as well as to (semi-) numerical approaches~\cite{Hidding:2020ytt,Liu:2022chg,Liu:2017jxz,Liu:2022mfb,Armadillo:2022ugh,Prisco:2025wqs,PetitRosas:2025xhm}.
In addition, a general algorithm for an $\eps$-factorised differential equation offers insight into the underlying mathematical structures, 
like the singularity structure and the function space.
It also provides a playground for some mathematical conjectures~\cite{Drummond:2017ssj,Chicherin:2020umh,Pokraka:2025ali,Kristensson:2021ani,Wilhelm:2022wow}. 

In this paper, we give a detailed description of the algorithm outlined in ref.~\cite{e-collaboration:2025frv}.
Introductions to the main concepts of the method can also be found in refs.~\cite{e-collaboration:2026aup,e-collaboration:2026fbo}.
The procedure consists of two steps. 
The first step consists of Laporta reduction with a new order relation, which orders the Feynman integrals within a sector by their geometric properties.
As we do not want to rely on properties of specific geometries like elliptic curves or K3-surfaces,
we need one more layer of abstraction, which treats all these geometries in one single framework.
Twisted cohomology \cite{Yoshida:book,Aomoto:book} and ideas from classical Hodge theory \cite{Deligne:1970,Deligne:1971,Deligne:1974,Carlson,Voisin_book}
provide this framework.
The integrands of Feynman integrals can be viewed as twisted cohomology classes \cite{Mastrolia:2018uzb,Frellesvig:2019uqt}.
We study the space of differential forms associated with a given family of Feynman integrals in the Baikov representation.
For the order relation, we essentially count the number of poles and the number of consecutive non-zero residues.
We observe that this order relation always gives a basis of master integrals with an $F^{\bullet}$-compatible differential equation.
At present, we cannot prove this observation in a strict mathematical sense.
However, we do not know a counter-example either.
 
In the second step, we show how to systematically remove the unwanted terms in $\eps$ through the solution of differential constraints. 
For the second step, we give a rigorous mathematical proof.
These constraints are obtained by decomposing the Laurent polynomial differential equations into different orders of $\eps$. 
For each order, we construct a rotation matrix such that in the rotated differential equations, this order vanishes. 
The elements of these rotation matrices are determined by the solution of the aforementioned differential constraints, which are simpler than the original linearly coupled differential system.

The two steps are notably independent of prior knowledge of the underlying geometry of the Feynman integral.
We emphasise that the first step only involves rational functions, while the second step may involve algebraic and transcendental functions.
This suggests that an $F^{\bullet}$-compatible differential equation is the best that can be achieved by restricting to rational transformations.

This paper is organised as follows: 
In section~\ref{sect:definitions}, we introduce the notation used throughout the paper, as well as define relevant concepts. 
In section~\ref{sect:set_up}, we construct the mathematical setup of our methodology. 
In particular, we define pole order and residues associated with a differential form. 
We use these two objects to define filtrations of differential forms. 
Furthermore, we establish the underlying geometry corresponding to Feynman integrals as well as the sub-geometries associated with different localisations in the twist. 
In section~\ref{sect:method}, we present our algorithm. 
We first outline the steps of the algorithm that lead to a set of candidate master integrals whose differential equation is a Laurent polynomial in $\eps$. 
Thereafter, we explain the method to systematically $\eps-$factorise the Laurent series differential equations via solutions of a set of differential constraints. 
In section~\ref{sect:examples}, we showcase our method in different examples. 
These examples include a wide range of possible cases, which underlines the versatility of our algorithm. 
Finally, section~\ref{sect:conclusions} contains our conclusions and an outline towards possible future directions. 

\newpage 

\section{Notation, definitions and review}
\label{sect:definitions}


\subsection{Summary of the notation}

For the convenience of the reader, we summarise in this paragraph the notation used throughout this paper.
\begin{center}
\begin{tabular}{lll}
$\NF = N_{\mathrm{Fibre}}$: & Number of master integrals, & \\
 & master integrals are denoted by & $I = (I_1, ..., I_{\NF})^T$. \\
 & & \\
$\NB = N_{\mathrm{Base}}$: & Number of kinematic variables, & \\
 & kinematic variables are denoted by & $x=(x_1, ..., x_{\NB})^T$. \\ 
 & \\
$\NE = N_{\mathrm{Edges}}$: & Number of Baikov variables, & \\
 & Baikov variables are denoted by & $\Baikovvariable=(\Baikovvariable_1, ..., \Baikovvariable_{\NE})^T$. \\ 
 & \\
$\NV = N_{\mathrm{Variables}}$: & Number of Baikov variables on a maximal cut, & \\
 & Baikov variables on a maximal cut are denoted by & $z=(z_1, ..., z_{\NV})^T$. \\ 
 & \\
$\ND = N_{\mathrm{Divisors}}$: & Number of divisors, & \\
 & Divisors are denoted by & $\Divisor=(\Divisor_1, ..., \Divisor_{\ND})^T$. \\ 
 & \\
$\NL = N_{\mathrm{Letters}}$: & Number of letters, & \\
 & differential one-forms are denoted by & $ \omega=(\omega_1, ..., \omega_{\NL})^T$. \\
\end{tabular}
\end{center}
By default, all vectors are regarded as column vectors, hence the explicit use of the transpose.
We may view the kinematic space as the projective space ${\mathbb C}{\mathbb P}^{\NB}$.
Homogeneous coordinates are denoted by $[x_0:x_1:\dots:x_{\NB}]$, 
and $x=(x_1, ..., x_{\NB})^T$ are coordinates in the chart $x_0=1$.
In a similar way, we view the Baikov space on a maximal cut as the projective space ${\mathbb C}{\mathbb P}^{\NV}$.
Homogeneous coordinates are denoted by $[z_0:z_1:\dots:z_{\NV}]$, 
and $z=(z_1, ..., z_{\NV})^T$ are coordinates in the chart $z_0=1$.
The differential with respect to the kinematic variables is denoted by
\bq
\label{def_d_B}
 d_B & = & \sum\limits_{j=0}^{\NB} dx_j \frac{\partial}{\partial x_j}.
\eq
The differential with respect to the Baikov variables on a maximal cut is denoted by
\bq
\label{def_d_F}
 d_F & = & \sum\limits_{j=0}^{\NV} dz_j \frac{\partial}{\partial z_j}.
\eq
For monomials in several variables $z_0,z_1,\dots,z_{\NV}$ we use the following notation:
\bq
 {\bf z}^{\bf \kappa}
 & = &
 \prod\limits_{j=0}^{\NV} z_j^{\kappa_j}.
\eq
We denote polynomials in the chart $z_0=1$ of ${\mathbb C}{\mathbb P}^{\NV}$ by small letters, e.g. $p(z_1,\dots,z_{\NV})$.
We denote the corresponding homogeneous polynomials on ${\mathbb C}{\mathbb P}^{\NV}$ 
(defined as the $d$-homogenisation of $p$, where $d=\deg p$)
by capital letters, e.g. $P(z_0,z_1,\dots,z_{\NV})$.

An $\eps$-factorised differential equation is of the form
\bq
 d_B K & = & \eps A K,
\eq
where $K$ is a vector of $\NF$ master integrals and the $(\NF \times \NF)$-matrix $A$ is independent of $\eps$.
We write
\bq
 A & = &
 \sum\limits_{j=1}^{\NL} C_j \omega_j,
\eq
where the $C_j$'s are $(\NF \times \NF)$-matrices, whose entries are (rational) numbers and the $\omega_j$'s are differential one-forms, called the letters. 


\subsection{Feynman integrals}
\label{sect:feynman_integrals}

Let $G$ be a Feynman graph with $\loopnumber$ loops, $\nexternal$ legs and $\NE$ propagators.
We will consider scalar Feynman integrals associated to this graph.
We denote by $q_j$ the momentum flowing through edge $e_j$ (with respect to a chosen orientation of the edge),
by $m_j$ the mass of the particle propagating through edge $e_j$, and by $\nu_j \in {\mathbb Z}$ the power to which
this propagator occurs.
We set
\bq
 \Baikovvariable_j & = & -q_j^2 + m_j^2.
\eq
The $\Baikovvariable_j$'s are the inverse propagators.
Although we consider in this paper standard Feynman propagators with a quadratic dependence on the loop momenta, the extension 
towards propagators with a linear dependence on the loop momenta is straightforward.
The latter occur for example in the eikonal approximation.

Without loss of generality, we assume that the graph $G$ has a Baikov representation.
This means that 
\bq
 \NE & = &
 \frac{1}{2} \loopnumber \left(\loopnumber+1\right) + \nexternalindependent \loopnumber,
\eq
where $\nexternalindependent$ denotes the number of independent external momenta
and any scalar product involving a loop momentum can be written as a linear combination of the inverse propagators and a constant term.

We are interested in the family of Feynman integrals
\bq
\label{def_feynman_integral}
 I_{\nu_1 \dots \nu_{\NE}}\left(\Dint, \eps, x \right)
 & = &
 e^{\loopnumber \eps \Eulerconstant} \left(\arbitraryscale^2\right)^{\nu-\frac{\loopnumber D}{2}}
 \int \prod\limits_{r=1}^{\loopnumber} \frac{d^Dk_r}{i \pi^{\frac{D}{2}}} 
 \prod\limits_{j=1}^{\NE} \Baikovvariable_j^{-\nu_j},
\eq
where $D=\Dint-2\eps$ (with $\Dint \in {\mathbb Z}$ and $\eps \in {\mathbb C}$) 
denotes the number of space-time dimensions within dimensional regularisation,
$\eps$ is the dimensional regularisation parameter,
$\gamma_E$ denotes the Euler-Mascheroni constant, 
$\arbitraryscale$ is an arbitrary scale introduced to render the Feynman integral dimensionless.
Setting $\arbitraryscale^2$ equal to a Lorentz invariant or to a mass squared will turn the Feynman integral into a (in general multi-valued) function defined on
projective space ${\mathbb C}{\mathbb P}^{\NB}$.
We will always assume that such a choice is made.
Here, $\NB$ denotes the number of (dimensionless) kinematic variables. 
Thus we may view $I_{\nu_1 \dots \nu_{\NE}}$ as a multi-valued function
\bq
 I_{\nu_1 \dots \nu_{\NE}} & : & {\mathbb Z} \times {\mathbb C} \times {\mathbb C}{\mathbb P}^{\NB} \rightarrow {\mathbb C},
 \nonumber \\
 & & \left(\Dint, \eps, x \right) \mapsto {\mathbb C}.
\eq
The quantity $\nu$ is defined by
\bq
 \nu & = &
 \sum\limits_{j=1}^{\NE} \nu_j.
\eq
The kinematic variables are denoted by $x=(x_1,\dots,x_{\NB})$
and correspond to affine coordinates of the chart $x_0=1$ of ${\mathbb C}{\mathbb P}^{\NB}$.
To clarify our conventions we consider the example of massless $(2 \rightarrow 2)$-scattering.
In this case we have $\NB=1$ and we may choose $x_1=s/t$ as dimensionless kinematic variable.
We may view $x_1$ as the affine coordinate of the chart $x_0=1$ of ${\mathbb C}{\mathbb P}^{1}$.

For Minkowski spacetime, we have $\Dint=4$.
We stress that the algorithms we present works for any $\Dint \in {\mathbb Z}$.
As the value of $\Dint$ is often clear from the context, 
we will suppress in the following the dependence on $\Dint$ and simply write
$I_{\nu_1 \dots \nu_{\NE}}(\eps, x)$ for a Feynman integral.
For each integral $I_{\nu_1 \dots \nu_{\NE}}(\eps, x)$
we define 
\bq
 S & = & \left\{ \; j \; | \; \nu_j > 0 \; \right\}.
\eq  
and the sector id of this integral by
\bq
 N_\mathrm{id}
 & = & \sum\limits_{j\in S} 2^{j-1}
 \; = \; 
 \sum\limits_{j=1}^{\NE} 2^{j-1} \Theta\left(\nu_j\right).
\eq
Here, $\Theta(x)$ denotes the Heaviside step function, defined by $\Theta(x)=1$ for $x>0$ and $\Theta(x)=0$ otherwise.
We further define for each integral $I_{\nu_1 \dots \nu_{\NE}}$ the integers
\begin{align}
 \nedges 
 & = \sum\limits_{j=1}^{\NE} \Theta\left(\nu_j\right),
 &
 r 
 & = \sum\limits_{j=1}^{\NE} \nu_j \Theta\left(\nu_j\right),
 &
 s
 & = - \sum\limits_{j=1}^{\NE} \nu_j \Theta\left(-\nu_j\right).
\end{align}
The quantity $ \nedges $ denotes the number of propagators, the quantity $(r- \nedges )$ denotes the number of dots and
$s$ denotes the number of irreducible scalar products in the numerator.
It is well-known that the members of a family of Feynman integrals decompose into sectors.
A sector is uniquely specified either by the set $S$ or the sector id.
A sector specified by the set $S_1$ is called a sub-sector of a sector $S_2$, if
\bq
 S_1 & \subsetneq & S_2.
\eq
We will often work on the maximal cut of a sector $S$. 
This means that we work modulo sub-sectors.
In other words, on the maximal cut, we set all sub-sectors to zero.


\subsection{Integration-by-parts and the Laporta algorithm}

Integration-by-parts identities \cite{Tkachov:1981wb,Chetyrkin:1981qh}
are based on the fact that, within dimensional regularisation, the integral
of a total derivative vanishes
\bq
\label{basic_ibp_relation}
 e^{\loopnumber \eps \Eulerconstant} \left(\arbitraryscale^2\right)^{\nu-\frac{\loopnumber D}{2}}
 \int 
 \prod\limits_{r=1}^{\loopnumber} \frac{d^Dk_r}{i \pi^{\frac{D}{2}}}
 \;\;
 \frac{\partial}{\partial k_i^\mu} \left( q_{\mathrm{IBP}}^\mu
 \;\;
 \prod\limits_{j=1}^{\NE} \Baikovvariable_j^{-\nu_j} \right)
 & = & 0,
\eq
i.e. there are no boundary terms in the loop momentum representation. 
The vector $q_{\mathrm{IBP}}$ can be any linear combination of the external momenta and the loop momenta.
Working out the derivatives leads to linear relations among integrals
with different sets of indices $(\nu_1, \dots, \nu_{\NE})$.
The Laporta algorithm \cite{Laporta:2000dsw}
introduces an order relation among the members of a family of Feynman integrals and eliminates
in each linear relation the most complicated Feynman integral.
Typical order criteria are the tuples
\bq
\label{chapter_iterated_integrals:isp_basis}
 \left( \nedges, N_{\mathrm{id}}, r, s, \dots \right),
 & \mbox{or} &
 \left( \nedges, N_{\mathrm{id}}, s, r, \dots \right),
\eq
together with the lexicographical order.
The dots stand for further criteria needed to distinguish inequivalent integrals.
The order criteria in eq.~(\ref{chapter_iterated_integrals:isp_basis}) sort Feynman integrals by sectors, but within a sector, the criteria are ad-hoc.

Integration-by-parts identities allow us to express any member from the family $I_{\nu_1 \dots \nu_{\NE}}$ as a 
finite linear combination of master integrals.
We denote the master integrals by $I=(I_1, \dots, I_{\NF})$
with
\bq
 I_i\left(\eps, x\right) & = & I_{\nu_{i 1} \dots \nu_{i \NE} }\left(\eps, x\right), 
 \;\;\;\;\;\; \nu_{i j} \; \in \; {\mathbb Z}.
\eq
It is important to note that the order relation defines which Feynman integrals are chosen as master integrals, i.e. different choices for the order relation lead to different sets of master integrals.


\subsection{The method of differential equations}

The method of differential equations \cite{Kotikov:1990kg,Kotikov:1991pm,Remiddi:1997ny,Gehrmann:1999as,Henn:2013pwa} is a standard tool for the computation of Feynman integrals.
The master integrals $I$ satisfy a first-order system of differential equations
\bq
\label{differential_equation}
 d_B I\left(\eps, x\right)
 & = &
 A\left(\eps,x\right) I\left(\eps, x\right),
\eq
where $A(\eps,x)$ is a $\NF \times \NF$-matrix, whose entries are differential one-forms, rational
in $\eps$ and $x$.
The differential $d_B$ is defined in eq.~(\ref{def_d_B}) and denotes the differential with respect to the kinematic variables. 
The differential equation is obtained by carrying out the derivative under the integral sign and by using integration-by-parts
identities to reexpress the resulting expression in terms of master integrals.

We may change the basis of master integrals by a rotation
\bq
 I\left(\eps, x\right) & = & R\left(\eps, x\right) \tilde{I}\left(\eps, x\right).
\eq
In the new basis, we have again a differential equation of the form
\bq
 d_B \tilde{I}\left(\eps, x\right)
 & = &
 \tilde{A}\left(\eps,x\right) \tilde{I}\left(\eps, x\right),
\eq
where $\tilde{A}(\eps,x)$ is related to $A(\eps,x)$ by
\bq
 \tilde{A}\left(\eps,x\right)
 & = &
 R^{-1}\left(\eps, x\right) A\left(\eps,x\right) R\left(\eps, x\right) - R^{-1}\left(\eps, x\right) d_B R\left(\eps, x\right).
\eq
If $R(\eps, x)$ is rational in $\eps$ and $x$, then $\tilde{A}(\eps,x)$ is rational in $\eps$ and $x$ as well.


\subsection{The Baikov representation}

Feynman integrals have a Baikov representation \cite{Baikov:1996iu}.
We may either use the democratic Baikov representation or a loop-by-loop Baikov representation \cite{Frellesvig:2017aai}.
We require the Baikov representation to be a ``good'' representation (to be defined below).
It will be advantageous to work with the simplest good representation.
This is usually a loop-by-loop Baikov representation.

\subsubsection{The democratic Baikov representation}

Let us start with the democratic Baikov representation. 
The democratic Baikov representation is, from a notational perspective, the simplest (there is only one polynomial), but it is not the simplest from a computational point of view.
We assume that we have $\NE$ Baikov variables and $\nedges$ internal edges.
Without loss of generality, we may assume that the first $\nedges$ Baikov variables correspond to the internal edges.
The integral measure can be written as
\bq
 e^{\loopnumber \eps \Eulerconstant} 
 \left(\arbitraryscale^2\right)^{-\frac{\loopnumber D}{2}}
 \prod\limits_{r=1}^{\loopnumber} \frac{d^Dk_r}{i \pi^{\frac{D}{2}}}
 & = &
 C'
 \; b\left(\Baikovvariable_1,...,\Baikovvariable_{\NE}\right)^{\frac{D-\loopnumber-{\nexternalindependent}-1}{2}} 
 \; 
 \frac{d^{\NE}\Baikovvariable}{\left(2\pi i\right)^{\NE}},
\eq
where $C'$ is an irrelevant prefactor,
$\nexternalindependent$ is the number of independent external momenta
and $b\left(\sigma\right)$ the Baikov polynomial.
The Baikov polynomial is given as the Gram determinant of the independent loop momenta and the independent external momenta.
Let ${\mathcal C}_{\mathrm{maxcut}}$ be a contour which corresponds for the first $\nedges$ variables to small circles
around $\sigma_i=0$ (for $1 \le i \le \nedges$).
This amounts to taking the maximal cut.
On the maximal cut, we have
\bq
 e^{\loopnumber \eps \Eulerconstant} 
 \left(\arbitraryscale^2\right)^{-\frac{\loopnumber D}{2}}
 \int\limits_{{\mathcal C}_{\mathrm{maxcut}}}
 \prod\limits_{r=1}^{\loopnumber} \frac{d^Dk_r}{i \pi^{\frac{D}{2}}}
 \frac{\left(\arbitraryscale^2\right)^{\nedges}}{\prod\limits_{j=1}^{\nedges} \sigma_j}
 =
 C''
 \int
 \frac{d^{\NE-\nedges}\Baikovvariable}{\left(2\pi i\right)^{\NE-\nedges}}
 \;
 b_{\mathrm{maxcut}}\left(\Baikovvariable_{\nedges+1},...,\Baikovvariable_{\NE}\right)^{\alpha},
\eq
with
\bq
 b_{\mathrm{maxcut}}\left(\Baikovvariable_{\nedges+1},...,\Baikovvariable_{\NE}\right)
 & = &
 b\left(0,...,0,\Baikovvariable_{\nedges+1},...,\Baikovvariable_{\NE}\right)
\eq
and
\bq
 C'' \; = \; \left(\arbitraryscale^2\right)^{\nedges} C',
 & &
 \alpha \; = \; \frac{D-\loopnumber-{\nexternalindependent}-1}{2}.
\eq
Note that by discussing the maximal cut, we already performed $\nedges$ integrations with the help of the residue theorem.
We introduce dimensionless variables $(z_1,\dots,z_{\NE-\nedges})$
for the remaining Baikov variables, i.e. $z_j=\sigma_{\nedges+j}/\arbitraryscale^2$.

The exponent of the Baikov polynomial $b_{\mathrm{maxcut}}$ is of the form
\bq
\label{def_form_exponent}
 \alpha \; = \; 
 \frac{1}{2} \left( a + b \eps \right),
 & \mbox{with} &
 a,b \; \in \; {\mathbb Z}.
\eq
In the democratic Baikov representation, we always have $b=-2$ for $D=\Dint-2\eps$.

\subsubsection{The loop-by-loop Baikov representation}

Let us now turn to the loop-by-loop Baikov representation \cite{Frellesvig:2017aai}.
We set $\NV$ to be the number of Baikov variables on the maximal cut.
We denote these (dimensionless) Baikov variables by $z=(z_1,\dots,z_{\NV})$.
We have $\NV \le \NE - \nedges$.
On the maximal cut, we have
\bq
\label{loop_by_loop_Baikov_measure}
 e^{\loopnumber \eps \Eulerconstant} 
 \left(\arbitraryscale^2\right)^{-\frac{\loopnumber D}{2}}
 \int\limits_{{\mathcal C}_{\mathrm{maxcut}}}
 \prod\limits_{r=1}^{\loopnumber} \frac{d^Dk_r}{i \pi^{\frac{D}{2}}}
 \frac{\left(\arbitraryscale^2\right)^{\nedges}}{\prod\limits_{j=1}^{\nedges} \sigma_j}
 & = &
 C
 \int
 \frac{d^{\NV}z}{\left(2\pi i\right)^{\NV}} \; \prod\limits_{i \in I_{\mathrm{all}}} \left[ \divisor_i\left(z\right) \right]^{\alpha_i},
\eq
where the $\divisor_i$'s are irreducible polynomials in the $\NV$ remaining Baikov variables.
$I_{\mathrm{all}}$ is a finite index set.

The exponents $\alpha_i$ in eq.~(\ref{loop_by_loop_Baikov_measure})
are of the form as in eq.~(\ref{def_form_exponent}):
\bq
\label{def_form_exponent_loop_by_loop}
 \alpha_i \; = \; 
 \frac{1}{2} \left( a_i + b_i \eps \right),
 & \mbox{with} &
 a_i,b_i \; \in \; {\mathbb Z}.
\eq
We define $I_{\mathrm{odd}}$ as the set of indices for which $a_i$ is odd
and $I_{\mathrm{even}}$ as the set of indices for which $a_i$ is even.
Clearly, we have
\bq
 I_{\mathrm{odd}} \cup I_{\mathrm{even}} \; = \; I_{\mathrm{all}},
 & &
 I_{\mathrm{odd}} \cap I_{\mathrm{even}} \; = \; \emptyset.
\eq
We are in particular interested in the ``minimal'' case, where $a_i \in \{-1,0\}$ for all $i$. 
Let $r(\sigma)$ be an $\eps$-independent function, 
rational in $\sigma$ and of mass dimension zero, leading to a minimal Baikov representation
\bq
\label{Baikov_representation}
\lefteqn{
 e^{\loopnumber \eps \Eulerconstant} 
 \left(\arbitraryscale^2\right)^{-\frac{\loopnumber D}{2}}
 \int\limits_{{\mathcal C}_{\mathrm{maxcut}}} \prod\limits_{r=1}^{\loopnumber} \frac{d^Dk_r}{i \pi^{\frac{D}{2}}} 
 r\left(\sigma\right)
 = 
} & &
 \\
 & &
 \hspace*{30mm}
 \prebaikov
 \int \frac{d^{\NV}z}{\left(2\pi i \right)^{\NV}} \;
 \prod\limits_{i \in I_{\mathrm{odd}}} \left[ \divisor_i\left(z\right) \right]^{-\frac{1}{2} + \frac{1}{2} b_i \eps}
 \prod\limits_{i \in I_{\mathrm{even}}} \left[ \divisor_i\left(z\right) \right]^{\frac{1}{2} b_i \eps}.
 \nonumber
\eq
The prefactor $\prebaikov$ is defined by this equation.

\subsubsection{Good Baikov representations}
\label{sect:good_Baikov_representation}

We call a Baikov representation a ``good'' Baikov representation for a particular sector,
if all singularities are regulated by the twist.
This implies that all exponents $\alpha_i$ in eq.~(\ref{def_form_exponent})
or eq.~(\ref{def_form_exponent_loop_by_loop})
satisfy $\alpha_i \notin {\mathbb Z}$.
This ensures that on the maximal cut, we may always work with twisted cohomology and are not forced to use relative
twisted cohomology.
We note that the democratic Baikov representation is always a good representation; however, not every loop-by-loop representation is good.
\begin{myexample}
A counter-example is given by a two-loop four-point integral.
We consider sector $93$ for the inverse propagators defined by
\begin{align}
 \sigma_1 & = -\left(k_1-p_1\right)^2 +m^2,
 &
 \sigma_2 & = -\left(k_1-p_{12}\right)^2,
 &
 \sigma_3 & = -k_1^2 + m^2,
 \nonumber \\
 \sigma_4 & = -\left(k_1+k_2\right)^2,
 &
 \sigma_5 & = -\left(k_2+p_{12}\right)^2 + m^2,
 &
 \sigma_6 & = -k_2^2,
 \nonumber \\
 \sigma_7 & = -\left(k_2+p_{123}\right)^2 + m^2,
 & 
 \sigma_8 & = -\left(k_1-p_{13}\right)^2,
 &
 \sigma_9 & = -\left(k_2+p_{13}\right)^2.
\end{align}
The external particles are massless $p_1^2=p_2^2=p_3^2=p_4^2=0$.
The Mandelstam variables are denoted by
\begin{align}
 s & = \left(p_1+p_2\right)^2,
 &
 t & = \left(p_2+p_3\right)^2,
\end{align}
and we use the notation $p_{ij}=p_i+p_j$, $p_{ijk}=p_i+p_j+p_k$.
The Feynman graph for sector $93$ is shown in fig.~\ref{fig:sector93} 
and corresponds to the propagator set $S=\{1,3,4,5,7\}$.
\begin{figure}
\begin{center}
\includegraphics[scale=1.0]{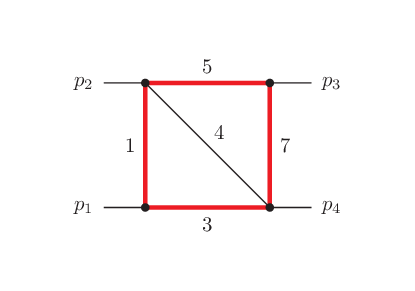}
\end{center}
\caption{
The Feynman graph for sector $93$.
Massive propagators are indicated by red lines.
}
\label{fig:sector93}
\end{figure}
We consider the loop-by-loop Baikov representation, where we take as the inner loop the one with inverse
propagators $\sigma_4, \sigma_5, \sigma_7$.
This will lead to a Baikov polynomial (or a divisor in the mathematical language)
\bq
 m^2-t-\sigma_8,
\eq
which is not regulated by the twist.
In this case, we have to rely on the democratic Baikov representation or an alternative loop-by-loop Baikov representation.
\end{myexample}
We conclude this subsection with a side remark:
If one uses a loop-by-loop Baikov representation, it is not directly evident that 
there is a basis of master integrals for this sector, such
that the integrand of each master integral can be written in this Baikov representation.
In particular one might be worried about Feynman integrals involving irreducible scalar products in the numerator, which do not appear
in the chosen loop-by-loop Baikov representation (``missing ISPs'').
One has to show that any Feynman integral involving a 
product of irreducible scalar products
$\sigma_j$ with $j \in \{\nedges+\NV+1,\dots,\NE\}$ in the numerator can be expressed in terms
of Feynman integrals involving only the Baikov variables $\sigma_j$ with $j \in \{1,\dots,\nedges+\NV\}$.
This is indeed the case, and ref.~\cite{Frellesvig:2024ymq} provides an algorithmic solution for this task.


\subsection{Projective space}
\label{sect:projective_space}

In order to capture possible singularities at infinity, we extend the affine space with coordinates $(z_1,\dots,z_{\NV})$
to projective space ${\mathbb C}{\mathbb P}^{\NV}$ with homogeneous coordinates $[z_0:z_1:\dots:z_{\NV}]$.
Let $d_i$ be the degree of $\divisor_i$ and denote by $\Divisor_i$ the $d_i$-homogenisation
\bq
 \Divisor_i\left(z_0,z_1,\dots,z_{\NV}\right)
 & = & 
 z_0^{d_i}
 \divisor_i\left(\frac{z_1}{z_0},\dots,\frac{z_{\NV}}{z_0}\right).
\eq
We further set $\Divisor_0(z_0,z_1,\dots,z_{\NV})=z_0$
and define the exponent
\bq
\label{def_alpha_0}
 \alpha_0 & = & \frac{1}{2} \left( a_0 + b_0 \eps \right)
\eq
by
\bq
\label{def_a_0_b_0}
 a_0 & = &
 \left\{\begin{array}{ll}
 0 & \mbox{if} \;\; \sum\limits_{i \in I_{\mathrm{odd}}} d_i \;\; \mbox{even}, \\
 -1 & \mbox{if} \;\; \sum\limits_{i \in I_{\mathrm{odd}}} d_i \;\; \mbox{odd}, \\
 \end{array}
 \right.
 \nonumber \\
 b_0 & = & 
 - \sum\limits_{i \in I_{\mathrm{all}}} b_i d_i.
\eq
We extend the definition of a ``good'' Baikov representation to include the requirement $\alpha_0 \neq {\mathbb Z}$
for $\NV>0$.
The case $\NV=0$ is special and discussed in detail in section~\ref{sect:localisation_on_a_point}.
In the case $\NV=0$ we have $\alpha_0=0$, as $I_{\mathrm{all}}$ is the empty set.

We can unify the notation by including the index $0$ in $I_{\mathrm{even}}$ or $I_{\mathrm{odd}}$, 
depending on $a_0$ being zero or $(-1)$, respectively.
We denote the resulting index sets by $I_{\mathrm{even}}^0$, $I_{\mathrm{odd}}^0$ and $I_{\mathrm{all}}^0$.
The definition of $a_0$ ensures that the sum of the degrees of the polynomials in $I_{\mathrm{odd}}^0$ is even.
The definition of $b_0$ follows from homogeneity requirements.
We set 
\bq
  \ND & = & \left| I_{\mathrm{all}} \right|.
\eq
\begin{myexample}
As an example, we consider a two-loop four-point integral from ref.~\cite{Muller:2022gec}.
We consider sector $79$ for the inverse propagators defined by
\begin{align}
 \sigma_1 & = -\left(k_1+p_2\right)^2 + m^2,
 &
 \sigma_2 & = -k_1^2 + m^2,
 &
 \sigma_3 & = -\left(k_1+p_1+p_2\right)^2 + m^2,
 \nonumber \\
 \sigma_4 & = -\left(k_1+k_2\right)^2 + m^2,
 &
 \sigma_5 & = -k_2^2,
 &
 \sigma_6 & = -\left(k_2+p_3+p_4\right)^2,
 \nonumber \\
 \sigma_7 & = -\left(k_2+p_3\right)^2 + m^2,
 &
 \sigma_8 & = -\left(k_1+p_2-p_3\right)^2 + m^2,
 &
 \sigma_9 & = -\left(k_2-p_2+p_3\right)^2.
\end{align}
Sector $79$ corresponds to the propagator set $S=\{1,2,3,4,7\}$.
The Feynman graph for sector $79$ is shown in fig.~\ref{fig:sector79}. 
\begin{figure}
\begin{center}
\includegraphics[scale=1.0]{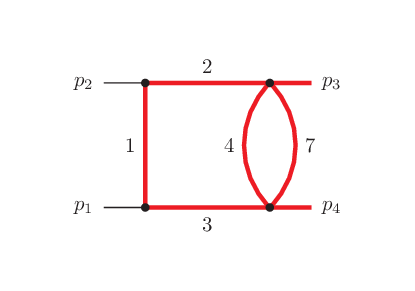}
\end{center}
\caption{
The Feynman graph for sector $79$.
Massive propagators are indicated by red lines.
}
\label{fig:sector79}
\end{figure}
We set $\arbitraryscale=m$, $x_1=s/m^2$ and $x_2=t/m^2$.
On the maximal cut we obtain, using the loop-by-loop approach, 
a minimal one-dimensional Baikov representation from an integrand with a dot on either propagator $4$ or $7$.
Eq.~(\ref{Baikov_representation}) specialises with $D=4-2\eps$ and $z_1=\sigma_8/m^2$ to
\bq
 e^{2 \eps \Eulerconstant} 
 \int\limits_{{\mathcal C}_{\mathrm{maxcut}}} \prod\limits_{r=1}^{2} \frac{d^Dk_r}{i \pi^{\frac{D}{2}}} 
 \frac{\left(m^2\right)^{2+2\eps}}{\sigma_1 \sigma_2 \sigma_3 \sigma_4^2 \sigma_7}
 & = &
 \prebaikov
 \int \frac{dz_1}{2\pi i} \;
 \left[ \divisor_1\left(z\right) \right]^{-\frac{1}{2}}
 \left[ \divisor_2\left(z\right) \right]^{-\frac{1}{2}-\eps}
 \left[ \divisor_3\left(z\right) \right]^{-\frac{1}{2}-\eps}.
 \;\;
 \nonumber
\eq
The prefactor is given by
\bq
\label{def_prebaikov_example_2}
 \prebaikov
 =  
 \frac{2^{4+4\eps} \pi^4 e^{2 \eps \Eulerconstant}}{\left[\Gamma\left(\frac{1}{2}-\eps\right)\right]^2 x_1^{1+\eps} }
 \left[\left(1-x_2\right)^2+x_1x_2\right]^\eps.
\eq
The Baikov polynomials read
\begin{align}
 p_1 & = z_1-x_2,
 &
 p_2 & = z_1+ 4-x_2,
 &
 p_3 & = \left( z_1 + 1 \right)^2 - 4 \left[ x_2 +\frac{\left(1-x_2\right)^2}{x_1} \right]. 
\end{align}
We have $I_{\mathrm{even}}=\emptyset$ and 
$I_{\mathrm{odd}}=\{1,2,3\}$.
The equation 
\bq
 y^2 = p_1\left(z_1\right) p_2\left(z_1\right) p_3\left(z_1\right)
\eq
defines an elliptic curve, where the four branch points are at finite distance.
We expect two master integrals associated with this elliptic curve. The sector $79$ has three master integrals.
To see the third master integral, we need to go to projective space and include the point at infinity.
The homogenisations of $p_1$, $p_2$ and $p_3$ are 
\begin{align}
 P_1 & = z_1-x_2z_0, 
 &
 P_2 & = z_1+(4-x_2)z_0,
 &
 P_3 & = \left( z_1 + z_0 \right)^2 - 4 \left[ x_2 +\frac{\left(1-x_2\right)^2}{x_1} \right] z_0^2.
\end{align}
We further introduce $P_0=z_0$ with $a_0=0$ and $b_0=6\eps$, according to eq.~(\ref{def_a_0_b_0}).
We have $I^0_{\mathrm{even}}=\{0\}$ and $I^0_{\mathrm{odd}}=\{1,2,3\}$.
The third master integral is associated with the residue at $z_0=0$.
\end{myexample}


\subsection{Twisted cohomology}
\label{sect:twisted_cohomology}

We define the minimal twist $U$ as the homogenisation of the twist in eq.~(\ref{Baikov_representation})
by
\bq
\label{def_twist}
 U\left(z_0,z_1,\dots,z_{\NV}\right)
 & = &
 \prod\limits_{i \in I_{\mathrm{odd}}^0} \Divisor_i^{-\frac{1}{2}+\frac{1}{2} b_i \eps}
 \prod\limits_{j \in I_{\mathrm{even}}^0} \Divisor_j^{\frac{1}{2} b_j \eps}.
\eq
One can see that $U(z)$ is homogeneous of degree
\bq
 d_U & = & - \frac{1}{2} \sum\limits_{i \in I_{\mathrm{odd}}^0} d_i.
\eq
The quantity $b_0$ has been defined in eq.~(\ref{def_a_0_b_0}) such that $d_U$ is independent of $\eps$. 

Each polynomial in the twist defines a hypersurface
\bq
 D_i \; = \; \{ \left[z_0:z_1:\dots:z_n\right] \in \mathbb{C} \mathbb{P}^{\NV}\,\, | \,\,P_i\left(z_0,z_1,\dots,z_n\right) = 0 \},
\eq
and we set $D$ to be the union of these hypersurfaces:
\bq
\label{def_divisior_D}
 D \; = \; 
 \bigcup\limits_{i \in I^0_{\mathrm{all}}} D_i.
\eq
It is customary to set
\bq 
 \omega & = & d_F \ln U\left(z_0,z_1,\dots,z_{\NV}\right),
\eq
where $d_F$ is defined in eq.~(\ref{def_d_F}) and denotes the differential
with respect to the Baikov variables $[z_0:z_1:\dots:z_\NV]$.
We further introduce the covariant derivative
\bq
\label{def_nabla_F}
 \nabla_F & = & d_F + \omega.
\eq
As we are only interested in differential forms, which depend on the holomorphic coordinates $[z_0:\dots:z_{\NV}]$, but
not on the anti-holomorphic coordinates $[\bar{z}_0:\dots:\bar{z}_{\NV}]$, we may restrict to the holomorphic derivatives 
appearing in $d_F$ and ignore the anti-holomorphic derivatives.
We are interested in the twisted cohomology group
\bq
 H^\NV_\omega,
\eq
which is defined as the set of equivalence classes of the $\nabla_F$-closed differential $(\NV,0)$-forms holomorphic
on ${\mathbb C}{\mathbb P}^\NV-D$ modulo the $\nabla_F$-exact ones.

For later purposes we also define the $(\eps=0)$-part $U_0$ of the minimal twist by
\bq
 U_0\left(z_0,z_1,\dots,z_{\NV}\right)
 & = &
 \prod\limits_{i \in I_{\mathrm{odd}}^0} \Divisor_i^{-\frac{1}{2}}.
\eq


\section{The set-up}
\label{sect:set_up}

In this section, we introduce the framework for our approach. 
Our aim is to order the Feynman integrals within a sector by their geometric properties.
We therefore focus on the maximal cut\footnote{Subtleties related to the Feynman integrals within one sector, Feynman integrals on the maximal cut and magic relations will be discussed in a separate forthcoming publication.}.
We start with the definition of the objects, which we will be studying.
Instead of integrals, we study integrands, and more precisely differential $\NV$-forms on projective space ${\mathbb C} {\mathbb P}^{\NV}$.
These will be defined in section~\ref{sect:objects}.
In section~\ref{sect:vector_spaces} we discuss the relation between the vector space of integrals and the vector space of integrands.
We are in particular interested in linear relations between the integrands.
These are discussed in section~\ref{sect:linear_relations}.
For the order relation, we count poles and residues. These concepts are introduced in section~\ref{sect:poles_and_residues}.
In section~\ref{sect:filtrations}, we define three filtrations, based on the number of residues, the pole order and the sum of the indices.
The algorithm proceeds recursively by considering all possible localisations (i.e. places where we can take a residue).
Polynomials of higher degree might introduce algebraic extensions. In section~\ref{sect:localisations} we show how this can be avoided.
We present a method which works entirely with rational functions.
In section~\ref{sect:geometry} , we discuss the specific geometries associated with the maximal cut of Feynman integrals.


\subsection{Objects}
\label{sect:objects}

For our purposes, it is slightly more convenient to include the twist in the differential forms, as the twist
affects the definition of a prefactor.
The central objects of our study are differential forms, which can be written as
\bq
\label{def_input_data}
 \differentialform_{\mu_0 \dots \mu_{\ND}}\left[Q\right]
 & = &
 \prebaikov \;
 \preall \;
 U\left(z\right) \hat{\Phi}_{\mu_0 \dots \mu_{\ND}}\left[Q\right]
 \eta.
\eq
The integrands in eq.~(\ref{def_input_data}) are all what we need to study. Up to prefactors, any integrand can be written in this form.
We explain the ingredients right-to-left.
$\eta$ is the standard $\NV$-form on ${\mathbb C}{\mathbb P}^{\NV}$ defined by
\bq
\label{def_eta}
 \eta
 & = &
 \sum\limits_{j=0}^{\NV} (-1)^{j} \; z_j \; dz_0 \wedge ... \wedge \widehat{dz_j} \wedge ... \wedge dz_{\NV},
\eq
where the hat indicates that the corresponding term is omitted.
$\hat{\Phi}_{\mu_0 \dots \mu_{\ND}}[Q]$ is an $\eps$-independent meromorphic function in $z$, 
holomorphic on ${\mathbb C}{\mathbb P}^\NV-D$
and given by 
\bq
 \hat{\Phi}_{\mu_0 \dots \mu_{\ND}}\left[Q\right]
 & = &
 \frac{Q}{\prod\limits_{i \in I_{\mathrm{all}}^0} \Divisor_i^{\mu_i}},
\eq
where $\mu_j \in {\mathbb N}_0$ and $Q$ is a homogeneous polynomial of degree
\bq
\label{def_d_Q}
 d_Q & = &
 \sum\limits_{i \in I_{\mathrm{all}}^0} \mu_i d_i - d_U - \NV - 1.
\eq
We set
\bq
\label{def_mu}
 \left| \mu \right|
 & =
 \sum\limits_{i \in I_{\mathrm{all}}^0} \mu_i.
\eq
The twist function $U(z)$ has been defined in eq.~(\ref{def_twist}).

The $\eps$-dependent prefactor $\preall$ is independent of $z$ (but may depend on $x$) and 
given as a product 
\bq
 \preall
 & = &
 \preabs \cdot \prerel \cdot \preclutch.
\eq
The prefactor $\prebaikov$ has been defined in eq.~(\ref{Baikov_representation}).
The factor $\preabs$ is defined such that~$\preabs \cdot \prebaikov$ is pure of transcendental weight zero.

The relative prefactor $\prerel$ depends on $(\mu_0,\mu_1,\dots)$ and is given by
\bq
 \prerel
 =
 \prod\limits_{i \in I_{\mathrm{all}}^0} 
 \left( \alpha_i \right)_{\mu_i},
\eq
with $(a)_n=\Gamma(a+1)/\Gamma(a+1-n)$ being the falling factorial.
The $\alpha_i$'s are the exponents appearing in eq.~(\ref{def_form_exponent_loop_by_loop}) and eq.~(\ref{def_alpha_0}).
For the minimal twist function $U$ defined by eq.~(\ref{def_twist}) the relative prefactor reduces to
\bq
 \prerel
 =
 \prod\limits_{i \in I_{\mathrm{odd}}^0} 
 \left(-\frac{1}{2}+\frac{1}{2} b_i \eps\right)_{\mu_i}
 \;\;
 \prod\limits_{i \in I_{\mathrm{even}}^0} 
 \left(\frac{1}{2} b_i \eps \right)_{\mu_i}.
\eq
The inclusion of the relative prefactor is motivated by the structure of the integration-by-parts identities
within twisted cohomology.
The relative prefactor is chosen such that it trivialises the $\eps$-dependence of the integration-by-parts identities.
The integration-by-parts identities within twisted cohomology will be derived in section~\ref{sect:ibp}.

The clutch prefactor $\preclutch$ is given by
\bq
 \preclutch
 & = &
 \eps^{-\absmu},
\eq
where $\absmu$ is defined in eq.~(\ref{def_mu}).
The clutch factor follows from known examples involving non-trivial geometries, where an 
$\eps$-factorised differential equation has been constructed.
In the case of the equal-mass $l$-loop banana integrals \cite{Adams:2018yfj,Pogel:2022vat},
the first master integral starts at order $\eps^l$ and this factor is contributed by $\preabs$. The following
master integrals start each one power of $\eps$ earlier compared to the previous one.
These powers of $\eps$ are provided by $\preclutch$.

\begin{myexample}
We continue with example 2:
The prefactor $\prebaikov$ was given in eq.~(\ref{def_prebaikov_example_2}) by
\bq
 \prebaikov
 & = & 
 \frac{2^{4+4\eps} \pi^4 e^{2 \eps \Eulerconstant}}{\left[\Gamma\left(\frac{1}{2}-\eps\right)\right]^2 x_1^{1+\eps} }
 \left[\left(1-x_2\right)^2+x_1x_2\right]^\eps.
\eq
It is easy to check that with $\preabs=\eps^3 x_1$
the product $\preabs \prebaikov$ is pure of transcendental weight zero:
\bq
 \preabs \cdot \prebaikov
 & = &
 16 \pi^3 \eps^3
 - 16 \pi^3 \left( L_1 - L_2 \right) \eps^4
 + 8 \pi^3 \left[ \left(L_1-L_2\right)^2 - \pi^2 \right] \eps^5
 + {\mathcal O}\left(\eps^6\right),
\eq
where $L_1=\ln(x_1)$ and $L_2=\ln((1-x_2)^2+x_1x_2)$.
\end{myexample}


\subsection{Vector spaces}
\label{sect:vector_spaces}

We denote the vector space spanned by the Feynman integrals on the maximal cut (viewed as abstract symbols)
by $A^{\NV}$.
This is an infinite-dimensional vector space of countable dimension.
We denote the vector space spanned by the differential forms of eq.~(\ref{def_input_data}) by $\Agen^{\NV}_\omega$.
This is again an infinite-dimensional vector space of countable dimension.
For a good Baikov representation we denote the linear map, which associates to each Feynman integral its integrand on the maximal cut by
\bq
\label{def_iota_A_to_Omega}
 \hat{\iota} & : & A^{\NV} \rightarrow \Agen^{\NV}_\omega.
\eq
We are interested in finite-dimensional vector spaces.
We denote the vector space of Feynman integrals 
on the maximal cut modulo linear relations by $V^{\NV}$
and the twisted cohomology group related to the integrands in eq.~(\ref{def_input_data})
by $\Hgen^{\NV}_{\omega}$.
Both are finite-dimensional vector spaces.
Their dimensions can be computed with the Laporta algorithm 
and this will be the method used throughout this paper. 
In addition, we will sometimes also report the number based on the counting of critical points \cite{Lee:2013hzt}. 

There is an injective linear map
\bq
\label{def_iota}
 \iota & : & V^{\NV} \rightarrowtail \Hgen^{\NV}_{\omega},
\eq
such that the following diagram is commutative
\bq
\label{commutative_diagram}
\begin{CD}
 A^{\NV} @>{\hat{\iota}}>> \Agen^{\NV}_{\omega} \\
@A{\sigma}AA @VV{\pi_{\omega}}V \\
 V^{\NV} @>{\iota}>> \Hgen^{\NV}_\omega \\
\end{CD},
\eq
where the map $\sigma : V^{\NV} \rightarrow A^{\NV}$ is trivial, once a basis of $V^{\NV}$ has been chosen 
(it sends each master integral to itself).
The map $\pi_{\omega} : \Agen^{\NV}_{\omega} \rightarrow \Hgen^{\NV}_\omega$
sends each differential form to its cohomology class.
In other words, $\iota$ is defined by $\iota = \pi_\omega \circ \hat{\iota} \circ \sigma$.
The map $\iota$ is injective: If the integrand of the Baikov representation is zero in $H^{\NV}_{\omega}$, 
then the maximal cut integral is zero in $V^{\NV}$.
Hence, we have
\bq
 \dim V^{\NV} & \le & \dim H^{\NV}_{\omega}.
\eq
The map is surjective, if $\dim V^{\NV} = \dim H^{\NV}_{\omega}$.
In general, the map will not be surjective.
There are two reasons for this: symmetries and super-sectors.

\subsubsection{Symmetries}

Integration can lead to symmetries among Feynman integrals (elements in $V^n$), which are not symmetries of the integrands (elements in $\Hgen^n_{\omega}$).
As a very simple example, we have the equality of the integrals
\bq
 \int\limits_{[0,1]^2} z_1 dz_1 \wedge dz_2 & = & \int\limits_{[0,1]^2} z_2 dz_1 \wedge dz_2,
\eq
however, this does not imply the equality of the integrands
\bq
 z_1 dz_1 \wedge dz_2 & \neq & z_2 dz_1 \wedge dz_2.
\eq
An example from the context of Feynman integrals is the following:
\begin{myexample}
We consider the equal-mass sunrise integral in $D=2-2\eps$ dimensions.
The inverse propagators are defined by
\begin{align}
 \sigma_1 & = -k_1^2 +m^2,
 &
 \sigma_2 & = -k_2^2 +m^2,
 &
 \sigma_3 & = -\left(k_1+k_2-p\right)^2 + m^2,
 \nonumber \\
 \sigma_4 & = -\left(k_1-p\right)^2,
 &
 \sigma_5 & = -\left(k_1+k_2\right)^2.
\end{align}
The sunrise integral has sector id $7$, corresponding to the propagator set $S=\{1,2,3\}$.
We consider the loop-by-loop Baikov representation, where we take as the inner loop the one with inverse
propagators $\sigma_2$ and $\sigma_3$. 
This gives us a one-dimensional Baikov representation with $\sigma_4$ as the remaining Baikov variable.
With $z_1=\sigma_4/(-p^2)$ and $x=m^2/(-p^2)$ the twist function reads
\bq
\label{def_twist_equal_mass_sunrise}
 U\left(z\right)
 & = & 
 z_0^{3 \eps}
 z_1^{-\frac{1}{2}} 
 \left\{ \left(z_1+4xz_0\right)
         \left[ z_1^2 + 2 \left(x-1\right) z_0 z_1 + \left(x+1\right)^2 z_0^2 \right] \right\}^{-\frac{1}{2}-\eps}.
\eq
On the Feynman integral side, we have two master integrals for sector $7$, hence $\dim V^1 = 2$.
However, in twisted cohomology, we find for the twist function of eq.~(\ref{def_twist_equal_mass_sunrise}) that
$\dim H_\omega^1 = 3$.
This is an example where the map from $V^1$ to $H^1_\omega$ is not surjective.
We have two symmetry relations on the Feynman integral side, which can be taken as
\bq
 I_{12100} - I_{11200} & = & 0,
 \nonumber \\
 2 I_{21100} - I_{12100} - I_{11200} & = & 0.
\eq
Our loop-by-loop Baikov representation is shown pictorially in fig.~\ref{fig:sector15}
and treats the inverse propagators $\sigma_2$ and $\sigma_3$ symmetrically,
hence, the left-hand side of the first symmetry relation maps to zero in twisted cohomology.
\begin{figure}
\begin{center}
\includegraphics[scale=1.0]{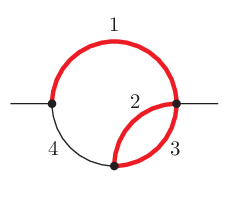}
\end{center}
\caption{
The Feynman graph for sector $15$.
Massive propagators are indicated by red lines.
}
\label{fig:sector15}
\end{figure}
However, the left-hand side of the second symmetry relation does not map to zero in twisted cohomology.
Moding out the image 
\bq
 \hat{\iota}\left( 2 I_{21100} - I_{12100} - I_{11200} \right)
\eq
in $H^1_\omega$ gives a two-dimensional space.
This corresponds to the two master integrals in the top sector.
\end{myexample}

\subsubsection{Super-sectors}

A sector $S_1$ is called a super-sector of the sector $S_2$, if $S_2$ can be obtained from $S_1$ 
by contracting some of the edges of $S_1$.
 
In order to motivate the relevance of super-sectors for the twisted cohomology gropu $\Hgen^\NV_\omega$, consider the situation where
a polynomial $\divisor_j(z)$ with $j \in I_{\mathrm{even}}$ is simply a factor $z_r=\Baikovvariable_l/\arbitraryscale^2$,
where $\Baikovvariable_l$ is an uncut inverse propagator.
In this case, $\Hgen^{\NV}_{\omega}$ will also contain the integrands of the sector 
where the exponent of this inverse propagator is positive.
If this sector has additional master integrals, they will also appear in $\Hgen^{\NV}_{\omega}$.
Such a sector is a super-sector, and we include it in the analysis.
\begin{myexample}
An example is given by the three-loop unequal-mass banana integral \cite{Pogel:2025bca}.
The inverse propagators are given by
\begin{align}
 \sigma_1 & = -k_1^2 + m_1^2,
 &
 \sigma_2 & = -k_2^2 + m_2^2,
 &
 \sigma_3 & = -k_3^2 + m_3^2,
 \nonumber \\
 \sigma_4 & = -\left(k_1+k_2+k_3-p\right)^2 + m_4^2,
 &
 \sigma_5 & = -\left(k_1+k_2-p\right)^2,
 &
 \sigma_6 & = -\left(k_1-p\right)^2,
 \nonumber \\
 \sigma_7 & = -\left(k_1+k_2\right)^2,
 & 
 \sigma_8 & = -\left(k_1+k_3\right)^2,
 &
 \sigma_9 & = -\left(k_2+k_3\right)^2.
\end{align}
and we consider the loop-by-loop Baikov representation, 
where the innermost loop is formed by the inverse propagators $\sigma_3$ and $\sigma_4$,
followed by the loop formed by the inverse propagators $\sigma_2$ and $\sigma_5$,
and finally the loop formed by the inverse propagators $\sigma_1$ and $\sigma_6$.
The three-loop banana integral has sector id $15$, corresponding to the propagator set $S=\{1,2,3,4\}$.
With $z_1=\sigma_5/(-p^2)$ and $z_2=\sigma_6/(-p^2)$ the twist function is given by
\bq
 U
 & = &
 z_0^{4\eps}
 z_1^{\eps}
 z_2^{\eps}
 P_3^{-\frac{1}{2}-\eps}
 P_4^{-\frac{1}{2}-\eps}
 P_5^{-\frac{1}{2}-\eps}.
\eq
The definition of the polynomials $P_3-P_5$ is not relevant here (the definition can be found in ref.~\cite{Pogel:2025bca}),
what is relevant is the fact that $z_1$ and $z_2$ appear as even polynomials in the twist function.
Hence, we have to include the super-sectors $31$ (with propagator set $S=\{1,2,3,4,5\}$), 
$47$ (with propagator set $S=\{1,2,3,4,6\}$)
and $63$ (with propagator set $S=\{1,2,3,4,5,6\}$).
These are shown in fig.~\ref{fig:super_sectors}.
\begin{figure}
\begin{center}
\includegraphics[scale=1.0]{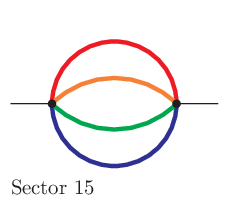}
\includegraphics[scale=1.0]{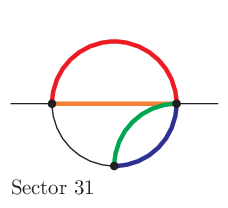}
\includegraphics[scale=1.0]{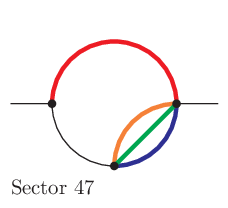}
\includegraphics[scale=1.0]{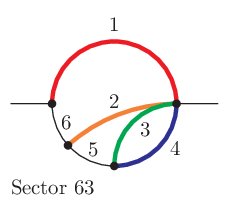}
\end{center}
\caption{
The sector of the banana integral together with the relevant super-sectors.
}
\label{fig:super_sectors}
\end{figure}
The super-sectors $31$ and $47$ have one master integral each. A possible basis for these super-sectors on the Feynman integral side is
\bq
 I_{111110000},
 \;\;\;
 I_{111101000}.
\eq
The super-sector $63$ is reducible and does not have any master integral.
These two additional master integrals make up for the difference in dimensions:
We have 
\bq
 \dim V^2 \; = \; 11,
 & &
 \dim H_\omega^2 \; = \; 13.
\eq
\end{myexample}
Our starting point is always a graph $G$, which has the property that any scalar product involving a loop
momenta can be expressed as a linear combination of inverse propagators (see section~\ref{sect:feynman_integrals}).
This does not imply that all relevant super-sectors are sub-graphs of the graph $G$.
We provide a counterexample.
\begin{myexample}
We start from a non-planar double box integral with inverse propagators
\begin{align}
\label{inverse_propagators_moeller}
 \sigma_1 & = -\left(k_1-p_1\right)^2 +m^2,
 &
 \sigma_2 & = -\left(k_1-p_{12}\right)^2,
 &
 \sigma_3 & = -k_1^2,
 \nonumber \\
 \sigma_4 & = -k_{12}^2 + m^2,
 &
 \sigma_5 & = -\left(k_{12}+p_3\right)^2,
 &
 \sigma_6 & = -k_2^2,
 \nonumber \\
 \sigma_7 & = -\left(k_2+p_{123}\right)^2 + m^2,
 & 
 \sigma_8 & = -\left(k_1-p_{13}\right)^2,
 &
 \sigma_9 & = -\left(k_2+p_{13}\right)^2.
\end{align}
The auxiliary graph $G$ with nine propagators has sector id $511$,
the non-planar double box integral has sector id $127$.
We are interested in sector $123$ with propagator set $S=\{1,2,4,5,6,7\}$.
The sectors are shown in fig.~\ref{fig:moeller}.
\begin{figure}
\begin{center}
\includegraphics[scale=1.0]{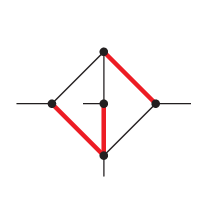}
\includegraphics[scale=1.0]{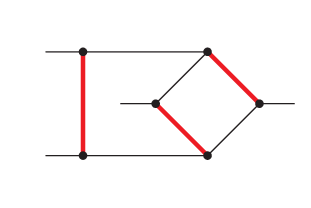}
\includegraphics[scale=1.0]{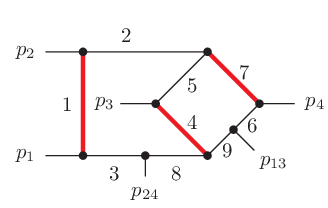}
\end{center}
\caption{
Sector $123$ (left), sector $127$ (middle) and sector $511$ (right).
}
\label{fig:moeller}
\end{figure}
For sector $123$ we 
consider the loop-by-loop Baikov representation, where the inner loop is given by $\sigma_4,\sigma_5,\sigma_6,\sigma_7$.
We obtain a two-dimensional Baikov representation with twist
\bq
 U
 & = &
 z_0^{4\eps} P_1^\eps P_2^\eps P_3^{-\frac{1}{2}-\eps} P_4^{-\frac{1}{2}-\eps},
\eq
with
\bq
 P_1 \; = \; \frac{\sigma_8+ \left(m^2-t\right)z_0}{s},
 & &
 P_2 \; = \; \frac{\sigma_8-\sigma_3+ \left(m^2-t-s\right)z_0}{s}.
\eq
The explicit expressions for $P_3$ and $P_4$ are not relevant here.
$P_1$ and $P_2$ are not proportional to any of the inverse propagators appearing in eq.~(\ref{inverse_propagators_moeller}).
 We have
\bq
\label{diff_dim_sector_123}
 \dim V^2 \; = \; 8,
 & &
 \dim H_\omega^2 \; = \; 12.
\eq
The difference in dimensions is due to super-sectors, where $P_1$ or $P_2$ appear in the denominator.
These super-sectors are not sub-sectors of sector $511$ of the graph $G$.
We may however construct a new auxiliary graph $\tilde{G}$, with inverse propagators
\begin{align}
\label{inverse_propagators_moeller_alt}
 \sigma_1 & = -\left(k_1-p_1\right)^2 +m^2,
 &
 \sigma_2 & = -\left(k_1-p_{12}\right)^2,
 &
 \tilde{\sigma}_3 & = -\left(k_1-p_{123}\right)^2,
 \nonumber \\
 \sigma_4 & = -k_{12}^2 + m^2,
 &
 \sigma_5 & = -\left(k_{12}+p_3\right)^2,
 &
 \sigma_6 & = -k_2^2,
 \nonumber \\
 \sigma_7 & = -\left(k_2+p_{123}\right)^2 + m^2,
 & 
 \tilde{\sigma}_8 & = -\left(k_1-2p_{12}-p_3\right)^2,
 &
 \sigma_9 & = -\left(k_2+p_{13}\right)^2,
\end{align}
such that sector $123$ and the super-sectors are sub-graphs of $\tilde{G}$.
The super-sectors $127$ and $251$ with respect to the inverse propagators defined in eq.~(\ref{inverse_propagators_moeller_alt}) contribute two master integrals each, explaining the difference in dimensions
in eq.~(\ref{diff_dim_sector_123}).
\end{myexample}

\subsubsection{Conversion between $V^{\NV}$ and $H^{\NV}_{\omega}$}
\label{sect:conversion}

As already mentioned at the beginning of section~\ref{sect:vector_spaces}, the map $\iota : V^{\NV} \rightarrowtail H^{\NV}_{\omega}$ goes from the vector space of Feynman integrals on the maximal cut modulo linear relations to the twisted cohomology classes.
We write each master integral on the maximal cut in the Baikov representation and take the integrand in this representation as a representative of a twisted cohomology class.
However, this assumes that we already have a basis of master integrals on $V^{\NV}$, which is required to define the map
$\sigma : V^{\NV} \rightarrow A^{\NV}$ in eq.~(\ref{commutative_diagram}).

What we would like to have is a map $j : H^{\NV}_{\omega} \twoheadrightarrow V^{\NV}$ and use this map to define
master integrals on $V^{\NV}$ from the master integrands on $H^{\NV}_{\omega}$.
This can be done as follows:
We denote by $A^{\NV}_{\mathrm{all}}$ the infinite-dimensional vector space of Feynman integrals on the maximal cut including the super-sectors,
and by $A^{\NV}_{\mathrm{super}}$ the infinite-dimensional vector space of Feynman integrals of the super-sectors only.
We denote by $V^{\NV}_{\mathrm{all}}$ and $V^{\NV}_{\mathrm{super}}$ the corresponding finite-dimensional vector spaces,
obtained by moding out linear relations.
We have
\bq
 V^{\NV}_{\mathrm{all}} & = & V^{\NV} + V^{\NV}_{\mathrm{super}}.
\eq
The map $\hat{\iota}$ extends trivially to $\hat{\iota} : A^{\NV}_{\mathrm{all}} \rightarrow \Agen^{\NV}_{\omega}$.
Note that the map $\hat{\iota}$ is not necessarily injective, due to the ``missing ISPs'' discussed at the end
of section \ref{sect:good_Baikov_representation}.
There is a surjective map
\bq
 j & : & H^{\NV}_{\omega} \twoheadrightarrow V^{\NV},
\eq
which is obtained from the following sequence of maps
\bq
 H^{\NV}_{\omega}
 \stackrel{\pi_{\mathrm{integration}}}{\longrightarrow}
 V^{\NV}_{\mathrm{all}}
 \stackrel{\pi_{\mathrm{sector}}}{\longrightarrow}
 V^{\NV}.
\eq
The first map projects out symmetry relations due to integration,
the second map subtracts out contributions from super-sectors.
This allows us to define master integrals on $V^{\NV}$ and to decompose $H^{\NV}_{\omega}$ as
\bq
 H^{\NV}_{\omega}
 & = &
 \mathrm{Im}\left(\iota\right) + \mathrm{Ker}\left(j\right)
\eq
In detail, we proceed as follows: 
A symmetry relation is a linear relation among Feynman integrals, i.e.
\bq
 \sum\limits_{i} c_i I_i & = & 0,
\eq
A non-trivial symmetry relation defines a linear relation
\bq
\label{symmetry_relation_H_twist}
 \hat{\iota}\left(\sum\limits_{i} c_i I_i\right) & = & 0
\eq
in $\Agen^{\NV}_{\omega}$. 
We denote by $\Hgen^{\NV}_{\mathrm{symm}}$ the vector space $\Hgen^{\NV}_{\omega}$ modulo these relations.
The cohomology classes defined by the left-hand side of eq.~(\ref{symmetry_relation_H_twist}) span the kernel of $\pi_{\mathrm{integration}}$.
A basis element of $\Hgen^{\NV}_{\mathrm{symm}}$ is represented by $\differentialform \in \Agen^{\NV}_{\omega}$.
This differential form has one or more pre-images in $A^{\NV}_{\mathrm{all}}$, which can be found by linear algebra.
We define an order relation on the space of Feynman integrals and hence on
$A^{\NV}_{\mathrm{all}}$.
We select the simplest pre-image according to this order relation.
As a side-remark we point out that this order relation will appear 
again in eq.~(\ref{full_order_relation}) as the lowest order criterion.
Note that the order relation on $A^{\NV}_{\mathrm{all}}$ is only used to select among a set of possible pre-images 
the simplest pre-image. All possible pre-images are equivalent, once integration-by-parts identities 
and symmetry relations have been taken into account.
Note further, that we do not need to know all pre-images, we are only interested in the simplest one.
In practice, this involves therefore only a rather small linear system of equations.
This procedure defines a basis on $V^{\NV}_{\mathrm{all}}$.
We may now define the map $\sigma : V^{\NV}_{\mathrm{all}} \rightarrow A^{\NV}_{\mathrm{all}}$
and in turn $\iota = \pi_\omega \circ \hat{\iota} \circ \sigma$.
With a slight abuse of notation we use $\sigma$ to denote the maps $\sigma : V^{\NV} \rightarrow A^{\NV}$ and
$\sigma : V^{\NV}_{\mathrm{all}} \rightarrow A^{\NV}_{\mathrm{all}}$.
In the same spirit, we use $\iota$ to denote the maps
$V^{\NV} \rightarrow \Hgen^{\NV}_\omega$ and $V^{\NV}_{\mathrm{all}} \rightarrow \Hgen^{\NV}_\omega$.

Among the basis elements of $V^{\NV}_{\mathrm{all}}$ we have the ones which span $V^{\NV}_{\mathrm{super}}$.
Let us assume that these are labelled $I_{\NF+1},I_{\NF+2},\dots$.
The kernel of $\pi_{\mathrm{sector}}$ is given by these integrals.
The pre-images of $I_{\NF+1},I_{\NF+2},\dots$ in $H^{\NV}_{\omega}$ are given by $\iota(I_{\NF+1}),\iota(I_{\NF+2}),\dots$.

We may decompose $H^{\NV}_{\omega}$ into $\mathrm{Im}(\iota)$, the kernel of $\pi_{\mathrm{integration}}$ and the pre-image of the kernel of $\pi_{\mathrm{sector}}$.
The latter two span the kernel of $j$.
There is an isomorphism between 
$V^{\NV}$ and $\mathrm{Im}(\iota)$.


\subsection{Linear relations}
\label{sect:linear_relations}

For the integrands defined in eq.~(\ref{def_input_data}) we have three types of linear relations:
Integration-by-parts identities, distribution identities and cancellation identities.

\subsubsection{Integration-by-parts identities}
\label{sect:ibp}

The integration-by-parts identities read
\bq
\label{eq_ibp}
 \frac{1}{\eps}
 \differentialform_{\mu_0 \dots \mu_i \dots \mu_{\ND}}\left[\partial_{z_j} Q_+\right]
 +
 \sum\limits_{i \in I_{\mathrm{all}}^0} 
 \differentialform_{\mu_0 \dots (\mu_i+1) \dots \mu_{\ND}}\left[Q_+ \cdot \left( \partial_{z_j} P_i \right) \right]
 & = & 0,
\eq
where $Q_+$ is an $\eps$-independent homogeneous polynomial of degree $\deg Q_+ = d_Q+1$.
The integration-by-parts identities are derived as follows:
We start from the fact that within twisted cohomology, we may always add a $\nabla_F$-exact form to a twisted cohomology class:
\bq
 \nabla_F \Xi & = & 0,
\eq
where $\Xi$ is a $(\NV-1)$-form, holomorphic on ${\mathbb C} {\mathbb P}^{\NV}-D$.
We call $\Xi$ the seed.
$\Xi$ may have singularities on $D$.
Following Griffiths \cite{Griffiths:1969} we may write $\Xi$ as
\bq
\label{seed}
 \Xi & = & 
 \frac{Q_+}{\prod\limits_{i \in I_{\mathrm{all}}^0} \Divisor_i^{\mu_i}}
 \left( 
       \sum\limits_{k<j} \left(-1\right)^{k} z_k
                        dz_0 \wedge ... \wedge \widehat{dz_k} \wedge ... \wedge \widehat{dz_j} \wedge ... \wedge dz_{\NV}
 \right. \nonumber \\
 & & \left.
       -
       \sum\limits_{k>j} \left(-1\right)^{k} z_k
                        dz_0 \wedge ... \wedge \widehat{dz_j} \wedge ... \wedge \widehat{dz_k} \wedge ... \wedge dz_{\NV}
\right),
\eq
with
\bq
 \mu_i \in {\mathbb N}_0,
 && i \in \{0,1,\dots,\ND \},
\eq
subject to the homogeneity condition
\bq
\label{constraint_seeds}
 \deg Q_+
 & = &
 \sum\limits_{i \in I_{\mathrm{all}}^0} \mu_i d_i
 - d_U - \NV.
\eq
Note that the degree of $Q_+$ is one higher than the degree of $Q$ given in eq.~(\ref{def_d_Q})
(hence the notation ``$Q_+$'').
The possible seeds are specified by the $(\ND+2)$-tuple
$(\mu_0,\mu_1,\dots,\mu_{\ND},j)$ and the polynomial $Q_+$,
subject to the constraint given in eq.~(\ref{constraint_seeds}).
We have
\bq
 \nabla_F \Xi & = & 
 \eta \nabla_{F,z_j} \left( \frac{Q_+}{\prod\limits_{i \in I_{\mathrm{all}}^0} \Divisor_i^{\mu_i}}
 \right)
 \; = \; 0,
\eq
where $\nabla_{F,z_j}$ is defined by
\bq
 \nabla_F & = & \sum\limits_{j=0}^{\NV} dz_j \nabla_{F,z_j}.
\eq
Thus if we strip off $\eta$, the integration-by-parts identities are
\bq
 \nabla_{F,z_j} \left( \frac{Q_+}{\prod\limits_{i \in I_{\mathrm{all}}^0} \Divisor_i^{\mu_i}}
 \right)
 & = & 0.
\eq
In this form, each integration-by-parts identity is linear in $\eps$, where the part proportional to $\eps$ comes from the twist.
Including the twist and the prefactors, we have
\bq
 \partial_{z_j} \left( \prebaikov \preall U \frac{Q_+}{\prod\limits_{i \in I_{\mathrm{all}}^0} \Divisor_i^{\mu_i}}
 \right)
 & = & 0.
\eq
Working this out yields eq.~(\ref{eq_ibp}).

\subsubsection{Distribution identities}
\label{sect:distribution}

The distribution identities are rather trivial and originate from writing a polynomial $Q=Q_1+Q_2$ as a sum of 
two other polynomials:
\bq
\label{eq_distribution}
 \differentialform_{\mu_0 \dots \mu_{\ND}}\left[Q\right]
 & = &
 \differentialform_{\mu_0 \dots \mu_{\ND}}\left[Q_1\right]
 +
 \differentialform_{\mu_0 \dots \mu_{\ND}}\left[Q_2\right].
 \;
\eq

\subsubsection{Cancellation identities}
\label{sect:cancellation}

The cancellation identities originate from a cancellation of $\Divisor_j$ in the numerator and the denominator. 
\bq
 \frac{\Divisor_j Q}{\Divisor_j^{\mu_j+1} \prod\limits_{\substack{i \in I_{\mathrm{all}}^0 \\ i \neq j }} \Divisor_i^{\mu_i}}
 & = & 
 \frac{Q}{\prod\limits_{i \in I_{\mathrm{all}}^0} \Divisor_i^{\mu_i}}.
\eq
They read
\bq
\label{eq_cancellation}
 \differentialform_{\mu_0 \dots (\mu_j+1) \dots \mu_{\ND}}\left[\Divisor_j \cdot Q\right]
 & = &
 \frac{1}{\eps}
 \frac{\prerel^{(j)}}{\prerel}
 \differentialform_{\mu_0 \dots \mu_j \dots \mu_{\ND}}\left[Q\right],
 \;
\eq
where
$\prerel$ denotes the relative prefactor of $\differentialform_{\dots \mu_j \dots}[Q]$ and
$\prerel^{(j)}$ denotes the relative prefactor of $\differentialform_{\dots (\mu_j+1) \dots}[\Divisor_j \cdot Q]$.
The ratio of the relative prefactors depends on $a_j, b_j, \mu_j$ and $\eps$ and is always linear in $\eps$:
\bq
\label{linear_factor_in_eps}
 \frac{\prerel^{(j)}}{\prerel}
 & = &
 \frac{1}{2} a_j - \mu_j + \frac{b_j}{2} \eps.
\eq


\subsection{Poles and residues}
\label{sect:poles_and_residues}

We will count the pole order and the number of non-zero consecutive residues 
of the integrands $\differentialform_{\mu_0 \dots \mu_{\ND}}[Q]$ defined in eq.~(\ref{def_input_data}).
The regulator $\eps$ in the twist function will not be relevant for this counting and
we define $\differentialform^0_{\mu_0 \dots \mu_{\ND}}[Q]$ by replacing $U(z)$ by $U_0(z)$:
\bq
 \differentialform^0_{\mu_0 \dots \mu_{\ND}}\left[Q\right]
 = 
 \prebaikov \;
 \preall \;
 U_0\left(z\right) \hat{\Phi}_{\mu_0 \dots \mu_{\ND}}\left[Q\right]
 \eta.
\eq
The differential form $\differentialform^0_{\mu_0 \dots \mu_{\ND}}[Q]$ is algebraic in $z$, the algebraic part comes entirely from the square roots of the odd polynomials in $U_0$.
The differential forms $\differentialform_{\mu_0 \dots \mu_{\ND}}[Q]$ and $\differentialform^0_{\mu_0 \dots \mu_{\ND}}[Q]$
are holomorphic on ${\mathbb C} {\mathbb P}^{\NV}-D$, where the divisor $D$ is defined in eq.~(\ref{def_divisior_D}).
They may have singularities on the divisor $D$.

For a differential form $\differentialform_{\mu_0 \dots \mu_{\ND}}[Q]$ (and $\differentialform^0_{\mu_0 \dots \mu_{\ND}}[Q]$)
we define the active divisor as follows: We set
\bq
 D_{\mathrm{active}} \; = \; \bigcup\limits_{i \in I_{\mathrm{active}}^0} D_i,
 & \mbox{where} &
 I^0_{\mathrm{active}} \; = \; \left\{ \; i \; | \; i \in I^0_{\mathrm{even}} \; \mathrm{and} \; \mu_i > 0 \; \right\} \; \cup \; I^0_{\mathrm{odd}}.
\eq
In section~\ref{sect:definitions} we divided the polynomials defining the divisors into even and odd, 
depending on whether the $\eps^0$-term of the corresponding exponent in the twist function is zero or $(-\frac{1}{2})$.
The even and the odd polynomials will play different roles in the following.
If an even polynomial is present in the denominator, we may take a residue and reduce to a simpler problem.
The odd polynomials define a geometry, which -- to a first approximation -- is associated with the Feynman integral.
We will elaborate on this point in section~\ref{sect:geometry}.

\subsubsection{Blow-ups}

Before we start to count the pole order, we need to discuss a technical detail:
In general, the active divisor $D_{\mathrm{active}}$ (and the divisor $D$) will have self-intersections.
We distinguish normal crossings and non-normal crossings.
\begin{figure}
\begin{center}
\includegraphics[scale=1.0]{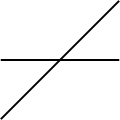}
\hspace*{20mm}
\includegraphics[scale=1.0]{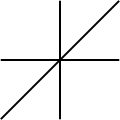}
\hspace*{20mm}
\includegraphics[scale=1.0]{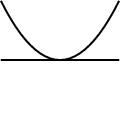}
\end{center}
\caption{
Crossings in a plane: The left figure shows a normal crossing, the middle figure and the right figure show non-normal crossings.
}
\label{fig:crossings}
\end{figure}
Fig.~\ref{fig:crossings} shows an example of a normal crossing in a plane and two examples of non-normal crossings in a plane.
In order to determine the pole order correctly, we would like to have normal crossings.
A theorem by Hironaka \cite{Hironaka:1964} guarantees that by a (finite) sequence of blow-ups we may always convert
a non-normal crossing to a normal crossing.
Within the physics community, this procedure is known as sector decomposition \cite{Binoth:2000ps,Bogner:2007cr,Smirnov:2008aw,Kaneko:2009qx}.
Whenever we encounter a non-normal crossing, we will always perform a blow-up.
\begin{myexample}
\label{example_blow_up}
Consider the twist function (which corresponds to a Feynman integral contributing to the three-loop electron self-energy, which is shown in fig. \ref{fig:SE_electron})
\bq 
 U\left(z_0,z_1,z_2\right)
 & = &
 P_0^{4\eps}
 P_1^\eps 
 P_2^{-2\eps}
 P_3^{-\frac{1}{2}}
 P_4^{-\frac{1}{2}-\eps}
 P_5^{-\frac{1}{2}-\eps}
\eq
with
\begin{align}
 P_0 & = z_0, & P_3 & = z_1, \nonumber \\
 P_1 & = z_2, & P_4 & = z_1 + 4 x z_0, \nonumber \\
 P_2 &= z_2+z_0, & P_5 & = \left(z_2-z_1\right)^2 +2 x z_0 \left( z_1+z_2\right) + x^2 z_0^2.
\end{align}
$P_3$ and $P_5$ have a non-normal intersection at $[z_0:z_1:z_2]=[1:0:-x]$.
In the chart $z_0=1$ we blow-up the point $(z_1,z_2)=(0,-x)$.
The linear transformation $\tilde{z}_2=z_2-z_1+x$ shifts the point to $(z_1,\tilde{z}_2)=(0,0)$.
The blow-up gives us three charts:
\bq
\begin{array}{lll}
\mathrm{Chart} \; 1: & z_1 = z_1', & \tilde{z}_2 = z_1' z_2', \\
\mathrm{Chart} \; 2: & z_1 = z_1'' (z_2'')^2, & \tilde{z}_2 = z_2'', \\
\mathrm{Chart} \; 3: & z_1 = (z_1''')^2 z_2''', & \tilde{z}_2 = z_1''' z_2'''. \\
\end{array}
\eq
\end{myexample}

\subsubsection{Poles}
\label{sect:poles}

The differential forms $\differentialform_{\mu_0 \dots \mu_{\ND}}[Q]$ and $\differentialform^0_{\mu_0 \dots \mu_{\ND}}[Q]$
may have singularities on $D$.
We now define an integer $o$, called the pole order of the differential form, which measures how singular the differential form is.
We define the pole order $o$ of $\differentialform^0_{\mu_0 \dots \mu_{\ND}}[Q]$ as follows:
If the active divisors have non-normal crossings, we first perform a blow-up. 
After this step, we may assume that we only have normal crossings.
The pole order is the maximum of the pole orders at individual points.
In the univariate case we define the pole order of $z^{-\alpha} dz$ at $z=0$ for $\alpha>0$ to be 
$\lfloor \alpha \rfloor$, where $\lfloor x \rfloor$ denotes the floor function.
Hence the pole order of $dz/z$ at $z=0$ is $1$,
the pole order of $dz/\sqrt{z}$ at $z=0$ is $0$.
In the multivariate case, we define the pole order for normal-crossing singularities to be additive,
i.e. the pole order of $dz_1/z_1 \wedge dz_2/z_2^2$ at
$(z_1,z_2)=(0,0)$ is $3$.
We define the pole order of $\differentialform_{\mu_0 \dots \mu_{\ND}}[Q]$ to be the pole order of $\differentialform^0_{\mu_0 \dots \mu_{\ND}}[Q]$.
\begin{myexample}
The differential form
\bq
 \frac{z_0 dz_1 \wedge dz_2 - z_1 dz_0 \wedge dz_2 + z_2 dz_0 \wedge dz_1}{z_1 z_2 \sqrt{\left(z_2-z_1\right)^2 +2 x z_0 \left( z_1+z_2\right) + x^2 z_0^2}}
\eq
has pole order $o=2$.
The pole order $2$ is attained at the points
$[1:0:0]$, $[1:0:-x]$ and $[1:-x:0]$.
The differential form
\bq
 \frac{z_0 dz_1 \wedge dz_2 - z_1 dz_0 \wedge dz_2 + z_2 dz_0 \wedge dz_1}{z_1 \left[\left(z_2-z_1\right)^2 +2 x z_0 \left( z_1+z_2\right) + x^2 z_0^2\right]}
\eq
has pole order $o=3$. The pole order $3$ is attained at the point
$[z_0:z_1:z_2]=[1:0:-x]$.
At the points $[1:0:-x]$ and $[1:-x:0]$, we have non-normal crossings and a blow-up, as in example~\ref{example_blow_up} is required.
\end{myexample}

\subsubsection{Residues}
\label{sect:residues}

If an even polynomial appears in the denominator of a differential form, we may take a residue.
If more than one even polynomial appears in the denominator, we may iterate this procedure.
We denote by $r$ the largest number such that the $r$-fold residue of $\differentialform^0_{\mu_0 \dots \mu_{\ND}}[Q]$ is non-zero.
We define the number of non-zero consecutive residues of $\differentialform_{\mu_0 \dots \mu_{\ND}}[Q]$ to be $r$.
\begin{myexample}
The differential form
\bq
 \frac{z_0 dz_1 \wedge dz_2 - z_1 dz_0 \wedge dz_2 + z_2 dz_0 \wedge dz_1}{z_1 z_2 \sqrt{\left(z_2-z_1\right)^2 +2 x z_0 \left( z_1+z_2\right) + x^2 z_0^2}}
\eq
has $r=2$.
We may take a two-fold residue at $[z_0:z_1:z_2]=[1:0:0]$, at $[z_0:z_1:z_2]=[1:0:-x]$ and at $[z_0:z_1:z_2]=[1:-x:0]$. 
\end{myexample}


\subsection{Filtrations}
\label{sect:filtrations}

For each object $\differentialform_{\mu_0 \dots \mu_{\ND}}[Q] \in \Agen^{\NV}_\omega$ we defined three integers $r, o, |\mu|$.
The number $r$ of non-zero consecutive residues has been defined in section~\ref{sect:residues},
the pole order $o$ has been defined in section~\ref{sect:poles} and
the quantity $|\mu|$ has been defined in eq.~(\ref{def_mu}).
These numbers define three filtrations $W_\bullet$, $\Fgeom^\bullet$ and $\Fcomb^\bullet$ on the space $\Agen^{\NV}_\omega$.
\begin{description}

\item{(i)} The weight filtration $W_\bullet$ is defined by
\begin{alignat}{2}
 \differentialform_{\mu_0 \dots \mu_{\ND}}[Q] & \in W_w \Agen^{\NV}_\omega & \quad \mbox{if} & \quad \NV + r \le w.
\end{alignat}

\item{(ii)} The filtration $\Fgeom^\bullet$ is defined by
\begin{alignat}{2}
 \differentialform_{\mu_0 \dots \mu_{\ND}}[Q] & \in \Fgeom^{p} \Agen^{\NV}_\omega & \quad \mbox{if} & \quad \NV+r-o \ge p.
\end{alignat}

\item{(iii)} The filtration $\Fcomb^\bullet$ is defined by
\begin{alignat}{2}
 \differentialform_{\mu_0 \dots \mu_{\ND}}[Q] & \in \Fcomb^{p'} \Agen^{\NV}_\omega & \quad \mbox{if} & \quad \NV-\absmu \ge p'.
\end{alignat}

\end{description}
These filtrations are inspired by Hodge theory.
The weight filtration is the standard weight filtration from Hodge theory. As $\NV$ is fixed, it is essentially 
a filtration by the number of non-zero consecutive residues.
At fixed weight $w$, the filtration $\Fgeom^\bullet$ is a filtration by the pole order $o$.
The combinatorial filtration $\Fcomb^\bullet$ is a filtration by the quantity $\absmu$.
The general idea is that we always work modulo simpler terms, i.e. modulo terms with fewer residues, lower pole order
or a smaller sum of indices $\absmu$. 

With the filtrations $\Fcomb^\bullet$, $\Fgeom^\bullet$ and $W_\bullet$ 
we may in principle decompose $\Agen^{\NV}_\omega$ into subspaces $\Agen_\omega^{|\mu|,o,r}$.
A projection onto two indices is easier to display and
we define two coarse-grained decompositions:
We set 
\bq
 \Ageom^{p,q} \; = \; \mathrm{Gr}^{p}_{\Fgeom} \mathrm{Gr}^W_{p+q} \Agen^{\NV}_\omega
 & \mbox{and} &
 \Acomb^{p',q'} \; = \; \mathrm{Gr}^{p'}_{\Fcomb} \mathrm{Gr}^W_{p'+q'} \Agen^{\NV}_\omega,
\eq
where the graded parts are defined by 
\bq
 \mathrm{Gr}^W_{w} X \; = \; W_{w} X / W_{w-1} X
 & \mbox{and} &
 \mathrm{Gr}^{p}_{F} X \; = \; F^p X / F^{p+1} X.
\eq
We carry over the filtrations to $\Hgen^{\NV}_\omega$ as follows:
Cohomology classes in $\Hgen^{\NV}_\omega$ are cosets, and different elements of a coset may have different values of $(|\mu|,o,r)$.
In section~\ref{sect:method}, we define a total order on the elements $\differentialform_{\mu_0 \dots \mu_{\ND}}[Q]$ of $\Agen^{\NV}_\omega$
based on the values $(|\mu|,o,r)$.
We define the standard representative of a cohomology class 
to be the object $\differentialform_{\mu_0 \dots \mu_{\ND}}[Q]$ in this cohomology class, which is minimal
with respect to the order relation.
For a fixed total order, this element is unique.
We call this standard representative a master integrand.
We let $\Hgeom^{p,q}$ be generated by all master integrands $\differentialform \in \Ageom^{p,q}$
and $\Hcomb^{p',q'}$ be generated by all master integrands $\differentialform \in \Acomb^{p',q'}$.
We set 
\bq
 \hgeom^{p,q} \; = \; \dim \Hgeom^{p,q}
 & \mbox{and} &
 \hcomb^{p',q'} \; = \; \dim \Hcomb^{p',q'}.
\eq
We display this information in the form of a Hodge-like diagram, for example, for $\NV=2$
\begin{center}
\begin{axopicture}(280,140)(0,0)
\Text(70,60)[c]{$\hgeom^{2,0}$}
\Text(110,60)[c]{$\hgeom^{1,1}$}
\Text(150,60)[c]{$\hgeom^{0,2}$}
\Text(90,80)[c]{$\hgeom^{2,1}$}
\Text(130,80)[c]{$\hgeom^{1,2}$}
\Text(110,100)[c]{$\hgeom^{2,2}$}
\DashLine(30,70)(260,70){6}
\DashLine(30,90)(260,90){6}
\DashLine(30,110)(260,110){6}
\Text(253,60)[c]{$W_2$}
\Text(253,80)[c]{$W_3$}
\Text(253,100)[c]{$W_4$}
\Line(260,70)(260,64)
\Line(260,90)(260,84)
\Line(260,110)(260,104)
\DashLine(120,10)(240,130){3}
\DashLine(80,10)(200,130){3}
\DashLine(40,10)(160,130){3}
\Text(110,20)[c]{$\Fgeom^0$}
\Text(70,20)[c]{$\Fgeom^1$}
\Text(30,20)[c]{$\Fgeom^2$}
\Line(120,10)(117,10)
\Line(80,10)(77,10)
\Line(40,10)(37,10)
\end{axopicture}
\end{center}


\subsection{Localisations}
\label{sect:localisations}

If a polynomial $\Divisor_i$ with $i \in I_{\mathrm{even}}^0$
appears in the denominator, we may take a residue, or in other words, localise on $\Divisor_i=0$.
By taking a residue we go from a variety of dimension $\NV$ to a sub-variety of dimension $(\NV-1)$.
The equation $\Divisor_i=0$ with $i \in I_{\mathrm{even}}^0$ defines a distinguished sub-variety, which plays an important role in our algorithm.
Clearly, this process can be iterated. On a sub-variety we may take a further residue and obtain a sub-sub-variety of codimension two.
In general, we obtain a set of nested sub-varieties as shown in fig.~\ref{fig:mixed_geometry}.
\begin{figure}
\begin{center}
\includegraphics[scale=1.0]{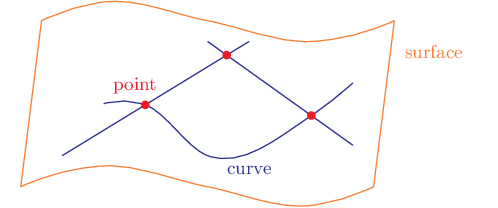}
\end{center}
\caption{
A mixed geometry:
Inside a surface of dimension two there can be curves of dimension one and points of dimension zero.
}
\label{fig:mixed_geometry}
\end{figure}
Ultimately, these distinguished sub-varieties determine the decomposition of the vector space $V^\NV$ of Feynman integrals on the maximal cut into
pieces associated with specific geometries.

Our algorithm will first recursively treat all possible localisations. 
In the simplest case, where $\Divisor$ equals one Baikov variable (e.g. $\Divisor_i=z_r$), the localisation 
eliminates this Baikov variable, and we are left with a sub-problem with one Baikov variable less.
Localisations are rather straightforward if each divisor we localise on is linear in at least one Baikov variable.
Below, we present a method which avoids algebraic extensions even if this does not hold.
The philosophy is not to eliminate variables, but to work with the original variables modulo an ideal on a restricted
set of objects with a restricted set of linear relations.
As a localisation defines a sub-variety of ${\mathbb C}{\mathbb P}^{\NV}$, we inevitably need to work with charts.
\begin{myexample}
\label{example_charts}
In general, we need to consider all charts. Integration-by-parts identities in a single chart might not give the full
set of linear relations.
To see this, consider in ${\mathbb C}{\mathbb P}^2$ the twist 
\bq
 U & = &
 P_0^{2\eps} P_1^{2\eps} P_2^{-\eps} P_3^{-\frac{1}{2}-\eps}.
\eq
where the polynomials are given by
\begin{align}
 P_0
 & =
 z_0,
 &
 P_2
 & =
 z_1 z_2 + x_1 z_0 z_2 - z_1^2,
 \nonumber \\
 P_1
 & =
 z_1,
 &
 P_3
 & = 
 \left[ x_2 z_1 - \left(1+x_2\right) z_2 + \left(x_1-x_2\right) z_0\right]^2 
 + 4 x_1 \left(1+x_2\right) z_0 z_2.
\end{align}
This example has six master integrands.
However, if we only use the integration-by-parts identities in the chart $z_2=1$ 
(by excluding $j=2$ in eq.~(\ref{basic_ibp_relation})), we will only be able to reduce the integrands to seven free integrands. 
This example will be continued in example~\ref{example_square_free_sub_geometry}
and discussed in detail in section~\ref{sect:sector_93_moeller}.
\end{myexample}
In general, we denote localised integrands by
\bq
\label{notation_localised}
 \mathrm{Res}_{{\mathcal P}_1, \dots, {\mathcal P}_r } \differentialform_{\mu_0,\dots,\mu_{\ND}}\left[Q, I, {\mathcal P}_s \right],
\eq
where ${\mathcal P}_1, \dots, {\mathcal P}_r$ denotes an ordered sequence, specifying the order in which the residues are taken
(the first residue is taken with respect to ${\mathcal P}_1$, the last residue is taken with respect to ${\mathcal P}_r$).
$I = \langle {\mathcal P}_1, \dots, {\mathcal P}_r \rangle$ denotes the ideal we localise on and 
${\mathcal P}_s \notin I$ denotes a scale polynomial, which we need to introduce in order to keep the integrand homogeneous.
We will only be considering differential forms $\differentialform_{\mu_0,\dots,\mu_{\ND}}$ which have simple poles along $I$.
Let $G$ be a Gr\"obner basis of the ideal $I$. With the help of the Gr\"obner basis, we may reduce the numerator
polynomial $Q$ modulo the ideal $I$.
If 
\bq
 Q_1 \mod I & = & Q_2 \mod I
\eq
we have
\bq
 \mathrm{Res}_{{\mathcal P}_1, \dots, {\mathcal P}_r } \differentialform_{\mu_0,\dots,\mu_{\ND}}\left[Q_1, I, {\mathcal P}_s \right]
 & = &
 \mathrm{Res}_{{\mathcal P}_1, \dots, {\mathcal P}_r } \differentialform_{\mu_0,\dots,\mu_{\ND}}\left[Q_2, I, {\mathcal P}_s \right],
\eq
since we assumed simple poles and terms regular on $I$ will vanish under the residue map.
Among the pre-images of the residue map, we may single out the one where the numerator polynomial $Q$ is reduced
modulo $I$:
\bq
\label{notation_preimage}
 \differentialform_{\mu_0,\dots,\mu_{\ND}}\left[Q\right],
\eq
with $Q = Q \mod I$.
There is a one-to-one map between the residues and these special pre-images.
When there is no risk of confusion, we will use the shorter notation of the pre-image of eq.~(\ref{notation_preimage}) instead of the longer notation of eq.~(\ref{notation_localised}).

We treat cases with increasing complexity.
We start in section~\ref{sect:divisor_equal_Baikov_var} with the simplest case, where $\Divisor$ equals one Baikov variable.
In section~\ref{sect:localisation_on_single_divisor} we discuss the case of a single divisor, not necessarily equal to one Baikov variable.
In section~\ref{sect:localisation_on_multiple_divisors} we treat the general case, where we take multiple residues and localise on the intersection of multiple divisors.
In section~\ref{sect:localisation_on_a_point}, we discuss the special case where the localisation on multiple divisors leads
to a localisation on a point.

\subsubsection{Localisation on $z_{\NV}=0$}
\label{sect:divisor_equal_Baikov_var}

We start from the twist function
\bq
 U & = & \prod\limits_{k=0}^{\ND} \Divisor_k^{\alpha_k}.
\eq
We first consider the case where the divisor defining the localisation equals one Baikov variable.
Without loss of generality, we may assume
\bq
 \Divisor_{\ND} & = & z_{\NV}.
\eq
We assume that $\Divisor_{\ND}$ is an even polynomial, hence 
\bq
 \alpha_{\ND} & = & \frac{b_{\ND} \eps}{2}.
\eq
We denote by $V(\langle z_{\NV} \rangle)$ the sub-variety in ${\mathbb C}{\mathbb P}^{\NV}$ defined by $z_{\NV}=0$.
Although $V(\langle z_{\NV} \rangle) \sim {\mathbb C}{\mathbb P}^{\NV-1}$,
the situation for the differential forms is a little bit trickier:
We would like to study the ``residue'' of $\differentialform_{\mu_0,\dots,\mu_{(\ND-1)},1}[Q]$ on $V(\langle z_{\NV} \rangle)$.
We first have to remove 
\bq
 z_{\NV}^{\frac{b_{\ND} \eps}{2}}
\eq
from the twist function in order to be able to take the residue.
However, we can not just remove this factor alone, as the remainder would no longer be homogeneous and well-defined on ${\mathbb C}{\mathbb P}^{\NV}$.
We choose a homogeneous polynomial $\Divisor_s$ of degree $\deg \Divisor_s = d_s$, which we call the scale polynomial,
and define $\differentialform_{\mu_0,\dots,\mu_{(\ND-1)},1}[Q,\langle z_{\NV} \rangle, \Divisor_s]$
by
\bq
 \differentialform_{\mu_0,\dots,\mu_{(\ND-1)},1}\left[Q\right]
 & = & 
 \frac{\prerel}{\prerel^{\mathrm{loc}}}
 \left( \frac{z_{\NV}^{d_s}}{\Divisor_s}\right)^{\frac{b_{\ND} \eps}{2 d_s}}
 \differentialform_{\mu_0,\dots,\mu_{(\ND-1)},1}\left[Q,\langle z_{\NV} \rangle, \Divisor_s \right],
\eq
where
$\prerel$ is the relative prefactor of $\differentialform_{\mu_0,\dots,\mu_{(\ND-1)},1}[Q]$ 
and $\prerel^{\mathrm{loc}}$ is the relative prefactor of $\differentialform_{\mu_0,\dots,\mu_{(\ND-1)},1}[Q,\langle z_{\NV} \rangle, \Divisor_s]$,
computed from the twist function
\bq
 U^{\mathrm{loc}} & = & 
 P_s^{\left(\alpha_s+\frac{b_{\ND} \eps}{2 d_s}\right)}
 \prod\limits_{\substack{k=0 \\ k \neq s}}^{\ND-1} P_k^{\alpha_k}.
\eq
The exponents of the divisors in the original twist function $U$ and in the modified twist function $U^{\mathrm{loc}}$ 
differ for $k \in \{s,\ND\}$.
By construction, the exponent of $P_{\ND}$ in $U^{\mathrm{loc}}$ will be zero.
In order to compute the relative prefactor $\prerel^{\mathrm{loc}}$ we have to define the falling factorial for this case.
As $\mu_{\ND}=1$ is constant, the falling factorial will also be constant for all cases of interest.
We can define it to be any non-zero finite value.
In accordance with section~\ref{sect:localisation_on_a_point} we will use the convention that
\bq 
 (0)_1 & = & -\eps.
\eq
In order to avoid introducing new spurious singularities, it is advantageous to choose
\bq
\label{choice_scale_polynomial}
 \Divisor_s & \in & \{ \Divisor_0, \Divisor_1, \dots, \Divisor_{\ND-1} \}.
\eq
$\differentialform_{\mu_0,\dots,\mu_{(\ND-1)},1}[Q,\langle z_{\NV} \rangle, \Divisor_s]$ is a well-defined differential $\NV$-form on 
${\mathbb C}{\mathbb P}^{\NV}$ with a simple pole along $z_{\NV}=0$.
Hence, we may take the residue at $z_{\NV}=0$, which we denote by
\bq
 \mathrm{Res}_{z_{\NV}} \differentialform_{\mu_0,\dots,\mu_{(\ND-1)},1}[Q,\langle z_{\NV} \rangle, \Divisor_s].
\eq
$\mathrm{Res}_{z_{\NV}} \differentialform_{\mu_0,\dots,\mu_{(\ND-1)},1}[Q,\langle z_{\NV} \rangle, \Divisor_s]$ 
is a differential $(\NV-1)$-form on $V(\langle z_{\NV} \rangle) \sim {\mathbb C}{\mathbb P}^{\NV-1}$.
Let us define $\tilde{Q}$ by
\bq
\label{def_Qtilde}
 \tilde{Q} & = & \left. Q \right|_{z_{\NV}=0} \; = \; Q \mod \langle z_{\NV} \rangle.
\eq
It is clear that 
\bq
 \mathrm{Res}_{z_{\NV}} \differentialform_{\mu_0,\dots,\mu_{(\ND-1)},1}[\tilde{Q},\langle z_{\NV} \rangle, \Divisor_s]
 & = &
 \mathrm{Res}_{z_{\NV}} \differentialform_{\mu_0,\dots,\mu_{(\ND-1)},1}[Q,\langle z_{\NV} \rangle, \Divisor_s],
\eq
as terms regular on $z_{\NV}=0$ have no residue.
From now on, we will always assume that the numerator polynomial is reduced to $\tilde{Q}$ according to eq.~(\ref{def_Qtilde}).

We are interested in the integration-by-parts identities of
$\mathrm{Res}_{z_{\NV}} \differentialform_{\mu_0,\dots,\mu_{(\ND-1)},1}[\tilde{Q},\langle z_{\NV} \rangle, \Divisor_s]$
on $V(\langle z_{\NV} \rangle) \sim {\mathbb C}{\mathbb P}^{\NV-1}$.
From now on, we will assume that we always choose the scale polynomial from the set of the remaining divisors, 
i.e. according to eq.~(\ref{choice_scale_polynomial}).
We will further assume that for $\dim V(\langle z_{\NV} \rangle) > 0$
the choice is generic, i.e. the exponent of $\tilde{P}_s$ in $U^{\mathrm{loc}}$ is not an integer.
The case $\dim V(\langle z_{\NV} \rangle) = 0$ is special and will be treated in section~\ref{sect:localisation_on_a_point}.
With this assumption
the integration-by-parts identities
are then given for $j \in \{0,\dots,\NV-1\}$ by
\bq
\label{ibp_zn_eq_0_localised}
 0 & = &
 \frac{1}{\eps}
 \mathrm{Res}_{z_{\NV}} \differentialform_{\mu_0,\dots,\mu_i,\dots,\mu_{(\ND-1)},1}\left[\partial_{z_j} \tilde{Q}_+,\langle z_{\NV} \rangle, \Divisor_s \right]
 \nonumber \\
 & &
 +
 \sum\limits_{i=0}^{\ND-1} 
 \mathrm{Res}_{z_{\NV}} \differentialform_{\mu_0,\dots,\mu_i+1,\dots,\mu_{(\ND-1)},1}\left[\tilde{Q}_+ \cdot \left( \partial_{z_j} \tilde{P}_i \right),\langle z_{\NV} \rangle, \Divisor_s  \right],
\eq
where $\tilde{Q}_+=\left. Q_+ \right|_{z_{\NV}=0}$.
A few remarks are in order:
\begin{enumerate}

\item Note that we exclude $j=\NV$ in eq.~(\ref{ibp_zn_eq_0_localised}).
We are only interested in derivatives within the hyperplane $V(\langle z_{\NV} \rangle)$, not in derivatives normal to it.

In the full system (without localisation)
we have additional integration-by-parts identities (namely the ones with $j=\NV$), for example
\bq
\lefteqn{
 \differentialform_{\mu_0,\dots,\mu_{(\ND-1)},1}\left[Q_+ \right]
 = } & &
 \nonumber \\
 & &
 - \frac{1}{\eps}
 \differentialform_{\mu_0,\dots,\mu_i,\dots,\mu_{(\ND-1)},0}\left[\partial_{z_{\NV}} Q_+\right]
 -
 \sum\limits_{i=0}^{\ND-1} 
 \differentialform_{\mu_0,\dots,\mu_i+1,\dots,\mu_{(\ND-1)},0}\left[Q_+ \cdot \left( \partial_{z_{\NV}} P_i \right) \right].
\eq

\item Note that even if we start with a minimal twist function $U$ and irreducible polynomials $\Divisor_k$, 
the modified twist function $U^{\mathrm{loc}}$ need not be minimal and the polynomials $\tilde{P}_k$ need not be irreducible.

\item The integral reduction tables 
for $\mathrm{Res}_{z_{\NV}} \differentialform_{\mu_0,\dots,\mu_{(\ND-1)},1}[\tilde{Q},\langle z_{\NV} \rangle, \Divisor_s]$ 
on $V(\langle z_{\NV} \rangle)$ depend on the choice for the scale polynomial $\Divisor_s$.

\item In general, the integral reductions 
for $\mathrm{Res}_{z_{\NV}} \differentialform_{\mu_0,\dots,\mu_{(\ND-1)},1}[\tilde{Q},\langle z_{\NV} \rangle, \Divisor_s]$ 
on $V(\langle z_{\NV} \rangle)$
are not related to the ones for $\differentialform_{\mu_0,\dots,\mu_{(\ND-1)},1}[Q]$ on ${\mathbb C}{\mathbb P}^{\NV}$, 
as they correspond to different twist functions $U$ and $U^{\mathrm{loc}}$.
The only information, which we carry from $V(\langle z_{\NV} \rangle)$ to ${\mathbb C}{\mathbb P}^{\NV}$ is the following:
If $\mathrm{Res}_{z_{\NV}} \differentialform_{\mu_0,\dots,\mu_{(\ND-1)},1}[\tilde{Q},\langle z_{\NV} \rangle, \Divisor_s]$ is a master integrand
on $V(\langle z_{\NV} \rangle)$, then we take
$\differentialform_{\mu_0,\dots,\mu_{(\ND-1)},1}[\tilde{Q}]$ as a preferred master candidate on ${\mathbb C}{\mathbb P}^{\NV}$.

\item On the localisation $z_{\NV}=0$, we only consider differential forms which have a simple pole at $z_{\NV}=0$,
i.e. the differential forms with $\mu_{\ND}=1$.
We do not consider differential forms, which have higher poles at $z_{\NV}=0$,
i.e. we exclude differential forms with $\mu_{\ND}>1$.
Note that differential forms with higher poles are not generated by the integration-by-parts identities
given in eq.~(\ref{ibp_zn_eq_0_localised}).

\item On the localisation $z_{\NV}=0$, differential forms with $\mu_{\ND}=0$ are automatically zero due to the residue operation.
This implies that we may restrict all numerator polynomials $Q$ and all seeds $Q_+$ to
\begin{align}
 \tilde{Q} & = Q \mod \langle z_{\NV} \rangle,
 &
 \tilde{Q}_+ & = Q_+ \mod \langle z_{\NV} \rangle.
\end{align}

\end{enumerate}

\subsubsection{Localisation on a single divisor}
\label{sect:localisation_on_single_divisor}

We now consider the case that we localise on $\Divisor_{\ND}=0$, but do not assume that $\Divisor_{\ND}$ equals one Baikov variable.
For the derivation of the integration-by-parts identities on the divisor $\Divisor_{\ND}=0$, we will assume that
\bq
 \frac{\partial \Divisor_{\ND}}{\partial z_{\NV}} & \neq & 0,
\eq
if $\Divisor_{\ND}$ does not depend on $z_{\NV}$, we replace $z_{\NV}$ by a Baikov variable it depends on.
We consider again the objects
\bq
 \differentialform_{\mu_0, \dots, \mu_{\ND-1}, 1}\left[Q,\langle \Divisor_{\ND} \rangle, \Divisor_s \right],
\eq
now defined by
\bq
 \differentialform_{\mu_0,\dots,\mu_{(\ND-1)},1}\left[Q\right]
 & = & 
 \frac{\prerel}{\prerel^{\mathrm{loc}}}
 \left( \frac{\Divisor_{\ND}^{d_s}}{\Divisor_s^{d_{\ND}}}\right)^{\frac{b_{\ND} \eps}{2 d_s}}
 \differentialform_{\mu_0,\dots,\mu_{(\ND-1)},1}\left[Q,\langle \Divisor_{\ND} \rangle, \Divisor_s \right].
\eq
The modified twist function $U^{\mathrm{loc}}$ is given by
\bq
 U^{\mathrm{loc}} & = & 
 P_s^{\left(\alpha_s+\frac{d_{\ND} b_{\ND} \eps}{2 d_s}\right)}
 \prod\limits_{\substack{k=0 \\ k \neq \ND}}^{\ND-1} P_k^{\alpha_k}.
\eq
We need to work out the integration-by-parts identities on the localisation.
These are the ones where we take a derivative within the hypersurface defined by $\Divisor_{\ND}=0$, but not the one transversal to it.
We work temporarily in the chart $z_0=1$.
The equation $\Divisor_{\ND}=0$ defines implicitly a function $z_\NV=f(z_1,\dots,z_{\NV-1})$ and we use $(z_1,\dots,z_{\NV-1})$ as coordinates
on the hypersurface.
With the help of the implicit function theorem, we obtain for the derivatives within the hypersurface
\bq
\label{derivative_in_hyperplane}
 \frac{\partial}{\partial z_j} + \frac{\partial f}{\partial z_j} \frac{\partial}{\partial z_{\NV}}
 & = &
 \frac{\partial}{\partial z_j} - \left( \frac{\partial \Divisor_{\ND}}{\partial z_{\NV}} \right)^{-1} \frac{\partial \Divisor_{\ND}}{\partial z_j} \frac{\partial}{\partial z_{\NV}},
 \;\;\;\;\;\;
 j \; \in \; \{1,\dots,\NV-1\}.
\eq
Hence, the integration-by-parts identities in the chart $z_0=1$ of the localisation $\Divisor_{\ND}=0$ read
\bq
\label{ibp_localised_Pn}
 0 & = &
 \frac{1}{\eps}
 \mathrm{Res}_{\Divisor_{\ND}} \differentialform_{\mu_0 \dots \mu_i \dots \mu_{\ND-1}, 1}\left[\left(\partial_{z_{\NV}} \Divisor_{\ND}\right) \left(\partial_{z_j} Q_+\right)-\left(\partial_{z_j} \Divisor_{\ND}\right) \left(\partial_{z_{\NV}} Q_+\right),\langle \Divisor_{\ND} \rangle, \Divisor_s \right]
 \\
 & &
 +
 \sum\limits_{i = 0}^{\ND-1} 
 \mathrm{Res}_{\Divisor_{\ND}} \differentialform_{\mu_0 \dots (\mu_i+1) \dots \mu_{\ND-1}, 1}\left[Q_+ \left( \left(\partial_{z_{\NV}} \Divisor_{\ND}\right) \left( \partial_{z_j} P_i \right) - \left(\partial_{z_j} \Divisor_{\ND}\right) \left( \partial_{z_{\NV}} P_i \right) \right),\langle \Divisor_{\ND} \rangle, \Divisor_s \right],
 \nonumber
\eq
where
\bq
\label{degree_Q_plus_localised}
 \deg Q_+
 & = &
 \sum\limits_{i = 0}^{\ND-1} \mu_i d_i - d_U - \NV + 1
\eq
and $j \in \{1,\dots,\NV-1\}$.
Integration-by-parts identities in other charts may be worked out along the same lines.

Again, we may reduce all numerator polynomials $Q$ modulo $\Divisor_{\ND}$
and it is sufficient to restrict the numerators $Q_+$ of the seeds to polynomials, which are not
reducible by $\Divisor_{\ND}$.
The cancellation identities take the following form: 
\bq
\label{eq_cancellation_localised_Pn}
\lefteqn{
 \mathrm{Res}_{\Divisor_{\ND}} \differentialform_{\mu_0, \dots, (\mu_j+1), \dots, \mu_{\ND-1}, 1}\left[\left( \Divisor_j \cdot Q \right) \mod \langle \Divisor_{\ND} \rangle,\langle \Divisor_{\ND} \rangle, \Divisor_s   \right]
 = } & &
 \nonumber \\
 & &
 \frac{1}{\eps}
 \frac{\prerel^{\mathrm{loc},(j)}}{\prerel^{\mathrm{loc}}}
 \mathrm{Res}_{\Divisor_{\ND}} 
 \differentialform_{\mu_0, \dots, \mu_j, \dots, \mu_{\ND-1}, 1}\left[Q\mod \langle \Divisor_{\ND} \rangle,\langle \Divisor_{\ND} \rangle, \Divisor_s  \right].
\eq
We may look at eq.~(\ref{derivative_in_hyperplane}) also from a slightly different perspective:
This equation actually defines a logarithmic vector field \cite{Bohm:2017qme} along $\Divisor_{\ND}=0$, since
\begin{align}
\left( \frac{\partial \Divisor_{\ND}}{\partial z_{\NV}} \frac{\partial }{\partial z_j} -\frac{\partial \Divisor_{\ND}}{\partial z_{j}} \frac{\partial }{\partial z_{\NV}} \right) \Divisor_{\ND}=0 \, .
\end{align}
Therefore, we can define the following $(\NV-1)$-form
\bq
 \Xi
 & = &
 \frac{Q_+}{\left(\NV-1\right)! \; \Divisor_{\ND} \prod\limits_{k=1}^{\ND-1} \Divisor_k^{\mu_k}} 
 \;\;
 \sum\limits_{j_0,j_1,\dots,j_{\NV}}
 z_{j_0} \frac{\partial \Divisor_{\ND} }{\partial z_{j_1}} \eps_{j_0 j_1 \dots, j_{\NV}} dz_{j_2} \wedge \dots \wedge dz_{j_{\NV}}.
\eq
The covariant derivative of $\Xi$ gives the integration-by-parts identities of eq.~(\ref{ibp_localised_Pn}).

\subsubsection{Localisation on multiple divisors}
\label{sect:localisation_on_multiple_divisors}

We now consider the general case, where we take $r$ consecutive residues.
We follow the same ideas as in the previous two subsections.
The case of $r$ consecutive residues is more cumbersome only from a notational perspective.
After we have taken the first residue, an odd polynomial may become a perfect square,
thus turning into an even polynomial.
For this reason, we rename the divisors where we take the residues to ${\mathcal P}_1, \dots, {\mathcal P}_r$.
We localise on the ideal
\bq
 {\mathcal I}
 & = &
 \langle {\mathcal P}_1, \dots, {\mathcal P}_r \rangle.
\eq
We temporarily work in a chart and without loss of generality we consider the chart $z_0=1$.
We will assume that the determinant of the $(r \times r)$-matrix
\bq
 {\mathcal J}_{ij}
 & = &
 \frac{\partial {\mathcal P}_i}{\partial z_{\NV-r+j}},
 \;\;\;\;\;\;
 i,j \in \{1,\dots,r\},
\eq
is non-zero.
If this is not the case, we consider a different chart.
With this assumption, we may again use the implicit function theorem and obtain for the derivatives within the variety 
defined by the ideal ${\mathcal I}$
\bq
 \det\left({\mathcal J}\right) \frac{\partial}{\partial z_j} 
 - \sum\limits_{k,l=1}^r \left( \mathrm{adj} {\mathcal J} \right)_{kl} \frac{\partial {\mathcal P}_l}{\partial z_j} \frac{\partial}{\partial z_{n-r+k}},
 \;\;\;\;\;\;
 j \in \{1,\dots,\NV-r\},
\eq
where $\mathrm{adj} {\mathcal J}$ denotes the adjoint matrix of ${\mathcal J}$.
Applying these differential operators to seeds gives us the integration-by-parts identities 
in the chart $z_0=1$ of the localisation defined by the ideal ${\mathcal I}$:
\bq
\label{ibp_localised}
 0
 & = &
 \frac{1}{\eps}
 \mathrm{Res}_{{\mathcal P}_1, \dots, {\mathcal P}_r} \differentialform_{\mu_0 \dots \mu_i \dots \mu_{\ND}}\left[\hat{Q},I,\Divisor_s\right]
 +
 \sum\limits_{i=0}^{\ND} 
 \mathrm{Res}_{{\mathcal P}_1, \dots, {\mathcal P}_r} \differentialform_{\mu_0 \dots (\mu_i+1) \dots \mu_{\ND}}\left[\hat{Q}_i,I,\Divisor_s\right],
\eq
with
\bq
 \hat{Q}
 & = &  
 \det\left({\mathcal J}\right) \left( \partial_{z_j} Q_+ \right)
 - \sum\limits_{k,l=1}^r \left( \mathrm{adj} {\mathcal J} \right)_{kl} \left( \partial_{z_j} {\mathcal P}_l \right) \left( \partial_{z_{n-r+k}} Q_+ \right),
 \nonumber \\
 \hat{Q}_i
 & = &
 Q_+ \left[
 \det\left({\mathcal J}\right) \left( \partial_{z_j} P_i \right)
 - \sum\limits_{k,l=1}^r \left( \mathrm{adj} {\mathcal J} \right)_{kl} \left( \partial_{z_j} {\mathcal P}_l \right) \left( \partial_{z_{n-r+k}} P_i \right) 
 \right],
\eq
where
\bq
 \deg Q_+
 & = &
 \sum\limits_{i = 0}^{\ND} \mu_i d_i - d_U - \NV + r - \sum\limits_{j=1}^r \deg {\mathcal P}_j
\eq
and $j \in \{1,\dots,\NV-r\}$.
Integration-by-parts identities in other charts may be worked out along the same lines.

We restrict to differential forms with simple poles along ${\mathcal P}_1, \dots, {\mathcal P}_r$,
i.e. in any chart we have that
\bq 
\label{localisation_identity}
 {\mathcal P}_1 \cdot \dots \cdot {\mathcal P}_r
 \differentialform_{\mu_0 \dots \mu_{\ND}}\left[Q\right]
\eq
has no residue on the $r$-fold localisation defined by the ideal ${\mathcal I}$.

In addition, we may reduce the polynomials in the numerator modulo the ideal ${\mathcal I}$.
For this task, we compute a Gr\"obner basis of ${\mathcal I}$.
The cancellation identities then take the following form: 
\bq
\label{eq_cancellation_localised}
\lefteqn{
 \mathrm{Res}_{{\mathcal P}_1, \dots, {\mathcal P}_r} \differentialform_{\mu_0 \dots (\mu_j+1) \dots \mu_{\ND}}\left[\left( \Divisor_j \cdot Q \right) \mod I,I,\Divisor_s \right]
 = } & &
 \nonumber \\
 & &
 \frac{1}{\eps}
 \frac{\prerel^{\mathrm{loc},(j)}}{\prerel^{\mathrm{loc}}}
 \mathrm{Res}_{{\mathcal P}_1, \dots, {\mathcal P}_r} \differentialform_{\mu_0 \dots \mu_j \dots \mu_{\ND}}\left[Q \mod I,I,\Divisor_s\right].
\eq
 
\subsubsection{Localisation on a point}
\label{sect:localisation_on_a_point}

The recursion starts with integrands localised on a point.
It is worth examining this case in detail.

We first investigate the case where we start from a zero-dimensional Baikov representation.
This corresponds to ${\mathbb C}{\mathbb P}^0$, which is a point.
In homogeneous coordinates, we denote this point by $[z_0]$ with $z_0 \neq 0$.
The $0$-form $\eta$ is given by
\bq
 \eta & = & z_0.
\eq
There is only one non-trivial irreducible homogeneous polynomial
\bq
 \Divisor_0 & = & z_0.
\eq
The twist is simply
\bq
 U & = & 1.
\eq
We may write the twist as
\bq
 U \; = \; P_0^{\alpha_0}
 & \mbox{with} &
 \alpha_0 \; = \; 0.
\eq
As already briefly mentioned in section~\ref{sect:projective_space}, in zero dimensions we no longer have
$\alpha_0 \neq {\mathbb Z}$.
In this subsection, we carefully investigate the implications of this fact.
 
The rational function $\hat{\Phi}$ has to be of degree $(-1)$.
The differential $0$-forms are of the form
\bq
 \differentialform_{\mu_0}\left[Q\right]
 & = &
 \prebaikov \; \preall \;
 \frac{Q}{\Divisor^{\mu_0}}
 \eta,
\eq
with $\deg Q = \mu_0-1$ and $\mu_0 \in {\mathbb N}$.
Note that $\mu_0 = 0$ is not allowed, since $\deg \hat{\Phi} = -1$.
The clutch prefactor is given by
\bq
 \preclutch & = & \frac{1}{\eps^{\mu_0}}.
\eq
The relative prefactor is a little bit trickier: 
We cannot simply take the limit $\alpha \rightarrow 0$ of the standard definition, as this would give zero.
We determine the relative prefactor from the limit
\bq
 \lim\limits_{\eps \rightarrow 0} \left(-\frac{1}{2} b_0 \eps \right)^{-1} \prerel
 & = &
 \left(-1\right)^{\mu_0} \left(\mu_0-1\right)!.
\eq
The factor $-\frac{1}{2} b_0$ is not essential and we set
\bq
 \prerel
 & = & 
 \eps \left(-1\right)^{\mu_0} \left(\mu_0-1\right)!.
\eq
With this definition, 
the integration-by-parts identities and the cancellation identities coincide and read
\bq
\label{ibp_point}
 \frac{\mu_0}{\eps} \differentialform_{\mu_0}\left[z_0^{\mu_0-1}\right] + \differentialform_{\mu_0+1}\left[z_0^{\mu_0}\right] & = & 0.
\eq
Thus, there is one master integrand, the one with minimal $|\mu|$, which is given by
\bq
 \differentialform_{1}\left[1\right].
\eq
Note that although ${\mathbb C}{\mathbb P}^0$ is a point, there is still an integration-by-parts identity, the one given
by eq.~(\ref{ibp_point}).
This can be understood as follows: ${\mathbb C}{\mathbb P}^0$ equals ${\mathbb C}^\times / \sim$, where $z \sim z'$ if $z = \lambda z'$.
The integration-by-parts identity corresponds to the direction which is moded out by the equivalence relation $\sim$.

We now turn back to a general $\NV$-dimensional Baikov representation.
Localising on an ideal
\bq
 {\mathcal I}
 & = &
 \langle {\mathcal P}_1, \dots, {\mathcal P}_{\NV} \rangle
\eq
corresponds to localisation on $N$ points. From the remarks above, it is clear that in the case $N=1$, i.e. 
on the localisation of a single point,
we have one master integrand.
We take this integrand to be the simplest among the ones which have a non-zero residue at this point.
For the localisation on a zero-dimensional variety consisting of $N$ points, we take the $N$ simplest integrands such that the $(N\times N)$-period matrix (i.e. the matrix of residues) is non-degenerate.
\begin{myexample}
We illustrate this with a simple example.
We consider in ${\mathbb C}{\mathbb P}^1$ the twist
\bq
 U\left(z_0,z_1\right)
 & = &
 z_0^{-2\eps} \left( z_1^2 + z_0 z_1 + z_0^2 \right)^\eps,
\eq
together with $\prebaikov=\preabs=1$.
We have
\begin{align}
 P_0 & = z_0,
 &
 P_1 & = z_1^2 + z_0 z_1 + z_0^2.
\end{align}
The objects on ${\mathbb C}{\mathbb P}^1$ are
\bq
 \differentialform_{\mu_0 \mu_1}\left[Q\right]
 & = &
 \preclutch \;
 \prerel \; 
 U\left(z_0,z_1\right) \hat{\Phi}_{\mu_0 \mu_1}\left[Q\right]
 \eta,
 \;\;\;\;\;\;
 \hat{\Phi}_{\mu_0 \mu_1}\left[Q\right] \; = \; \frac{Q}{P_0^{\mu_0} P_1^{\mu_1}},
\eq
with $\deg Q = \mu_0+2\mu_1-2$.
From an analysis of the critical points, we expect $\dim H_\omega^1 = 1$.

Let us first look at the localisation $P_0=0$. We consider the differential one-forms
$\differentialform_{1\mu_1}[Q]$ and the corresponding zero-forms $\mathrm{Res}_{z_0}\differentialform_{1\mu_1}[\tilde{Q},\langle z_0\rangle,\Divisor_1]$.
From the homogeneity condition, we need $\mu_1 \ge 1$.
The simplest integrand with a non-vanishing residue at $P_0=0$ is
\bq
\label{masters_point_1}
 \differentialform_{1 1}\left[z_1,\langle z_0\rangle,\Divisor_1 \right].
\eq
Let us now look at the localisation $P_1=0$.
We consider the differential one-forms $\differentialform_{\mu_0 1}[Q]$
and the corresponding zero-forms $\mathrm{Res}_{P_1}\differentialform_{\mu_0 1}[\tilde{Q},\langle \Divisor_1 \rangle,z_0]$.
The simplest integrands with a non-degenerate residue matrix at the two points of $P_1=0$ are
\bq
\label{masters_point_2}
 \differentialform_{0 1}\left[1,\langle \Divisor_1 \rangle,z_0\right], 
 \;\;\;
 \differentialform_{1 1}\left[z_1,\langle \Divisor_1 \rangle,z_0 \right].
\eq
After having done all sub-problems, we go back to the original problem and consider the integrands on ${\mathbb C}{\mathbb P}^1$.
From eq.~(\ref{masters_point_1}) and eq.(\ref{masters_point_2}) we infer that 
\bq
 \differentialform_{0 1}\left[1\right], 
 \;\;\;
 \differentialform_{1 1}\left[z_1\right]
\eq
are preferred candidates for master integrands on ${\mathbb C}{\mathbb P}^1$.
We express this preference by assigning a number $a=-w<0$ to the preferred candidates and $a=0$ to all
other objects.
In this case we assign $a=-2$ to $\differentialform_{1 1}[z_1]$ and $\differentialform_{0 1}[1]$.

Among the set of integration-by-parts identities on ${\mathbb C}{\mathbb P}^1$, we have
\bq
 \frac{1}{\eps}
 \differentialform_{0 1}\left[1\right]
 +
 \differentialform_{1 1}\left[ 2 z_1 + z_0 \right]
 +
 \differentialform_{0 2}\left[\left( 2 z_0 + z_1 \right)  \left( 2 z_1 + z_0 \right) \right]
 & = & 0,
 \nonumber \\
 \frac{1}{\eps}
 \differentialform_{0 1}\left[1\right]
 +
 \differentialform_{0 2}\left[\left( 2 z_1 + z_0 \right) \left( 2 z_0 + z_1 \right) \right]
 & = & 0.
\eq
These two equations imply
\bq
 \differentialform_{1 1}\left[ 2 z_1 + z_0 \right]
 & = & 0,
\eq
and together with the cancellation identity
\bq
 \differentialform_{1 1}\left[z_0 \right]
 & = &
 -2
 \differentialform_{0 1}\left[1\right]
\eq
it follows that $\differentialform_{1 1}[z_1]=\differentialform_{0 1}[1]$.
The algorithm will eliminate $\differentialform_{1 1}[z_1]$ (with $|\mu|=2$) 
in favour of $\differentialform_{0 1}[1]$ (with $|\mu|=1$) 
and we obtain one master integrand given by
\bq 
 \differentialform_{0 1}\left[1\right].
\eq
Remark: This example also shows that twisted cohomology ``sees'' the internal structure of $z_1^2+z_1+1$.
The dlog-form given in the chart $z_0=1$ by
\bq
 \frac{\left(2z_1+1\right)dz_1}{z_1^2+z_1+1},
\eq
is in the cohomology class of zero.
\end{myexample}


\subsection{Geometries}
\label{sect:geometry}

The algorithms we present are agnostic to the specific geometry associated with the Feynman integrals under consideration.
This does not mean that the algorithms are not based on geometric properties.
By using methods from twisted cohomology and Hodge theory, we are using methods which treat all specific geometries
in an (abstract) unified setting.
Although not strictly needed, it can be helpful to distil the specific geometries associated with Feynman integrals.
We are mainly interested in the geometries associated with the maximal cut of Feynman integrals.
We distinguish the twisted geometry, the nodal geometry and the square-free geometry.
The twisted geometry contains the full information and is determined by the twist function $U$.
For the nodal geometry and the square-free geometry only the $(\eps=0)$-part of the twist function is relevant.
The nodal geometry is determined by all polynomials entering the twist function, even and odd.
The square-free geometry is defined by the odd polynomials only.
A statement that a certain Feynman integral corresponds to a certain geometry usually refers to the square-free geometry.
We would like to stress that this neglects sub-geometries due to localisations on even polynomials as well as the $\eps$-dependent part in the twist function.

\subsubsection{The twisted geometry}

The twisted geometry carries the most information.
It is defined as in section~\ref{sect:twisted_cohomology} by the Baikov representation.
We consider the space
\bq
 {\mathbb C}{\mathbb P}^\NV-D,
\eq
where the divisor $D$ is defined in eq.~(\ref{def_divisior_D}),
together with the connection
\bq
 \nabla_F & = & d_F + \omega,
\eq
defined in eq.~(\ref{def_nabla_F}). 
We are interested in the twisted cohomology group
\bq
 H^\NV_\omega,
\eq
which is defined as the set of equivalence classes of the $\nabla_F$-closed differential $(\NV,0)$-forms holomorphic
on ${\mathbb C}{\mathbb P}^\NV-D$ modulo the $\nabla_F$-exact ones.

\subsubsection{The nodal geometry}

The nodal geometry carries less information. We remove the multivaluedness entering through the twist function $U$ and
the dependence on the dimensional regulator $\eps$.
However, we keep the information on all polynomials entering the twist function, even and odd.

We define the nodal geometry $Y_{\mathrm{nodal}}$ as a hypersurface
in a weighted projective space $X_{\mathrm{nodal}} = {\mathbb C}{\mathbb P}^{\NV+1}_{1,\dots,1,h}$
by the equation
\bq
\label{def_hypersurface}
 y^2 - F\left(z_0,z_1,...,z_{\NV}\right) & = & 0,
\eq
where
\bq
 F\left(z_0,z_1,...,z_{\NV}\right)
 & = &
 \prod\limits_{i \in I^0_{\mathrm{even}}} \left[ \Divisor_i\left(z\right) \right]^2
 \prod\limits_{i \in I^0_{\mathrm{odd}}} \Divisor_i\left(z\right).
\eq
The weight $h$ of the coordinate $y$ is given by
\bq
 h & = & 
 \prod\limits_{i \in I^0_{\mathrm{even}}} d_i
 +
 \frac{1}{2} \prod\limits_{i \in I^0_{\mathrm{odd}}} d_i.
\eq
Note that $\prod\limits_{i \in I^0_{\mathrm{odd}}} d_i$ is always even, this follows from the definition of $a_0$ in eq.~(\ref{def_a_0_b_0}).
The nodal geometry $Y_{\mathrm{nodal}}$ carries a mixed Hodge structure.
\begin{myexample}
We consider again example 2:
In ${\mathbb C}{\mathbb P}^1$ we had the polynomials
\begin{align}
 P_0 & = z_0,
 &
 P_1 & = z_1-x_2z_0, 
 &
 P_2 & = z_1+(4-x_2)z_0,
 &
 P_3 & = \left( z_1 + z_0 \right)^2 - 4 \left[ x_2 +\frac{\left(1-x_2\right)^2}{x_1} \right] z_0^2
\end{align}
together with $I^0_{\mathrm{even}}=\{0\}$ and $I^0_{\mathrm{odd}}=\{1,2,3\}$.
The nodal geometry $Y_{\mathrm{nodal}}$ is given by the hypersurface
\bq
 y^2 & = & P_0^2 P_1 P_2 P_3
\eq
in ${\mathbb C}{\mathbb P}^2_{113}$.
This is a nodal curve of arithmetic genus $2$ and geometric genus $1$.
\end{myexample}

\subsubsection{The square-free geometry}

The square-free geometry carries even less information. We remove the information on the even polynomials.
We define the square-free geometry $Y_{\mathrm{square-free}}$ as a hypersurface
in a weighted projective space $X_{\mathrm{square-free}} = {\mathbb C}{\mathbb P}^{\NV+1}_{1,\dots,1,h}$
by the equation
\bq
\label{def_hypersurface_square_free}
 y^2 - F\left(z_0,z_1,...,z_{\NV}\right) & = & 0,
\eq
where
\bq
 F\left(z_0,z_1,...,z_{\NV}\right)
 & = &
 \prod\limits_{i \in I^0_{\mathrm{odd}}} \Divisor_i\left(z\right).
\eq
The weight $h$ of the coordinate $y$ is now given by
\bq
 h & = & 
 \frac{1}{2} \prod\limits_{i \in I^0_{\mathrm{odd}}} d_i.
\eq
\begin{myexample}
We continue with example $2$ and example $11$: The square-free geometry $Y_{\mathrm{square-free}}$ is given by the hypersurface
\bq
 y^2 & = & P_1 P_2 P_3
\eq
in ${\mathbb C}{\mathbb P}^2_{112}$.
This is an elliptic curve.
\end{myexample}

\subsubsection{Square-free sub-geometries}

By first considering all possible localisations and then for each localisation, the square-free geometry on this localisation, we obtain the square-free sub-geometries.
These can be non-trivial, as the following example shows.
\begin{myexample}
\label{example_square_free_sub_geometry}
We continue with example~\ref{example_charts}.
We consider in ${\mathbb C}{\mathbb P}^2$ the polynomials
\begin{align}
 P_0
 & =
 z_0,
 &
 P_2
 & =
 z_1 z_2 + x_1 z_0 z_2 - z_1^2,
 \nonumber \\
 P_1
 & =
 z_1,
 &
 P_3
 & = 
 \left[ x_2 z_1 - \left(1+x_2\right) z_2 + \left(x_1-x_2\right) z_0\right]^2 
 + 4 x_1 \left(1+x_2\right) z_0 z_2
\end{align}
together with $I_{\mathrm{even}}^0=\{0,1,2\}$ and $I_{\mathrm{odd}}^0=\{3\}$.
On the localisation $P_2=0$, we obtain the square-free sub-geometry defined by the hypersurface
\bq
 y^2 & = & 
 z_1^4
 +2 (x_1+x_2+x_1 x_2) z_0 z_1^3
 + (3 x_1^2+x_2^2+6 x_2 x_1^2+x_2^2 x_1^2-2 x_2^2 x_1) z_0^2 z_1^2
 \nonumber \\
 & &
 +2 x_1 (x_1+x_1 x_2-x_2) (x_1-x_2) z_0^3 z_1 
 +x_1^2 (x_1-x_2)^2 z_0^4
\eq
in ${\mathbb C}{\mathbb P}^2_{112}$.
This is an elliptic curve.
This example will be discussed in more detail in section~\ref{sect:sector_93_moeller}.
\end{myexample}


\section{The method}
\label{sect:method}

In this section, we give a detailed description of the algorithm.
We start in section~\ref{sect:order_relation} with the modifications to the Laporta algorithm.
The two main modifications are that we use a new geometric order relation and that we first recursively consider all possible localisations,
in order to get preferred candidates for the master integrands.
In section~\ref{sect:step_1}, we turn to the system of differential equations. 
We introduce the concept of an $\Fgen^\bullet$-compatible differential equation.
We observe that the reduction algorithm from section~\ref{sect:order_relation} gives us an intermediate basis $J$
with an $\Fgen^\bullet$-compatible differential equation.
In section~\ref{sect:step_2}, we prove that we can always transform an $\Fgen^\bullet$-compatible differential equation
to an $\eps$-factorised form.


\subsection{The Laporta algorithm and the order relation}
\label{sect:order_relation}

We first construct a basis of master integrands for the vector space $\Hgen^{\NV}_{\omega}$.
We do this by first considering all possible localisations, 
starting with localisations on points, followed by localisations on curves, followed by surfaces, etc., until we reach the full system.
The essential idea is that master integrands on a localisation will define preferred candidates for master integrands in the next higher dimension.
 
Technically, we do this as follows:
Each localisation is defined by an ordered sequence ${\mathcal P}_1, \dots, {\mathcal P}_{r'}$, which in turn defines the ideal
\bq
 {\mathcal I}
 & = &
 \langle {\mathcal P}_1, \dots, {\mathcal P}_{r'} \rangle.
\eq
We denote by $V({\mathcal I})$ the variety defined by the ideal ${\mathcal I}$.
The dimension of the variety is given by
\bq
 \dim V({\mathcal I}) & = & \NV - r'.
\eq
We define the $\NV'$-dimensional skeleton of $V({\mathcal I})$ (or $\NV'$-skeleton for short)
to be the union of the varieties $V({\mathcal I}')$ with ${\mathcal I} \subset {\mathcal I}'$ 
(where ${\mathcal I}'$ defines a sub-localisation)
and dimension $\dim V({\mathcal I}') \le \NV'$.
\begin{figure}
\begin{center}
\includegraphics[scale=1.0]{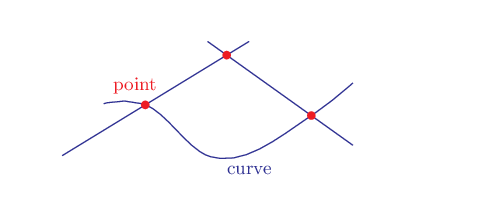}
\end{center}
\caption{
The $1$-skeleton of the example from fig.~\ref{fig:mixed_geometry}.
}
\label{fig:mixed_geometry_skeleton}
\end{figure}
As an example, we show in fig.~\ref{fig:mixed_geometry_skeleton} the $1$-skeleton of the geometry shown in fig.~\ref{fig:mixed_geometry}.

For each localisation, we consider integrands 
$\differentialform_{\mu_0 \dots \mu_{\ND}}[\tilde{Q},I,\Divisor_s]$,
which have a non-zero residue on $V({\mathcal I})$ and simple poles along ${\mathcal P}_1, \dots, {\mathcal P}_{r'}$,
together with the linear relations given by the integration-by-parts identities given for each chart of $V({\mathcal I})$ 
in eq. (\ref{ibp_localised}),
the distribution identities eq.~(\ref{eq_distribution}) and the cancellation identities eq.~(\ref{eq_cancellation_localised}).
Note that we are considering fewer integrands (as the integrands are required to have simple poles and 
a non-zero residue on $V({\mathcal I})$)
and fewer relations (as we only consider the integration-by-parts identities in $V({\mathcal I})$, but not the ones
orthogonal to it).
We may map the differential form $\differentialform_{\mu_0 \dots \mu_{\ND}}[\tilde{Q},I,\Divisor_s]$ (associated with the modified twist function $U^{\mathrm{loc}}$) to the differential form $\differentialform_{\mu_0 \dots \mu_{\ND}}[\tilde{Q}]$ (associated to the original twist function $U$)
by simply forgetting about $I$ and $\Divisor_s$ .
As before, we denote by $r$ the number of residues of $\differentialform_{\mu_0 \dots \mu_{\ND}}[Q]$,
and by $o$ its pole order in the full space ${\mathbb C}{\mathbb P}^{\NV}$.

For the Laporta algorithm, on each localisation, we use the order relation 
\bq
\label{def_laporta_order}
 \laportaorder,
\eq
where the dots stand for further criteria needed to distinguish inequivalent integrands.
In practice, the dots stand for an order on the indices $(\mu_0 \dots \mu_{\ND})$ 
(where the sum of the indices is fixed to $\absmu$)
followed by an order on the numerator polynomial $Q$ (where the degree of the polynomial $Q$ is fixed according to eq.~(\ref{def_d_Q})).
We use the string of integers in eq.~(\ref{def_laporta_order}) together with lexicographic ordering,
i.e. the relation $a_1 < a_2$ implies $\differentialform_1 < \differentialform_2$, with ties broken by $w$, etc.. 
On the localisation $V(I)$ with $\dim V(I)=\NV'$
the variable $a$ is given by
\bq
 a & = & \left\{
 \begin{array}{rl}
 -r-\NV, & \differentialform_{\mu_0 \dots \mu_{\ND}}[Q] \; \mbox{is a preferred candidate from the $(\NV'-1)$-skeleton of $V(I)$},\\
 0, & \mbox{otherwise.} \\
 \end{array}
 \right.
 \nonumber \\
\eq
The purpose of the variable $a$ is to give preference to the master integrands from the 
sub-problems as master integrands of the current problem.
This does not imply that all master integrands from the sub-problems will be master integrands
of the current problem.
As we have more integration-by-parts relations in the current problem, it may occur that some of the master integrands
from the sub-problems are eliminated.

The variable $w$ is defined by $w=\NV+r$. It corresponds to the Hodge weight of the differential form.
The purpose of the variable $w$ is the following: If we run the Laporta algorithm at weight $w$, we already have a basis of master integrands at weight $w'>w$.
These are the ones to which we have assigned $a<0$. Therefore, the basis at higher weights $w'$ is already complete, and we don't want to add any new master integrands with weights $w'>w$.
The use of the variable $w$ as the second order criterion prohibits this and forces the algorithm to choose the additional master integrands from the integrands of weight $w$.

As the third order criterion, we use the pole order $o$. 
We are now at a fixed weight $w$, and we are considering a square-free geometry. 
The algorithm will start to fill up the remaining master integrands by increasing pole order, i.e. the algorithm will first try to pick holomorphic differential forms, followed by the ones with pole order $1$, etc..
This pattern is compatible with the previously known example of the equal-mass $l$-loop banana integral \cite{Pogel:2022vat}.
In this example, the square-free geometry is a Calabi-Yau $(l-1)$-fold. 
The algorithm will start with picking the holomorphic integrand as a master integrand. 
There will be one master integrand for each pole order, up to pole order $(l-1)$.

As the fourth order criterion, we use $\absmu$, giving preference to objects with smaller $\absmu$.

The top-level algorithm to determine the master integrands on a localisation specified by the ideal ${\mathcal I}$
is as follows\footnote{The case ${\mathcal I}=\emptyset$ corresponds to the full system on the maximal cut.}:

\begin{tcolorbox}[breakable]
\refstepcounter{algocounter}
\label{algo:construct_masters}
{\bf Algorithm \thealgocounter: Construction of the masters on a localisation}

\begin{description}

\item{\bf Input:}
The ordered sequence ${\mathcal P}_1, \dots, {\mathcal P}_{r'}$.

\item{\bf Output:}
Assign a value $a<0$ to all master integrands.

\item{\bf Implementation:}
\begin{enumerate}
\item If $\dim V({\mathcal I}) = 0$ return the master integrands as described in section~\ref{sect:localisation_on_a_point} and assign $a=-2\NV$ to those.
\item If $\dim V({\mathcal I}) > 0$: Let $\NV'=\dim V({\mathcal I})$ and $w'=\NV+r'=2\NV-\NV'$.
\begin{enumerate}
\item Recursion: Consider first all sub-localisations of $V({\mathcal I})$ of dimension $(\NV'-1)$.
\item Merge procedure: Merge the master integrands from the $(\NV'-1)$-skeleton of $V({\mathcal I})$. 
\item Find additional masters at weight $w'$.
\end{enumerate}
\end{enumerate}
\end{description}
\end{tcolorbox}

In general, the variety $V({\mathcal I})$ of dimension $\NV'$ may have more than one sub-localisation of dimension $(\NV'-1)$.
In step 2 (b) we merge the master integrands from the sub-localisations of dimension $(\NV'-1)$.
This is necessary because we cannot simply take the union of the master integrands from the sub-localisations of dimension $(\NV'-1)$.
To see this, consider the following situation: Assume that in one sub-localisation we get one master integrand $\differentialform_1$, 
and that $\differentialform_2$ can be reduced to $\differentialform_1$ on this sub-localisation.
Assume further that in a second sub-localisation we obtain $\differentialform_2$ as a master integrand.
In this situation, the merge of the two sub-localisations should give us only one master integrand $\differentialform_1$,
as $\differentialform_1$ is simpler than $\differentialform_2$ according to the ordering criteria, and $\differentialform_2$ can be expressed in terms of 
$\differentialform_1$ on the first sub-localisation.
Note that in this situation, $\differentialform_2$ can be localised on the intersection of the two sub-localisations.

Technically, we do this as follows:
Let us assume that the $(\NV'-1)$-skeleton of $V({\mathcal I})$ is given by
\bq
 V({\mathcal I_1}) \cup V({\mathcal I_2}) \cup \dots \cup V({\mathcal I_t}),
\eq
where each ideal $I_i$ is defined by an ordered sequence ${\mathcal P}_1^{(i)}, \dots, {\mathcal P}_{r'}^{(i)}, {\mathcal P}_{r'+1}^{(i)}$ of $(r'+1)$ polynomials 
\bq
 I_i & = &
 \left\langle {\mathcal P}_1^{(i)}, \dots, {\mathcal P}_{r'}^{(i)}, {\mathcal P}_{r'+1}^{(i)} \right\rangle.
\eq
It may happen that different ordered sequences define the same ideal. 
As an example we have
\bq
 \left\langle z_1, z_2 \right\rangle
 \; = \;
 \left\langle z_2, z_1 \right\rangle
 \; = \;
 \left\langle z_1, z_1+z_2 \right\rangle
 \; = \;
 \left\langle z_1+z_2, z_1 \right\rangle
 \; = \;
 \left\langle z_2, z_1+z_2 \right\rangle
 \; = \;
 \left\langle z_1+z_2, z_2 \right\rangle.
\eq
If the intersection of $V({\mathcal I_{i}})$ and $V({\mathcal I_{j}})$ is non-empty and of dimension $(\NV'-2)$, we denote this intersection by
\bq
 V({\mathcal I_{ij}}) & = & V({\mathcal I_i}) \cap V({\mathcal I_j}).
\eq
The $(\NV'-2)$-skeleton of $V({\mathcal I})$ contains
the union of these intersections, which we write as a union of connected irreducible components $C_k$
\bq
 \bigcup\limits_{(ij)} V({\mathcal I_{ij}})
 & = &
 \bigcup\limits_{k} C_k.
\eq
We set
\bq
 S_k & = & \left\{ \; i \; | \; C_k \subset V\left(I_i\right) \; \right\}.
\eq
$S_k$ is the set of indices $i$ of varieties $V(I_i)$, which intersect at $C_k$.
As at least two varieties intersect at $C_k$ we have $|S_k| \ge 2$. 
We emphasise that we can have $|S_k| > 2$.
At this stage, we have already determined a basis of $C_k$,
coming from a localisation specified by an ordered sequence
${\mathcal P}_1^{(S_k)}, \dots, {\mathcal P}_{r'}^{(S_k)}, {\mathcal P}_{r'+1}^{(S_k)}, {\mathcal P}_{r'+2}^{(S_k)}$ of $(r'+2)$ polynomials.
We set
\bq
 I_{S_k} & = &
 \left\langle {\mathcal P}_1^{(S_k)}, \dots, {\mathcal P}_{r'}^{(S_k)}, {\mathcal P}_{r'+1}^{(S_k)}, {\mathcal P}_{r'+2}^{(S_k)} \right\rangle.
\eq
A basis element $\differentialform$
of $C_k$ will define an integrand on $V({\mathcal I_i})$
for at least one index $i \in S_k$, but it may happen that $\differentialform$ does not define an integrand
for all $i \in S_k$.
It may happen that there are $j \in S_k$ such that
$\differentialform$ does not have a residue on $V({\mathcal I_j})$. 
We will later see examples where this happens in section~\ref{sect:sector_93_moeller} and section~\ref{sect:electron_self_energy}.

The master integrands of $C_k$ provide for $i,j \in S_k$
the communication between $V({\mathcal I_i})$ and $V({\mathcal I_j})$, provided they have non-zero
residues on $V({\mathcal I_i})$ and $V({\mathcal I_j})$.
We obtain the following algorithm:
\begin{tcolorbox}[breakable]
\refstepcounter{algocounter}
{\bf Algorithm \thealgocounter: Merge procedure}

\begin{description}

\item{\bf Input:}
The ordered sequence ${\mathcal P}_1, \dots, {\mathcal P}_{r'}$.

\item{\bf Output:}
Let $w'=\NV+r'$.
Assign a value $a=0$ to all integrands of weight $w''>w'$, which are not masters at weight $w'$.

\item{\bf Implementation:}
\begin{enumerate}
\item Loop over all connected components $C_k$ of dimension $(\NV'-2)$ (as defined above).
\begin{enumerate}
\item If $\mathrm{Res}_{{\mathcal P}_1^{(S_k)}, \dots, {\mathcal P}_{r'+2}^{(S_k)}} \differentialform_{\mu_0 \dots \mu_{\ND}}[\tilde{Q},{\mathcal I_{S_k}},\Divisor_{s_{S_k}}]$
is a master integrand of $C_k$, $i,j \in S_k$ and both
$\mathrm{Res}_{{\mathcal P}_1^{(i)}, \dots, {\mathcal P}_{r'+1}^{(i)}} \differentialform_{\mu_0 \dots \mu_{\ND}}[\tilde{Q},{\mathcal I_{i}},\Divisor_{s_i}]$
and 
$\mathrm{Res}_{{\mathcal P}_1^{(j)}, \dots, {\mathcal P}_{r'+1}^{(j)}} \differentialform_{\mu_0 \dots \mu_{\ND}}[\tilde{Q},{\mathcal I_{j}},\Divisor_{s_j}]$
are non-zero,
add the equation
\bq
 \mathrm{Res}_{{\mathcal P}_1^{(i)}, \dots, {\mathcal P}_{r'+1}^{(i)}} \differentialform_{\mu_0 \dots \mu_{\ND}}[\tilde{Q},{\mathcal I_{i}},\Divisor_{s_i}]
 & = & 
 \mathrm{Res}_{{\mathcal P}_1^{(j)}, \dots, {\mathcal P}_{r'+1}^{(j)}} \differentialform_{\mu_0 \dots \mu_{\ND}}[\tilde{Q},{\mathcal I_{j}},\Divisor_{s_j}].
 \nonumber \\
\eq
\item If $\mathrm{Res}_{{\mathcal P}_1^{(i)}, \dots, {\mathcal P}_{r'+1}^{(i)}} \differentialform_{\mu_0 \dots \mu_{\ND}}[\tilde{Q},{\mathcal I_{i}},\Divisor_{s_i}]$ is not a master integrand of 
$V({\mathcal I_{i}})$, add the equation from the reductions on $V({\mathcal I_{i}})$, which expresses this differential form in terms of the masters
of $V({\mathcal I_{i}})$.
\end{enumerate}

\item Run the Laporta algorithm on this linear system. 

\item Substitute $\mathrm{Res}_{{\mathcal P}_1^{(i)}, \dots, {\mathcal P}_{r'+1}^{(i)}} \differentialform_{\mu_0 \dots \mu_{\ND}}[\tilde{Q},{\mathcal I_{i}},\Divisor_{s_i}]$ by
$\differentialform_{\mu_0 \dots \mu_{\ND}}[\tilde{Q}]$.

\item Set $a=0$ for all non-masters.

\end{enumerate}
\end{description}
\end{tcolorbox}
An implicit assumption for this algorithm is that the equations added in step 1 (b) are generic, i.e. the system has full rank.
A rank drop should throw an exception.

Remark: If $\NV'=\dim V(I)=1$, the $(\NV'-2)$-skeleton of $V(I)$ is empty, and the merge procedure does nothing in this case.

Finally, we specify the sub-routine ``Find additional masters'' appearing as step 2 (c) in algorithm~\ref{algo:construct_masters}:
\begin{tcolorbox}[breakable]
\refstepcounter{algocounter}
{\bf Algorithm \thealgocounter: Find additional masters}

\begin{description}

\item{\bf Input:}
The ordered sequence ${\mathcal P}_1, \dots, {\mathcal P}_{r'}$.

\item{\bf Output:}
Let $w'=\NV+r'$.
Assign the value $a=-w'$ to the additional masters at weight $w'$.

\item{\bf Implementation:}
\begin{enumerate}
\item Generate the linear relations for this localisation.
\item Perform the Gau{\ss} elimination.
\item Set $a=-w'$ for all masters, which have $a=0$.
\item Set $a=0$ for all non-masters.
\end{enumerate}
\end{description}
\end{tcolorbox}

Running algorithm~\ref{algo:construct_masters} with the order criterion of eq.~(\ref{def_laporta_order}) will give a basis of 
the vector space~$\Hgen^{\NV}_{\omega}$.
Using the homomorphism between the vector spaces $V^{\NV}$ and $\Hgen^{\NV}_{\omega}$ from section~\ref{sect:conversion}
we then obtain
a basis of master integrals on the maximal cut.
We may do this for all sectors on the maximal cut bottom-up.

At the level of Feynman integrals and including sub-sectors, we then run for the reduction to master integrals the Laporta algorithm on the full system
with the order relation
\bq
\label{full_order_relation}
 \left( \hat{\nedges}, \hat{N}_{\mathrm{id}}, a,w,o,|\mu|,\dots \right).
\eq
where
\bq
 \left( \hat{\nedges}, \hat{N}_{\mathrm{id}} \right)
 & = &
 \left\{ \begin{array}{ll}
  \left( \nedges', N_{\mathrm{id}}' \right) & \mbox{if $N_{\mathrm{id}}'$ is the smallest sector such that $N_{\mathrm{id}}$ is a super-sector of $N_{\mathrm{id}}'$.} \\
  \left( \nedges, N_{\mathrm{id}} \right) & \mbox{otherwise}. \\
 \end{array}
 \right.
 \nonumber 
\eq
This will yield a basis of master integrals $J=(J_1,J_2,\dots)^T$.
The integers $a,w,o,|\mu|$ for Feynman integrals are determined from the images 
under the homomorphism $\hat{\iota} : A^{\NV}_{\mathrm{all}} \rightarrow \Agen^{\NV}_{\omega}$.
This treatment completely avoids relative twisted cohomology.
Note that the determination of the integers $a,w,o,|\mu|$ needs only to be done once at initialisation
time and only on the maximal cut. No computation beyond the maximal cut is required.
The dots stand for further criteria needed to distinguish inequivalent integrals in the full system.
More concretely, the dots stand for an order on the indices $(\mu_0 \dots \mu_{\ND})$,
followed by an order on the numerator polynomial $Q$,
followed by an order relation on $I_{\nu_1 \dots \nu_{\NE}}$.
With the exception of the last criterion, all other order variables are obtained from
the differential form $\differentialform=\hat{\iota}(I_{\nu_1 \dots \nu_{\NE}})$.
The last order criterion is needed to distinguish inequivalent integrals, which have the same image
$\differentialform=\hat{\iota}(I_{\nu_1 \dots \nu_{\NE}})$. This can be due to sub-sectors (which are not seen on the maximal cut)
or ``missing ISPs''.
The last order criterion enters also as an order criterion on $A^{\NV}_{\mathrm{all}}$ in section~\ref{sect:conversion}.
For the last criterion a standard order relation on Feynman integrals can be used.

This order relation performs significantly better than the standard order relation.
We observe that it eliminates to a large extent the occurrence of spurious polynomials in the denominator.
This holds beyond the maximal cut, and the most drastic improvements are in the coefficients 
of the simplest master integrals
in the reduction of the most complicated sectors.


\subsection{Differential equations, step 1: The basis $J$}
\label{sect:step_1}

In this section we consider a sector of the basis $J$, obtained as described in section~\ref{sect:order_relation},
on the maximal cut.
We are in particular interested in the differential equation on the maximal cut.
We denote by $\differentialform=(\differentialform_1,\dots,\differentialform_{\NF})^T$
the basis of master integrands in $\Hgen^{\NV}_{\omega}$.
In all examples we tested, we observed that if $\differentialform_i \in \mathrm{Gr}_{\Fcomb}^{\NV-\absmu_i} \Agen^{\NV}_\omega$
and $\differentialform_j \in \mathrm{Gr}_{\Fcomb}^{\NV-\absmu_j} \Agen^{\NV}_\omega$, then
\bq
 d_B J\left(\eps,x\right) & = & A\left(\eps,x\right) J\left(\eps,x\right),
\eq
with
\bq
\label{refined_statement}
 A_{ij}\left(\eps,x\right)
 & = &
 \sum\limits_{k=-(\absmu_i-\absmu_j)}^1
 \eps^k A^{(k)}_{ij}\left(x\right).
\eq
We call a differential equation which satisfies eq.~(\ref{refined_statement}) an $\Fgen^\bullet$-compatible differential equation for the filtration $\Fcomb^\bullet$.
An $\Fgen^\bullet$-compatible differential equation implies Griffiths transversality. 
It is, however, a stronger statement, as it requires the differential equation to be in Laurent polynomial form with restrictions on the occurring powers of $\eps$.
We recall that Griffiths transversality \cite{Griffiths:1968i,Griffiths:1968ii} is the statement that $A_{ij}=0$ for $\absmu_j > \absmu_i + 1$.

It is worth elaborating on this observation:
We may split the linear relations among the objects $\differentialform_{\mu_0 \dots \mu_{\ND}}[Q]$
into a set $A_1$, consisting of the integration-by-parts identities and the distribution identities 
and a set $A_2$, consisting of the cancellation identities.
We may run the Laporta algorithm first on the set $A_1$.
For this task, we may set $\eps=1$. 
This is a significant efficiency improvement, as we have one variable less.
This is possible, because in the integration-by-parts identities eq.~(\ref{eq_ibp}) 
and the distribution identities eq.~(\ref{eq_distribution})
the explicit $\eps$-factors are synchronised with $\absmu$.
At the end, we may restore the $\eps$-dependence from $\absmu$ as follows:
Suppose we get for $\eps=1$ a relation like
\bq
 \differentialform_{\mu_0 \dots \mu_{\ND}}[Q]
 & = &
 c_1 \differentialform_{\mu_0' \dots \mu_{\ND}'}[Q']
 +
 c_2 \differentialform_{\mu_0'' \dots \mu_{\ND}''}[Q''].
\eq
Restoring $\eps$ we obtain
\bq
 \differentialform_{\mu_0 \dots \mu_{\ND}}[Q]
 & = &
 c_1
 \eps^{\left|\mu'\right|-\left|\mu\right|}
 \differentialform_{\mu_0' \dots \mu_{\ND}'}[Q']
 +
 c_2
 \eps^{\left|\mu''\right|-\left|\mu\right|}
 \differentialform_{\mu_0'' \dots \mu_{\ND}''}[Q''].
\eq
The same argument also shows that in reducing the system $A_1$, the $\eps$-dependence of the coefficients
is always monomial.

We then merge the reduced system $A_1$ with the system $A_2$.
The system $A_2$ consists entirely of cancellation identities, i.e. reduction identities for
objects $\differentialform_{\mu_0 \dots (\mu_j+1) \dots \mu_{\ND}}[\Divisor_j \cdot Q]$.
Some of these objects are masters in the reduced system $A_1$.
Substituting the cancellation identities for these objects will reduce them. 
The $\eps$-dependence of the coefficients in the linear system will no longer be monomial,
but a Laurent polynomial in $\eps$.
This is due to the factor in eq.~(\ref{linear_factor_in_eps}).

Other objects $\differentialform_{\mu_0 \dots (\mu_j+1) \dots \mu_{\ND}}[\Divisor_j \cdot Q]$ will not be masters
in the reduced system $A_1$.
In this case, there are two possibilities:
Either the integration-by-parts identities reduce the entire cancellation identity for
$\differentialform_{\mu_0 \dots (\mu_j+1) \dots \mu_{\ND}}[\Divisor_j \cdot Q]$ to $0=0$, this is unproblematic.
The second possibility is that this does not happen.
In this case, the cancellation identity will eliminate a lower object.
However, as the lower objects are multiplied by the linear factor of eq.~(\ref{linear_factor_in_eps}), 
we no longer
can guarantee the Laurent polynomial form of the $\eps$-dependence of the coefficients
and in general the $\eps$-dependence of the coefficients will be rational in $\eps$.

For the differential equation related to eq.~(\ref{refined_statement})
we observe that all reductions required to express $d_B J$ in terms of the masters $J$ 
are Laurent polynomials in $\eps$.
The reductions of other objects may (and actually do) involve rational functions in $\eps$.


\subsection{Differential equations, step 2: The basis $K$}
\label{sect:step_2}

In this section, we show that we may always construct an $\eps$-factorised differential equation 
from an $\Fgen^\bullet$-compatible differential equation.
We remark that in this section, we only require the differential equation to be compatible with some filtration $\Fgen^\bullet$, it does
not have to be the $\Fcomb^\bullet$-filtration. 

Let $J=(J_1,\dots,J_{\NF})^T$ be a basis with an $\Fgen^\bullet$-compatible differential equation
and assume that $J$ is ordered according to the $\Fgen^\bullet$-filtration, i.e.
$J_1 \in \Fgen^{p_{\max}} V^{\NV}$ and $J_{\NF} \in \Fgen^{p_{\min}} V^{\NV}$.
The assumptions imply that the differential equation for the basis $J$ reads
\bq
 d_B J\left(\eps,x\right) & = & A\left(\eps,x\right) J\left(\eps,x\right),
 \;\;\;\;\;\;
 A\left(\eps,x\right) \; = \;
 \sum\limits_{k=-\NV}^1
 \eps^k
 A^{(k)}\left(x\right).
\eq
The matrix $A$ has then a block structure induced by the $\Fgen^\bullet$-filtration.
It will be convenient to organise the matrix $A$ as 
\bq
\label{A_reorganised}
 A & = & 
 \sum\limits_{k=-\NV}^1
 B^{(k)}\left(x\right),
\eq
with $B^{(1)}(x)=\eps A^{(1)}(x)$. 
For $k<1$ the matrices $B^{(k)}(x)$ are lower block-triangular.
The blocks on the lower $j$-th block sub-diagonal are given by the terms of order $\eps^{\NV+k-j}$ of 
the corresponding blocks of $A$.
We say that a term is of $B$-order $k$ if the term appears in $B^{(k)}$.

To give an example, consider the matrix $A$ with three $1\times 1$ blocks:
\bq
 A
 & = &
 \left( \begin{array}{rrr}
 A^{(0)}_{11} + \eps A^{(1)}_{11} & \eps A^{(1)}_{12} & 0 \\ 
 \frac{1}{\eps} A^{(-1)}_{21} + A^{(0)}_{21} + \eps A^{(1)}_{21} & A^{(0)}_{22} + \eps A^{(1)}_{22} & \eps A^{(1)}_{23}  \\ 
 \frac{1}{\eps^2} A^{(-2)}_{31} + \frac{1}{\eps} A^{(-1)}_{31} + A^{(0)}_{31} + \eps A^{(1)}_{31} & \frac{1}{\eps} A^{(-1)}_{32} + A^{(0)}_{32} + \eps A^{(1)}_{32} & A^{(0)}_{33} + \eps A^{(1)}_{33}  \\ 
 \end{array} \right).
\eq
Then
\bq
 B^{(1)}
 &= &
 \left( \begin{array}{rrr}
 \eps A^{(1)}_{11} & \eps A^{(1)}_{12} & 0 \\ 
 \eps A^{(1)}_{21} & \eps A^{(1)}_{22} & \eps A^{(1)}_{23}  \\ 
 \eps A^{(1)}_{31} & \eps A^{(1)}_{32} & \eps A^{(1)}_{33}  \\ 
 \end{array} \right),
 \nonumber \\
 B^{(0)}
 & = &
 \left( \begin{array}{ccc}
 0 & 0 & 0 \\ 
 0 & 0 & 0  \\ 
 A^{(0)}_{31} & 0 & 0   \\ 
 \end{array} \right),
 \nonumber \\
 B^{(-1)}
 & = &
 \left( \begin{array}{ccc}
 0 & 0 & 0 \\ 
 A^{(0)}_{21} & 0 & 0  \\ 
 \frac{1}{\eps} A^{(-1)}_{31} & A^{(0)}_{32} & 0   \\ 
 \end{array} \right),
 \nonumber \\
 B^{(-2)}
 & = &
 \left( \begin{array}{ccc}
 A^{(0)}_{11} & 0 & 0 \\ 
 \frac{1}{\eps} A^{(-1)}_{21} & A^{(0)}_{22} & 0  \\ 
 \frac{1}{\eps^2} A^{(-2)}_{31} & \frac{1}{\eps} A^{(-1)}_{32} & A^{(0)}_{33}   \\ 
 \end{array} \right).
\eq
The differential equation is in $\eps$-factorised form if all terms of $B$-order $k<1$ are absent.
Below we present an algorithm, which iteratively first removes terms of $B$-order $(-\NV)$, then terms
of $B$-order $(-\NV+1)$, etc., until we arrive at an $\eps$-factorised form.
We consider a transformation 
\bq
\label{trafo_J_to_K}
 K & = & R_2^{-1} J,
\eq
such that the new basis $K$ satisfies an $\eps$-factorised differential equation (on the maximal cut).
It is convenient to use the inverse matrix $R_2^{-1}$ in eq.~(\ref{trafo_J_to_K}).
Obviously, we have $J = R_2 K$. 
The matrix $R_2$ is given as
\bq
 R_2 & = & R_2^{(-\NV)} R_2^{(-\NV+1)} \dots R_2^{(-1)} R_2^{(0)},
\eq
where $R_2^{(k)}$ removes terms of $B$-order $k$.
All matrices $R_2^{(k)}$ are lower block-triangular.
The matrix $R_2^{(-\NV)}$ is of $B$-order $(-\NV)$,
the matrices $R_2^{(k)}$ with $-\NV < k \le 0$ are given by
\bq
 R_2^{(k)}
 & = &
 {\bf 1} + T_2^{(k)},
\eq
where $T_2^{(k)}$ is of $B$-order $k$, and
${\bf 1}$ denotes the $\NF \times \NF$ unit matrix.

\begin{lemma}[Compatibility of the $B$-order with matrix multiplication]
\label{lemma_B_order_multiplication}
Let $A$ be a matrix with terms of $B$-order $\{k,k+1,\dots,1\}$ (with $k \ge -\NV$) and $R_2^{(l)}$ with $l \in \{k,k+1,\dots,0\}$ as above.
Then the products $A R_2^{(l)}$ and $R_2^{(l)} A$ have again only terms of $B$-order $\{k,k+1,\dots,1\}$.
\end{lemma}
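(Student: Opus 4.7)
The plan is to encode the $B$-order by a single additive invariant and then derive the claim from additivity of that invariant under matrix multiplication. For an entry sitting in block-position $(a,b)$ with $\eps$-power $\alpha$, I would define its \emph{total order}
\begin{align*}
 t(a,b,\alpha) \;=\; \alpha + (\absmu_a - \absmu_b),
\end{align*}
which manifestly satisfies $t_{XY}=t_X+t_Y$ whenever the product of two matching entries is non-zero. Comparing with the decomposition of $A$ given above eq.~(\ref{A_reorganised}), one observes that an entry with $\alpha<1$ and $\absmu_a\geq\absmu_b$ lies in $B^{(t-\NV)}$, while every entry of $\eps$-power $\alpha=1$ sits in $B^{(1)}$ regardless of its block-position. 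In particular ``$B$-order $\geq k$'' is equivalent to requiring $t\geq k+\NV$ on lower-triangular entries of $\eps$-power strictly less than one, together with no restriction on position for $\eps^1$ terms.

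Next I would read off the total-order content of $R_2^{(l)}$. Its identity piece consists of block-diagonal entries with $\alpha_Y=0$, hence $t_Y=0$. The correction $T_2^{(l)}$ has, by the very definition of ``$B$-order $l$'', entries $\alpha_Y = \NV+l-j$ on the $j$-th block sub-diagonal, so every such entry has $t_Y = \NV+l\geq 0$ and $\alpha_Y\leq 0$. In particular $\alpha_Y\leq 0$ everywhere in $R_2^{(l)}$, which already forces $\alpha_{XY}\leq 1$ and rules out $\eps$-powers above one in the product.

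With this set-up, the lemma reduces to a short case analysis for $AR_2^{(l)}$, and by symmetry for $R_2^{(l)}A$. An upper-triangular product entry can only arise from an $\alpha_X=1$ upper-triangular entry of $A$ multiplied by a block-diagonal entry of $R_2^{(l)}$; the Griffiths transversality built into eq.~(\ref{refined_statement}) pins down the intermediate index to the same block as $b$, so that the contribution has $\alpha_{XY}=1$ and is safely absorbed into $B^{(1)}$. For a lower-triangular product entry with $\alpha_{XY}<1$, one has $B$-order $= t_X+t_Y-\NV$, which I would bound from below by a three-way case split on the $A$-entry: if $\alpha_X<1$ then $t_X\geq k+\NV$; if $\alpha_X=1$ at a lower-triangular slot of $A$ then $t_X\geq 1$; if $\alpha_X=1$ at a strictly upper-triangular slot of $A$ then Griffiths transversality forces $\absmu_c=\absmu_a+1$ and hence $t_X=0$. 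Combined with $t_Y\in\{0,\NV+l\}$ and the hypothesis $l\geq k\geq -\NV$, each sub-case yields a $B$-order bounded below by $k$.

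The main technical obstacle is the last of these sub-cases, where $t_X=0$ and the naive additive estimate only gives a $B$-order of $l$; this is exactly where the hypothesis $l\geq k$ in the statement is used in an essential way. Everything else is routine bookkeeping once additivity of $t$ and the block structure of $R_2^{(l)}$ are in place.
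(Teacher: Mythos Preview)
Your proof is correct and follows essentially the same route as the paper: both arguments track the combination of $\eps$-power and block-position displacement and exploit its additivity under matrix multiplication, together with Griffiths transversality to control the single strictly upper-triangular step. You package this via the additive invariant $t(a,b,\alpha)=\alpha+(\absmu_a-\absmu_b)$ and then do a short case split, whereas the paper carries out the equivalent computation by direct index bookkeeping on the blocks $A_{im}X_{mj}$; the content, and in particular the place where the hypothesis $l\geq k$ is genuinely needed, is the same in both.
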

\begin{proof}
The multiplication with the unit matrix ${\bf 1}$ does not change the $B$-order of $A$, hence it is sufficient
to consider the multiplication of $A$ with $R_2^{(-\NV)}, T_2^{(-\NV+1)}, \dots, T_2^{(0)}$.
Thus, we need to show that the multiplication of a matrix $A$ with terms of $B$-order $\{k,k+1,\dots,1\}$
with a matrix $X$ of $B$-order $l$ (with $l \in \{k,k+1,\dots,0\}$) results again 
in a matrix with terms of $B$-order $\{k,k+1,\dots,1\}$.
This follows directly from matrix multiplication.
We index the blocks of the matrices $A$ and $X$ by $i$ and $j$.
The $\eps$-order of the blocks is
\bq
 A_{ij}
 & \in &
 \left\{ \begin{array}{ll}
 \left\{ {\mathcal O}\left(\eps^{\min(1,\NV+k+j-i)}\right), \dots, {\mathcal O}\left(\eps^{1}\right) \right\}, & j-i\le 1, \\
 0, & j-i>1.
 \end{array} \right.
 \nonumber \\
 X_{ij}
 & = &
 \left\{ \begin{array}{ll}
 {\mathcal O}\left(\eps^{\NV+l+j-i}\right), & \NV+l+j-i \le 0, \\
 0, & \NV+l+j-i >0.
 \end{array} \right.
 \nonumber \\
\eq
We have
\bq
 \sum\limits_{m=1}^{\NV} A_{im} X_{mj}
 & = &
 \sum\limits_{m=\max(1,\NV+l+j)}^{\min(\NV,i+1)} A_{im} X_{mj}
 \; \in \;
 \left\{ {\mathcal O}\left(\eps^{\min(1,\NV+l+\NV+k+j-i)}\right), \dots, {\mathcal O}\left(\eps^{1}\right) \right\}.
\eq
As $\NV+l \ge 0$ we have shown the claim for $A R_2^{(l)}$.
In a similar way, we have
\bq
 \sum\limits_{m=1}^{\NV} X_{im} A_{mj}
 & = &
 \sum\limits_{m=\max(1,j-1)}^{\min(\NV,i-\NV-l)} X_{im} A_{mj}
 \; \in \;
 \left\{ {\mathcal O}\left(\eps^{\min(1,\NV+l+\NV+k+j-i)}\right), \dots, {\mathcal O}\left(\eps^{1}\right) \right\}.
\eq
This shows the claim for $R_2^{(l)} A$.
\end{proof}

\begin{corollary}[Multiplication of matrices of $B$-order $k \le 0$]
Let $S$ be a matrix of $B$-order $k_1$ and let $T$ be a matrix of $B$-order $k_2$.
If $k_1, k_2 \in \{-\NV,\dots,0\}$ then the product $S \cdot T$ has $B$-order $(\NV+k_1+k_2)$. 
\end{corollary}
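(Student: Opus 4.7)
The plan is to repeat the bookkeeping of Lemma~\ref{lemma_B_order_multiplication}, but now tracking the \emph{exact} power of $\eps$ rather than only a range. The key fact to exploit is that, unlike the $A$-type matrices of the previous lemma, a matrix of $B$-order $k$ with $k\in\{-\NV,\dots,0\}$ has rigid $\eps$-powers in each block: the entry $X_{ij}$ vanishes unless $\NV+k+j-i\le 0$, in which case it is exactly of order ${\mathcal O}(\eps^{\NV+k+j-i})$. Thus powers of $\eps$ will simply add across the two factors of the product.

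Concretely, I would compute
\[
 (S\,T)_{ij} \;=\; \sum_m S_{im}\,T_{mj},
\]
using the support constraints $\NV+k_1+m-i\le 0$ and $\NV+k_2+j-m\le 0$. These restrict the summation to $\NV+k_2+j\le m\le i-\NV-k_1$, and for every such $m$ the contribution $S_{im}T_{mj}$ has $\eps$-order
\[
 \bigl(\NV+k_1+m-i\bigr)+\bigl(\NV+k_2+j-m\bigr) \;=\; \NV+\bigl(\NV+k_1+k_2\bigr)+j-i,
\]
independently of $m$. Comparing with the defining formula for a matrix of $B$-order $k'$, this identifies $k'=\NV+k_1+k_2$, as claimed.

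To finish, I would verify consistency of supports: the $m$-range is non-empty precisely when $\NV+k_2+j\le i-\NV-k_1$, i.e.\ when $\NV+k'+j-i\le 0$, which is exactly the support rule for $B$-order $k'=\NV+k_1+k_2$. Outside this range every summand is zero, so $(ST)_{ij}=0$ in the correct pattern. As a sanity check, the edge case $k_1=k_2=-\NV$ gives $k'=-\NV$ (lowest admissible order), while $k_1=k_2=0$ gives $k'=\NV$; in the latter case the support is empty for all accessible $(i,j)$, so $S\cdot T=0$, compatible with any nominal assignment of a very high $B$-order. I do not expect a genuine obstacle; the only subtle point is the double counting of the offset $+\NV$, which enters once from each of $S$ and $T$ and recombines into a single offset $+\NV$ for the product together with a residual $\NV+k_1+k_2$ that plays the role of the new $B$-order.
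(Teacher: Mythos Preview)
Your proposal is correct and follows precisely the route the paper indicates: the paper's own proof consists of the single sentence ``The proof follows along the lines of lemma~\ref{lemma_B_order_multiplication},'' and you have spelled out exactly that computation, tracking the exact $\eps$-power $\NV+k+j-i$ in each block and showing it adds correctly under multiplication. Your support-consistency check and the edge-case sanity checks are sound and go slightly beyond what the paper writes out.
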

\begin{proof}
The proof follows along the lines of lemma~\ref{lemma_B_order_multiplication}.
\end{proof}

\begin{lemma}[Inverse matrix]
\label{lemma_inverse_matrix}
Let $R_2^{(k)}$ with $k\in\{-\NV,\dots,0\}$ as above. 
If $k=-\NV$, the inverse matrix $(R_2^{(-\NV)})^{-1}$ is again of $B$-order $(-\NV)$, otherwise for $k\in\{-\NV+1,\dots,0\}$ the inverse matrix can be written as
\bq
\label{R2_inverse_k}
 \left(R_2^{(k)}\right)^{-1}
 & = & {\bf 1} + \sum\limits_{j=k}^{0} S_2^{(j)},
\eq
with $S_2^{(j)}$ of $B$-order $j$.
\end{lemma}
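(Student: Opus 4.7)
The plan is to exploit that in both cases the matrix $R_2^{(k)}$ sits just one step away from the identity in a ``nilpotent'' direction, so its inverse is given by a finite Neumann series, and the $B$-order bookkeeping needed for the conclusion follows from the already-established Corollary on the product of two matrices of prescribed $B$-orders.

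For the case $k\in\{-\NV+1,\dots,0\}$, I would write $R_2^{(k)}={\bf 1}+T_2^{(k)}$ and note that $T_2^{(k)}$ is strictly lower block-triangular (its nonzero block entries lie on block sub-diagonals $\ge \NV+k\ge 1$), hence nilpotent. Therefore
\begin{equation*}
 \bigl(R_2^{(k)}\bigr)^{-1} \;=\; \sum_{m=0}^{M}(-1)^m\bigl(T_2^{(k)}\bigr)^m
\end{equation*}
is a finite sum, with $M$ bounded by the number of block rows. Next, iterate the Corollary: $(T_2^{(k)})^2$ has $B$-order $\NV+2k$, $(T_2^{(k)})^3$ has $B$-order $2\NV+3k$, and in general $(T_2^{(k)})^m$ has $B$-order $(m-1)\NV+mk$ whenever it is nonzero. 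Each such order lies in $\{-\NV,\dots,0\}$ (otherwise the block structure forces the power to vanish). Grouping the terms of equal $B$-order and setting
\begin{equation*}
 S_2^{(j)} \;=\; \sum_{\substack{m\ge 1\\(m-1)\NV+mk=j}}(-1)^m\bigl(T_2^{(k)}\bigr)^m, \qquad j\in\{k,\dots,0\},
\end{equation*}
(with $S_2^{(j)}=0$ for those $j$ not realised) yields the claimed decomposition~(\ref{R2_inverse_k}).

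For the case $k=-\NV$, the matrix $R_2^{(-\NV)}$ is block lower-triangular including the diagonal, with an $\eps^0$ diagonal block $D$ and a block sub-diagonal of order $j$ scaling as $\eps^{-j}$. I would factor $R_2^{(-\NV)}=D({\bf 1}+N)$ with $N=D^{-1}(R_2^{(-\NV)}-D)$ strictly block lower-triangular; $N$ carries the same $\eps^{-j}$ scaling on its $j$-th sub-diagonal and is nilpotent of index at most $\NV+1$. Then
\begin{equation*}
 \bigl(R_2^{(-\NV)}\bigr)^{-1} \;=\; \Bigl(\sum_{m=0}^{\NV}(-N)^m\Bigr) D^{-1},
\end{equation*}
which is again a finite sum. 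A direct check shows that $N^m$ has an $\eps^{-j}$ entry on its $j$-th block sub-diagonal (the $\eps$-powers being additive while the sub-diagonal index is additive under matrix multiplication), so the whole inverse inherits precisely the $B$-order $-\NV$ scaling. The assumed invertibility of the diagonal block $D$ is the only analytic input; everything else is combinatorial.

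The main obstacle I anticipate is keeping the $B$-order bookkeeping transparent, in particular making sure that the iterated Corollary gives exactly the advertised range $\{k,k{+}1,\dots,0\}$ in the second case and that, in the first case, the $\eps$-scaling of $N^m$ on the $j$-th sub-diagonal is really $\eps^{-j}$ rather than some lower power. Both points reduce to the observation that in a product of strictly lower-triangular matrices the sub-diagonal index and the $\eps$-exponent satisfy the same additive law, which is the content of the proof of Lemma~\ref{lemma_B_order_multiplication} and of the Corollary; once this is stated cleanly the rest is a one-line Neumann-series argument.
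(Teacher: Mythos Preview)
Your proof is correct and essentially identical to the paper's. Both cases are handled by a finite Neumann series: for $k>-\NV$ you expand $({\bf 1}+T_2^{(k)})^{-1}$ and track the $B$-order of $(T_2^{(k)})^m$ via the Corollary (the paper writes this order as $m(\NV+k)-\NV$, the same as your $(m-1)\NV+mk$); for $k=-\NV$ you split off the invertible block-diagonal part and expand the strictly lower-triangular remainder, which is exactly what the paper does with $R_2^{(-\NV)}=D_2^{(-\NV)}+N_2^{(-\NV)}$ and the formula $\sum_j(-D^{-1}N)^j D^{-1}$.
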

\begin{proof}
We first consider the case $k=-\NV$.
We write $R_2^{(-\NV)}=D_2^{(-\NV)}+N_2^{(-\NV)}$, where $D_2^{(-\NV)}$ is block-diagonal and invertible.
The matrix $N_2^{(-\NV)}$ is nilpotent, with
$(N_2^{(-\NV)})^{\NV+1}={\bf 0}$.
We have
\bq
 \left( R_2^{(-\NV)}\right)^{-1} & = & \sum\limits_{j=0}^{\NV} \left(-\left(D_2^{(-\NV)}\right)^{-1} N_2^{(-\NV)} \right)^j \left(D_2^{(-\NV)}\right)^{-1}.
\eq
Multiplication with the block-diagonal matrix $(D_2^{(-\NV)})^{-1}$ does not change the $B$-order.
The expression $(-(D_2^{(-\NV)})^{-1} N_2^{(-\NV)} )^j$ is again of $B$-order $(-\NV)$.

We now consider the case $k\in\{-\NV+1,\dots,0\}$. We have $R_2^{(k)} = {\bf 1} + T_2^{(k)}$, where $T_2^{(k)}$ is nilpotent. 
The inverse is given by
\bq
 \left( R_2^{(k)}\right)^{-1}
 & = &
 {\bf 1}
 +
 \sum\limits_{j=1}^{\left\lfloor\frac{\NV}{\NV+k}\right\rfloor} \left(- T_2^{(k)} \right)^j.
\eq
The $B$-order of $(- T_2^{(k)} )^j$ is $j(\NV+k)-\NV$.
\end{proof}

\begin{corollary}[Transformation by $R_2^{(k)}$]
\label{corollary_B_order_multiplication}
Let $A$ be a matrix with terms of $B$-order $\{k,k+1,\dots,1\}$ and $R_2^{(k)}$ as above.
Then
\bq
 \left(R_2^{(k)}\right)^{-1} A R_2^{(k)} - \left(R_2^{(k)}\right)^{-1} d_B R_2^{(k)}
\eq
has only terms of $B$-order $\{k,k+1,\dots,1\}$.
\end{corollary}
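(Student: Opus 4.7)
The plan is to reduce the corollary to two applications of lemma~\ref{lemma_B_order_multiplication} (together with its corollary), combined with the structural description of $(R_2^{(k)})^{-1}$ provided by lemma~\ref{lemma_inverse_matrix}. Since the family of matrices whose non-zero entries have $B$-orders inside $\{k,k+1,\dots,1\}$ is closed under linear combinations, it suffices to prove the claim separately for the two pieces $(R_2^{(k)})^{-1} A R_2^{(k)}$ and $(R_2^{(k)})^{-1} d_B R_2^{(k)}$.

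For the conjugation term I would first invoke lemma~\ref{lemma_B_order_multiplication} on the right to obtain that $A R_2^{(k)}$ has $B$-orders in $\{k,\dots,1\}$. By lemma~\ref{lemma_inverse_matrix}, $(R_2^{(k)})^{-1}$ is either itself of $B$-order $(-\NV)$ (when $k=-\NV$) or decomposes as $\mathbf{1}+\sum_{j=k}^{0} S_2^{(j)}$ with each summand $S_2^{(j)}$ of $B$-order $j\in\{k,\dots,0\}$ (when $k>-\NV$). Multiplication by $\mathbf{1}$ leaves everything untouched, while multiplication by a matrix of $B$-order $j$ is precisely the situation analysed in the proof of lemma~\ref{lemma_B_order_multiplication}: a careful reading shows that the proof uses only the abstract $B$-order of the multiplier, not its identification with an $R_2^{(j)}$. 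Applying this term by term and summing, one concludes that $(R_2^{(k)})^{-1} A R_2^{(k)}$ has $B$-orders in $\{k,\dots,1\}$.

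For the connection term the extra ingredient is that $d_B$ differentiates only with respect to the kinematic variables $x$, not with respect to $\eps$, and therefore preserves every $\eps$-power and every zero block of $R_2^{(k)}$. The identity contribution to $R_2^{(k)}$ (present when $k>-\NV$) is annihilated by $d_B$ because it is constant in $x$. Consequently $d_B R_2^{(k)}$ is itself of $B$-order $k$, and the same two-step argument used above, now applied to $(R_2^{(k)})^{-1}\cdot d_B R_2^{(k)}$, shows that its $B$-orders lie in $\{k,\dots,1\}$. In the sub-cases where the corollary of lemma~\ref{lemma_B_order_multiplication} would assign to a summand $S_2^{(j)}\cdot d_B R_2^{(k)}$ a formal $B$-order $\NV+j+k$ exceeding $0$, the product vanishes by nilpotency, exactly as in the proof of lemma~\ref{lemma_inverse_matrix}, so no problematic terms arise.

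The main obstacle I anticipate is essentially formal: although lemma~\ref{lemma_B_order_multiplication} is phrased for the specific matrices $R_2^{(l)}$, its proof really only invokes the abstract hypothesis that the multiplier be of $B$-order $l\in\{k,\dots,0\}$ (or the identity). This mild amplification has to be made explicit so that the lemma can be applied to each component $S_2^{(j)}$ of the inverse and to $d_B R_2^{(k)}$. Once this is in place, combining the two bounds yields $B$-orders in $\{k,\dots,1\}$ for the difference, which is the content of the corollary.
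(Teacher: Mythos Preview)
Your proposal is correct and follows the same route as the paper, which simply states that the corollary is a direct consequence of lemma~\ref{lemma_B_order_multiplication} and lemma~\ref{lemma_inverse_matrix}. You have merely unpacked this one-liner: splitting into the conjugation piece and the connection piece, observing that $d_B$ preserves the $B$-order so that $d_B R_2^{(k)}$ can be treated as a matrix with $B$-orders in $\{k,\dots,1\}$, and then applying lemma~\ref{lemma_B_order_multiplication} (in its amplified form, valid for any matrix of $B$-order $l$) term by term to the decomposition of $(R_2^{(k)})^{-1}$ provided by lemma~\ref{lemma_inverse_matrix}. One remark: your final nilpotency comment is not actually needed---once you treat $d_B R_2^{(k)}$ as the ``$A$'' in lemma~\ref{lemma_B_order_multiplication}, that lemma already guarantees the product stays in $\{k,\dots,1\}$ without having to track individual $B$-orders via the corollary.
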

\begin{proof}
This is a direct consequence of lemmata~\ref{lemma_B_order_multiplication} and \ref{lemma_inverse_matrix}.
\end{proof}

\begin{theorem}[$\eps$-factorisation]
\label{factorisation_theorem}
Let $J=(J_1,\dots,J_{\NF})^T$ be a basis for a given sector, which satisfies (on the maximal cut) 
a differential equation compatible with a filtration $\Fgen^\bullet$.
Then there exists a transformation $K = R_2^{-1} J$ with $R_2=R_2^{(-\NV)} R_2^{(-\NV+1)} \dots R_2^{(-1)} R_2^{(0)}$, where the 
matrices $R_2^{(k)}$ are in the form as above, such that the differential equation is in $\eps$-factorised form.
\end{theorem}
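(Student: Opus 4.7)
The plan is to proceed by induction on the lowest $B$-order present in the connection matrix, starting from $k=-\NV$ and removing one order at a time until the final connection has only $B$-order $1$, which is precisely the $\eps$-factorised form $\eps A^{(1)}$. Denote by $A^{[k]}$ the connection after the factors $R_2^{(-\NV)},\dots,R_2^{(k-1)}$ have been applied; by the inductive hypothesis it has $B$-orders in $\{k,\dots,1\}$. I construct $R_2^{(k)}$ of the prescribed shape so that
\begin{equation}
A^{[k+1]} \;=\; (R_2^{(k)})^{-1} A^{[k]} R_2^{(k)} - (R_2^{(k)})^{-1} d_B R_2^{(k)}
\end{equation}
has its $B$-order $k$ piece annihilated. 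By Corollary~\ref{corollary_B_order_multiplication}, the transformation preserves the range of $B$-orders, so no lower order is regenerated, and the induction closes at $k=0$.

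\textbf{Base case $k=-\NV$.} Write $R_2^{(-\NV)}=D+N$ with $D$ block-diagonal invertible and $N$ strictly lower-block. Using Lemma~\ref{lemma_inverse_matrix} to expand $(D+N)^{-1}$ and collecting the $B$-order $(-\NV)$ piece of the gauge transformation block by block, the constraint on the $i$-th diagonal block reduces to the matrix differential equation
\begin{equation}
d_B D_i \;=\; A^{(0)}_{ii}\, D_i,
\end{equation}
whose compatibility in the several base variables follows from the $B$-order $(-2\NV+1)$ part of the flatness condition $d_B A + A\wedge A = 0$, i.e.\ from $d_B A^{(0)}_{ii} = A^{(0)}_{ii}\wedge A^{(0)}_{ii}$. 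A solution exists by Frobenius/Picard--Lindel\"of, in general only after extending the function field. With $D$ fixed, the lower sub-diagonal entries of $N$ are determined one sub-diagonal at a time by purely algebraic equations involving $D$, the already-constructed entries of $N$ on lower sub-diagonals, and the sub-diagonal pieces of $B^{(-\NV)}$.

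\textbf{Inductive step.} For $-\NV<k\le 0$ take $R_2^{(k)}={\bf 1}+T_2^{(k)}$ with $T_2^{(k)}$ of $B$-order $k$. Expanding the gauge transformation gives
\begin{equation}
A^{[k+1]} \;=\; A^{[k]} + \bigl[A^{[k]},T_2^{(k)}\bigr] - d_B T_2^{(k)} + \bigl(\text{$B$-order}>k\bigr).
\end{equation}
Using the Corollary to Lemma~\ref{lemma_B_order_multiplication} that the product of two matrices of $B$-orders $k_1,k_2\le 0$ has $B$-order $\NV+k_1+k_2>k$ in the present regime ($\NV+2k>k$ because $k>-\NV$), the only contribution to the $B$-order $k$ part of the commutator comes from the $\eps A^{(1)}$ piece of $A^{[k]}$. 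Setting the $B$-order $k$ part of $A^{[k+1]}$ to zero therefore yields the inhomogeneous first-order system
\begin{equation}
d_B T_2^{(k)} \;=\; B^{(k)}\!\left[A^{[k]}\right] + \bigl[\eps A^{(1)},T_2^{(k)}\bigr],
\end{equation}
whose integrability is again a consequence of the flatness of $A^{[k]}$, and whose solution is obtained by a single quadrature against the already-known block-diagonal fundamental solution.

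\textbf{Main obstacle.} The genuinely hard step is the diagonal block equation $d_B D_i = A^{(0)}_{ii}D_i$ of the base case: its solutions are periods of the underlying square-free sub-geometry attached to block $i$ and in general are algebraic or transcendental, not rational. This is where the algorithm is forced to leave the rational class, matching the paper's earlier statement that step 2 may introduce algebraic and transcendental functions while step 1 stays rational. Every subsequent stage of the induction is, by contrast, a pure quadrature with compatibility guaranteed by flatness, so the theorem follows constructively.
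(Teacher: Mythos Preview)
Your overall strategy matches the paper's: iterate from $k=-\NV$ up to $k=0$, at each stage choosing $R_2^{(k)}$ of the prescribed shape so that the $B$-order $k$ piece of the transformed connection vanishes, using Corollary~\ref{corollary_B_order_multiplication} to guarantee no lower order is regenerated. That inductive skeleton is exactly the paper's argument.

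The genuine gap is in your base case. You assert that the diagonal blocks decouple and satisfy $d_B D_i = A^{(0)}_{ii}\,D_i$, after which the strictly lower entries of $N$ are fixed \emph{algebraically}. Neither statement is correct. By Lemma~\ref{lemma_diff_eq_R_2_n} the $B$-order $(-\NV)$ constraint contains, besides $B^{(-\NV)}R_2^{(-\NV)}$, the term $R_2^{(-\NV)}\bigl[(R_2^{(-\NV)})^{-1}B^{(1)}R_2^{(-\NV)}\bigr]\big|_{(-\NV)}$, and this feeds the super-diagonal of $A^{(1)}$ back into the diagonal equations. In the paper's two-block worked example this reads
\[
d_B R^{(0)}_{11} \;=\; A^{(0)}_{11}R^{(0)}_{11} \;+\; A^{(1)}_{12}R^{(-1)}_{21},
\qquad
d_B R^{(-1)}_{21} \;=\; A^{(-1)}_{21}R^{(0)}_{11} \;+\; A^{(0)}_{22}R^{(-1)}_{21},
\]
so $R^{(0)}_{11}$ and the sub-diagonal $R^{(-1)}_{21}$ form a genuinely coupled first-order ODE system (equivalent to the Picard--Fuchs operator), and the second diagonal block $R^{(0)}_{22}$ only decouples \emph{after} that system is solved. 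The correct triangular structure is by \emph{block-column}, not diagonal-then-lower. Your ``purely algebraic'' claim for $N$ is therefore false, and your integrability check of $d_B A^{(0)}_{ii}=A^{(0)}_{ii}\wedge A^{(0)}_{ii}$ is checking the wrong equation. Once you replace your base-case description with the block-column-triangular system (Lemma~\ref{lemma_diff_eq_R_2_n} plus the remarks following the theorem), the rest of your induction and your flatness argument for integrability go through, and the proof agrees with the paper's.
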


\begin{proof}
We construct the matrices $R_2^{(k)}$ iteratively, starting from $k=-\NV$ and ending with $k=0$.
We set $\tilde{A}^{(-\NV)}=A$ and
\bq
 \tilde{A}^{(k+1)}
 =
 \left(R_2^{(k)}\right)^{-1} \tilde{A}^{(k)} R_2^{(k)} - \left(R_2^{(k)}\right)^{-1} d_B R_2^{(k)}.
 \;
\eq
The matrix $R_2^{(k)}$ is determined by
\bq
\label{eq_U_k}
 \left. \left[ \left(R_2^{(k)}\right)^{-1} \tilde{A}^{(k)} R_2^{(k)} - \left(R_2^{(k)}\right)^{-1} d_B R_2^{(k)} \right] \right|_{k}
 = 0,
 \;
\eq
where $|_k$ indicates that only terms of $B$-order $k$ are taken.
Eq.~(\ref{eq_U_k}) defines an $\eps$-independent system of first-order differential equations for the unknown functions in the ansatz for $R_2^{(k)}$.
There are as many equations as there are unknown functions.
Eq.~(\ref{eq_U_k}) also ensures that $\tilde{A}^{(k+1)}$ only has terms of $B$-order $\{k+1,\dots,1\}$.
Thus, in every iteration step, we improve the $B$-order.
After transformation with $R_2^{(0)}$, the matrix $\tilde{A}^{(1)}$ only has terms of $B$-order $1$.
\end{proof}
A few comments are in order: At first sight it may seem that eq.~(\ref{eq_U_k}) yields a non-linear system of differential
equations. This is not the case.
In fact, eq.~(\ref{eq_U_k}) decomposes into a block-triangular system of first-order linear inhomogeneous differential equations.
In practice, we solve this system as follows:
We first determine the functions appearing in $R_2^{(-\NV)}$, followed by the functions appearing in $R_2^{(-\NV+1)}$.
We continue in this way until we reach $R_2^{(0)}$.
The differential equations for the functions appearing in $R_2^{(k)}$ will not depend on the functions appearing in $R_2^{(j)}$
with $j>k$.
Furthermore, we may assume that all functions appearing in $R_2^{(j)}$ with $j<k$ are already known at this stage.
In addition, the differential equations for the functions appearing in $R_2^{(k)}$ will form a block-triangular system
according to the block-columns.
The functions appearing in block-column $l$ will not be influenced by functions from the block-columns $l'>l$ and
functions from block-columns $l'<l$ can be considered to be known at this stage.
The net result is that the system decomposes into smaller systems of first-order linear inhomogeneous differential equations.

Note that we cannot expect that the system uncouples into single first-order linear inhomogeneous differential equations.
For example, in the case of an elliptic Feynman integral, we will obtain the differential equations describing the variation
of the periods of the elliptic curve with the kinematic variables.
This will always be a coupled system of two first-order differential equations.
Of course, the variation along a given path can be converted (if desired) into a single second-order differential equation.

We may simplify eq.~(\ref{eq_U_k}), depending on whether we consider the case $k=-\NV$ or the case $k\in\{-\NV+1,\dots,0\}$.
We have the following two lemmata:
\begin{lemma}[Differential equation for the rotation matrix $R_2^{(-\NV)}$]
\label{lemma_diff_eq_R_2_n}
The rotation matrix $R_2^{(-\NV)}$ is determined from the differential equation
\bq
 d_B R_2^{(-\NV)} 
 & = & 
 B^{(-\NV)} R_2^{(-\NV)} 
 +
 R_2^{(-\NV)} \left. \left[ \left(R_2^{(-\NV)}\right)^{-1} B^{(1)} R_2^{(-\NV)} \right] \right|_{(-\NV)}.
\eq
\end{lemma}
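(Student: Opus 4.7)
My plan is to derive the claimed differential equation directly from equation~(\ref{eq_U_k}) specialised to $k=-\NV$. Since $\tilde{A}^{(-\NV)}=A=\sum_{j=-\NV}^{1} B^{(j)}$, the starting identity is the projection statement $\left.\left[\left(R_2^{(-\NV)}\right)^{-1} A R_2^{(-\NV)} - \left(R_2^{(-\NV)}\right)^{-1} d_B R_2^{(-\NV)}\right]\right|_{-\NV}=0$. The strategy is to decompose $A$ into its $B$-order components, determine the $B$-order of each conjugate $\left(R_2^{(-\NV)}\right)^{-1} B^{(j)} R_2^{(-\NV)}$, retain only the contributions of $B$-order $-\NV$ that survive the projection, and finally multiply the resulting identity by $R_2^{(-\NV)}$ from the left to clear the leading inverse.

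For the intermediate orders $j\in\{-\NV+1,\ldots,0\}$ I apply the corollary following lemma~\ref{lemma_B_order_multiplication} (multiplication of matrices of $B$-order $k\leq 0$) twice: once to $\left(R_2^{(-\NV)}\right)^{-1} B^{(j)}$ and then to the result multiplied on the right by $R_2^{(-\NV)}$. Each factor has $B$-order $\leq 0$, so both applications are legitimate, and the resulting $B$-order is $\NV+(-\NV)+j = j > -\NV$. Hence all these intermediate contributions project to zero. For $j=-\NV$ the same bookkeeping yields $B$-order exactly $-\NV$, so $\left(R_2^{(-\NV)}\right)^{-1} B^{(-\NV)} R_2^{(-\NV)}$ survives the projection unchanged. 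For $j=1$ the matrix $B^{(1)}$ lies outside the regime $k\leq 0$ of the corollary, and lemma~\ref{lemma_B_order_multiplication} only guarantees that $\left(R_2^{(-\NV)}\right)^{-1} B^{(1)} R_2^{(-\NV)}$ has terms of $B$-orders distributed across $\{-\NV,\ldots,1\}$. Consequently the piece $\left.\left[\left(R_2^{(-\NV)}\right)^{-1} B^{(1)} R_2^{(-\NV)}\right]\right|_{-\NV}$ is generically non-trivial and must be kept as an explicit contribution.

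For the derivative term, since $d_B$ acts only on the kinematic variables and leaves the $\eps$-structure intact, $d_B R_2^{(-\NV)}$ inherits the $B$-order $-\NV$ of $R_2^{(-\NV)}$; by the corollary, $\left(R_2^{(-\NV)}\right)^{-1} d_B R_2^{(-\NV)}$ then has $B$-order exactly $-\NV$ and equals its own projection. Collecting the surviving pieces, the equation reduces to $\left(R_2^{(-\NV)}\right)^{-1} B^{(-\NV)} R_2^{(-\NV)} + \left.\left[\left(R_2^{(-\NV)}\right)^{-1} B^{(1)} R_2^{(-\NV)}\right]\right|_{-\NV} = \left(R_2^{(-\NV)}\right)^{-1} d_B R_2^{(-\NV)}$, and multiplying both sides by $R_2^{(-\NV)}$ on the left (which is unambiguous because each surviving term is a bona fide matrix, not a formal projection symbol) reproduces exactly the stated differential equation. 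The only real obstacle is the careful bookkeeping of $B$-orders under conjugation, specifically the asymmetry between the cases $j\leq 0$, where conjugation preserves the exact $B$-order, and $j=1$, where conjugation can cascade the $\eps$-structure all the way down to $-\NV$; once this asymmetry is systematically tracked via lemma~\ref{lemma_B_order_multiplication} and its corollary, the derivation is mechanical.
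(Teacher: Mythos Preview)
Your proof is correct and follows essentially the same approach as the paper's own proof: both start from eq.~(\ref{eq_U_k}) at $k=-\NV$, observe that the derivative term already has $B$-order $-\NV$ so the projection can be dropped there, and that among the conjugates $\left(R_2^{(-\NV)}\right)^{-1} B^{(j)} R_2^{(-\NV)}$ only $j=-\NV$ and $j=1$ contribute at $B$-order $-\NV$. Your version is simply more explicit in invoking lemma~\ref{lemma_B_order_multiplication} and its corollary to justify the $B$-order bookkeeping, whereas the paper's proof states these facts directly.
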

\begin{proof}
The matrices $(R_2^{(-\NV)})^{-1}$ and $d R_2^{(-\NV)}$ are both of $B$-order $(-\NV)$, hence we may drop the $\left.\right|_{-\NV}$-prescription
in the second term of eq.~(\ref{eq_U_k}).
We have 
\bq
 \tilde{A}^{(-\NV)} \; = \; A \; = \; \sum_{k=-\NV}^1 B^{(k)}.
\eq
Only the terms with $B^{(-\NV)}$ and $B^{(1)}$ will contribute. In the term with $B^{(-\NV)}$ we may again drop the $\left.\right|_{-\NV}$-prescription.
\end{proof}

\begin{lemma}[Differential equation for the rotation matrix $R_2^{(k)}$ for $k>-\NV$]
\label{lemma_diff_eq_R_2_k}
The rotation matrix $R_2^{(k)}$ for $k \in \{-\NV+1,\dots,0\}$ is determined from the differential equation
\bq
 d_B R_2^{(k)}
 & = &
 \left. \left[ \left(R_2^{(k)}\right)^{-1} \tilde{A}^{(k)} R_2^{(k)} \right] \right|_{k}.
\eq
\end{lemma}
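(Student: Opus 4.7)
The plan is to reduce the lemma directly to the defining equation~(\ref{eq_U_k}) for $R_2^{(k)}$ by analysing the two pieces in the bracket separately. Equation~(\ref{eq_U_k}) reads
\[
\left. \left[ \left(R_2^{(k)}\right)^{-1} \tilde{A}^{(k)} R_2^{(k)} - \left(R_2^{(k)}\right)^{-1} d_B R_2^{(k)} \right] \right|_{k} = 0,
\]
so the lemma follows if I can establish that, for $k \in \{-\NV+1,\dots,0\}$,
\[
\left. \left[ \left(R_2^{(k)}\right)^{-1} d_B R_2^{(k)} \right] \right|_{k} \; = \; d_B R_2^{(k)}.
\]
The stated formula is then a trivial rearrangement.

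First I would observe that because $R_2^{(k)} = {\bf 1} + T_2^{(k)}$ with $T_2^{(k)}$ of $B$-order $k$, we have $d_B R_2^{(k)} = d_B T_2^{(k)}$, and since $d_B$ only differentiates with respect to the kinematic variables, it preserves both $\eps$-powers and the block-sub-diagonal position. Hence $d_B R_2^{(k)}$ is again of $B$-order $k$, and in particular $d_B R_2^{(k)}|_k = d_B R_2^{(k)}$. This already disposes of the ``outer'' ${\bf 1}$-contribution to the claim.

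Next I would expand $(R_2^{(k)})^{-1}$ using Lemma~\ref{lemma_inverse_matrix}:
\[
\left(R_2^{(k)}\right)^{-1} d_B R_2^{(k)} \; = \; d_B R_2^{(k)} + \sum_{j=k}^{0} S_2^{(j)} \, d_B R_2^{(k)}.
\]
For each $j$ in the sum, the corollary to Lemma~\ref{lemma_B_order_multiplication} on products of matrices of $B$-order $\le 0$ tells us that $S_2^{(j)} \, d_B R_2^{(k)}$ is of $B$-order $\NV + j + k$. Since $j \ge k \ge -\NV+1$, we have $\NV + j + k > k$, so every term in the sum has $B$-order strictly greater than $k$ and is killed by the projection $|_k$. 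Only the first term survives, giving exactly $d_B R_2^{(k)}$, and substitution into eq.~(\ref{eq_U_k}) yields the claim.

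The only subtle point — what I would flag as the main obstacle, even though it is quickly dispatched — is the strict inequality $j > -\NV$. This is precisely the condition that fails in the excluded case $k = -\NV$, where $(R_2^{(-\NV)})^{-1}$ is itself of $B$-order $(-\NV)$ rather than of the shape ${\bf 1} + \dots$, and the cross terms coming from the derivative can no longer be dropped. That is the structural reason why $k = -\NV$ is handled separately by Lemma~\ref{lemma_diff_eq_R_2_n}. For $k > -\NV$ the mechanism above applies uniformly and the lemma reduces to the two bookkeeping observations on $B$-orders.
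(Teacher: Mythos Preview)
Your proof is correct and follows essentially the same route as the paper: both argue that $d_B R_2^{(k)}$ is of $B$-order $k$, expand $(R_2^{(k)})^{-1}$ via Lemma~\ref{lemma_inverse_matrix}, and observe that the non-identity pieces $S_2^{(j)}$ push the $B$-order strictly above $k$, so only the identity contributes after projection. Your version is in fact slightly more explicit than the paper's, and you correctly pinpoint the corollary on products of matrices of $B$-order $\le 0$ as the relevant tool (the paper cites Corollary~\ref{corollary_B_order_multiplication}, but the content used is really the multiplication rule); your closing remark on why $k=-\NV$ is excluded is also apt.
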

\begin{proof}
We consider the term $\left. (R_2^{(k)})^{-1} d_B R_2^{(k)} \right|_{k}$. The matrix $d_B R_2^{(k)}$ is of $B$-order $k$. 
The inverse matrix $(R_2^{(k)})^{-1}$ has the structure as given in eq.~(\ref{R2_inverse_k}). With corollary~\ref{corollary_B_order_multiplication} it follows that only the
identity matrix in eq.~(\ref{R2_inverse_k})
contributes to the final $B$-order $k$.
\end{proof}

\begin{myexample}
\label{example_two_blocks}
We illustrate the block-triangular structure of the differential equations for an example.
We consider the case 
where we have two non-trivial parts in the filtration
\bq
 \emptyset = F^2 V \subseteq F^1 V \subseteq F^0 V = V.
\eq
We set
\bq
 d_1 \; = \; \dim \mathrm{Gr}_F^1 V,
 \;\;\;\;\;\;
 d_2 \; = \; \dim \mathrm{Gr}_F^0 V.
\eq
The matrix $A$ appearing in the differential equation $d_BJ = A J$ is then of the form
\bq
 A
 & = &
 \left( \begin{array}{rr}
 A^{(0)}_{11} + \eps A^{(1)}_{11} & \eps A^{(1)}_{12} \\
 \frac{1}{\eps} A^{(-1)}_{21} + A^{(0)}_{21} + \eps A^{(1)}_{21} & A^{(0)}_{22} + \eps A^{(1)}_{22} \\
 \end{array} \right),
\eq
where the entries are in general matrices.
We have $\dim A^{(j)}_{11} = d_1$ and $\dim A^{(j)}_{22} = d_2$.
The off-diagonal entries have the dimensions dictated by the diagonal blocks.
We construct a transformation
\bq
 R_2 & = & R_2^{(-1)} R_2^{(0)},
\eq
leading to the $\eps$-factorised basis $K=R_2^{-1} J$.
The ansatz for $R_2^{(-1)}$ reads
\bq
 R_2^{(-1)}
 & = &
 \left( \begin{array}{rr}
 R^{(0)}_{11} & 0 \\
 \frac{1}{\eps} R^{(-1)}_{21} & R^{(0)}_{22} \\
 \end{array} \right).
\eq
For a basis transformation the determinant of $R_2^{(-1)}$ has to be non-zero, this implies that 
$R^{(0)}_{11}$ and $R^{(0)}_{22}$ are invertible matrices.

Requiring that terms of $B$-order $(-1)$ vanish gives us three equations.
These equations group into $3=2+1$ as follows:
The first group of two equations involves only $R^{(0)}_{11}$ and $R^{(0)}_{21}$:
\bq
 d_B R^{(0)}_{11} & = & A^{(0)}_{11} R^{(0)}_{11} + A^{(1)}_{12} R^{(-1)}_{21},
 \nonumber \\
 d_B R^{(-1)}_{21} & = & A^{(-1)}_{21} R^{(0)}_{11} + A^{(0)}_{22} R^{(-1)}_{21}.
\eq
The second set involves in addition $R^{(0)}_{22}$:
\bq
 d_B R^{(0)}_{22} & = & A^{(0)}_{22} R^{(0)}_{22} - R^{(-1)}_{21} \left( R^{(0)}_{11} \right)^{-1} A^{(1)}_{12} R^{(0)}_{22}.
\eq
We then turn to $R_2^{(0)}$. The ansatz for $R_2^{(0)}$ reads
\bq
 R_2^{(0)}
 & = &
 \left( \begin{array}{rr}
 1 & 0 \\
 R^{(0)}_{21} & 1 \\
 \end{array} \right).
\eq
Requiring that terms of $B$-order $0$ vanish, gives us one equation:
\bq
 d_B R^{(0)}_{21} & = & \left( R^{(0)}_{22} \right)^{-1} \left( A^{(0)}_{21} R^{(0)}_{11} + A^{(1)}_{22} R^{(-1)}_{21}   
                   - R^{(-1)}_{21} \left( R^{(0)}_{11} \right)^{-1} A^{(1)}_{11} R^{(0)}_{11} \right).
\eq
A similar, but slightly more complicated example with
three non-trivial parts in the filtration can be found in appendix B of ref.~\cite{Pogel:2025bca}.
\end{myexample}
 
This completes the construction of the $\eps$-factorised differential equation on the maximal cut.
The extension beyond the maximal cut is straightforward: Any offending term is strictly lower block triangular
and can be removed with an ansatz similar to $R_2$.
We do not need to assume that the $\eps$-dependence of the terms in the non-diagonal blocks is 
given by a Laurent polynomial in $\eps$.
The algorithm can handle a rational dependence in $\eps$: 
One first performs a partial fraction decomposition in $\eps$ and then
removes any term which is not proportional to $\eps^1$.
\begin{myexample}
We illustrate this with a simple example consisting of two sectors.
Assume that the differential equation $d_BJ = A J$ is of the form
\bq
 A & = &
 \left( \begin{array}{rr}
 \eps A^{(1)}_{11} & 0 \\
 \frac{1}{1+\eps} X_{21} + A^{(0)}_{21} + \eps A^{(1)}_{21} & \eps A^{(1)}_{22} \\
 \end{array} \right).
\eq
We first remove $X_{21}$ with the transformation 
\bq
 R & = & 
 \left( \begin{array}{rr}
 1 & 0 \\
 \frac{1}{1+\eps} R_{21} & 1 \\
 \end{array} \right).
\eq
$R_{21}$ is determined by the $\eps$-independent differential equation
\bq
 d_B R_{21} & = & X_{21} - A^{(1)}_{22} R_{21} + R_{21} A^{(1)}_{11}.
\eq
After this transformation the lower-left entry is linear in $\eps$. The $\eps^0$-term is then removed
as in example~\ref{example_two_blocks}.
\end{myexample}


\subsection{The top weight master integrals}
\label{sect:top_weight}

The master integrals in $V^{(\NV,\NV)}$ are particularly simple.
They have Hodge weight $w=2\NV$ and correspond to zero-dimensional geometries (i.e. a set of points).
In particular, all Feynman integrals, which evaluate to multiple polylogarithms are of this type.
The leading singularities \cite{Cachazo:2008vp,Arkani-Hamed:2010pyv} of the top weight master integrals are algebraic.
Normalising these master integrals by their leading singularities reduces the number of auxiliary functions
in step $2$ of our algorithm.

\begin{myexample}
We consider sector $57$ in the family of Feynman integrals defined by the inverse propagators
\begin{align}
 \sigma_1 & = -\left(k_1-p_1\right)^2 +m^2,
 &
 \sigma_2 & = -\left(k_1-p_{12}\right)^2,
 &
 \sigma_3 & = -k_1^2,
 \nonumber \\
 \sigma_4 & = -\left(k_1+k_2\right)^2 + m^2,
 &
 \sigma_5 & = -\left(k_2+p_{12}\right)^2,
 &
 \sigma_6 & = -k_2^2,
 \nonumber \\
 \sigma_7 & = -\left(k_2+p_{123}\right)^2 + m^2,
 & 
 \sigma_8 & = -\left(k_1-p_{13}\right)^2,
 &
 \sigma_9 & = -\left(k_2+p_{13}\right)^2.
\end{align}
The Feynman graph for sector $57$ is shown in fig.~\ref{fig:sector57}.
\begin{figure}
\begin{center}
\includegraphics[scale=1.0]{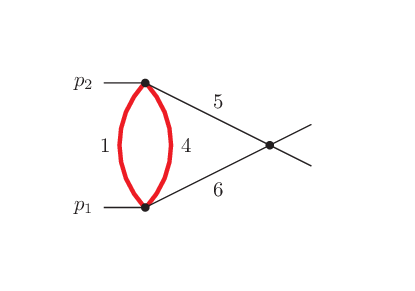}
\end{center}
\caption{
The Feynman graph for sector $57$.
Massive propagators are indicated by red lines.
}
\label{fig:sector57}
\end{figure}
With $x_1=m^2/s$ and $z_1=(\sigma_9-\sigma_7-t+m^2)/s$ we obtain the twist function
\bq
 U & = & P_0^{2\eps} P_1^{-\eps} P_2^{-\frac{1}{2}-\eps} P_3^{-\frac{1}{2}-\eps}
\eq
where the Baikov polynomials are given by
\bq
 P_0 \; =\; z_0,
 \;\;\;
 P_1 \; =\; z_1-z_0,
 \;\;\;
 P_2 \; =\; z_1,
 \;\;\;
 P_3 \; =\; z_1 + 4 x_1 z_0.
\eq
We further have
\bq
 \prebaikov \; = \; -\frac{8i e^{2 \eps \Eulerconstant} \pi^3}{\Gamma\left(1-2\eps\right)},
 & & 
 \preabs \; = \; \eps^3.
\eq
This sector has two master integrals, which reside in $V^{(1,1)}$.
We set 
\bq
 J_1 \; = \; 2 \eps^3 I_{200111000},
 & &
 J_2 \; = \; -\eps^2 I_{200112000}.
\eq
We have
\bq
 \iota\left(J_1\right) \; = \; \differentialform_{1000}\left[1\right],
 & & 
 \iota\left(J_2\right) \; = \; \differentialform_{0100}\left[1\right].
\eq
The differential equation in the basis $J=(J_1,J_2)$ is of the form
\bq
 d_B J & = & \left( A^{(0)} + \eps A^{(1)} \right) J,
 \;\;\;\;\;\;
 A^{(0)}
 \; = \;
 \left( \begin{array}{cc}
 0 & 0 \\
 0 & A^{(0)}_{22} \\
 \end{array} \right).
\eq
The differential form $\differentialform^0_{1000}\left[1\right]$ has constant leading singularities
\bq
\mathrm{Res}_{P_0} \differentialform^0_{1000}\left[1\right] \; = \; C,
 \;\;\;\;\;\;
 C \; = \; -16 i \frac{e^{2 \eps \Eulerconstant} \pi^3 \eps^3}{\Gamma\left(1-2\eps\right)},
\eq
where $C$ is a constant of uniform transcendental weight zero.
However, the differential form $\differentialform^0_{0100}\left[1\right]$ does not have constant leading singularities. We find
\bq
\mathrm{Res}_{P_1} \differentialform^0_{0100}\left[1\right] \; = \; -\frac{1}{2\sqrt{1+4x_1}} C.
\eq
It is well-known that an $\eps$-factorised form can be achieved by normalising $J_2$ with $\sqrt{1+4x_1}$, e.g. setting
\bq
\label{sector_57_J_to_K}
 J_1 \; = \; K_1, & & J_2 \; = \; \frac{1}{\sqrt{1+4x_1}} K_2
\eq
will lead on the maximal cut to an $\eps$-factorised differential equation in the basis $K$.
The same is achieved by our algorithm.
We start from the ansatz
\bq
 J & = & R^{(0)}_2 K,
 \;\;\;\;\;\;
 R^{(0)}_2
 \; = \; 
 \left( \begin{array}{cc}
 R^{(0)}_{11} & R^{(0)}_{12} \\
 R^{(0)}_{21} & R^{(0)}_{22} \\
 \end{array} \right).
\eq
The functions in $R^{(0)}_2$ are determined by the differential equation
\bq
 d_B R^{(0)}_2 & = & A^{(0)} R^{(0)}_2.
\eq
We immediately see that $R^{(0)}_{11}$ and $R^{(0)}_{12}$ are constant and we may choose $R^{(0)}_{11}=1$ and $R^{(0)}_{12}=0$.
$R^{(0)}_{21}$ and $R^{(0)}_{22}$ satisfy the differential equation
\bq
 d_B R^{(0)}_{2i} & = & A^{(0)}_{22} R^{(0)}_{2i},
 \;\;\;\;\;\; i \in \{1,2\}.
\eq
For $R^{(0)}_{21}$ we may pick the trivial solution $R^{(0)}_{21}=0$, however we cannot do this for $R^{(0)}_{22}$ as we must have $\det R^{(0)}_2 \neq 0$.
Thus we seek a non-trivial solution to the differential equation
\bq
 \frac{d}{dx_1} R^{(0)}_{22} & = & - \frac{2}{1+4x_1} R^{(0)}_{22}.
\eq
This differential equation has the solution
\bq
 R^{(0)}_{22} & = & \frac{C'}{\sqrt{1+4x_1}},
\eq
where $C'$ is an integration constant. 
We see that we recover exactly the basis change of eq.~(\ref{sector_57_J_to_K}).

\end{myexample}


\section{Examples}
\label{sect:examples}

In this section, we illustrate the algorithm with several examples.
We start with a trivial example, the one-loop massless box integral in section~\ref{sect:one_loop_box}.
All one-loop integrals correspond on the maximal cut
to differential $0$-forms in ${\mathbb C}{\mathbb P}^0$ and are therefore trivial.
Starting from two-loops, we get on the maximal cut Baikov representations of dimension larger than zero.
We first discuss in section~\ref{sect:double_box} the massless planar double box integral, followed by the massless planar pentabox integral
in section~\ref{sect:pentabox}.
These are still rather simple integrals, as they both evaluate to multiple polylogarithms.
In the language of our algorithm, this means that all integrands can be localised on points.
In order to appreciate the capabilities of our algorithm, we have to go beyond multiple polylogarithms.
In section~\ref{sect:sector_93_moeller} and section~\ref{sect:electron_self_energy}, 
we discuss two examples of Feynman integrals associated with elliptic curves.
In both examples, we start with differential forms on ${\mathbb C}{\mathbb P}^2$, and in both cases, the differential one-forms associated
with the elliptic curves live on a localisation.
In section~\ref{sect:four_loop_banana}, we discuss an example associated to a Calabi--Yau threefold.

In the examples we mainly focus on step $1$ of the algorithm, as this part contains most of the new features.
Part $2$ of the algorithm is mechanical, we provide for this part one example in section~\ref{sect:electron_self_energy},
other examples can be found in refs.~\cite{e-collaboration:2025frv,Pogel:2025bca}.


\subsection{The one-loop box}
\label{sect:one_loop_box}

We start with a trivial example, the massless one-loop box integral.
There is one master integral in the top sector.
The inverse propagators are
\begin{align}
 \sigma_1 & = -k_1^2,
 &
 \sigma_2 & = -\left(k_1-p_1\right)^2,
 &
 \sigma_3 & = -\left(k_1-p_{12}\right)^2,
 &
 \sigma_4 & = -\left(k_1-p_{123}\right)^2,
\end{align}
with $p_{ij}=p_i+p_j$, $p_{ijk}=p_i+p_j+p_k$ and likewise for $p_{ijkl}$ if any, which will be used for later examples as well. 
We set $s=(p_1+p_2)^2$, $t=(p_2+p_3)^2$, $x=s/t$ and the arbitrary scale $\arbitraryscale^2=t$.
\begin{figure}
\begin{center}
\includegraphics[scale=1.0]{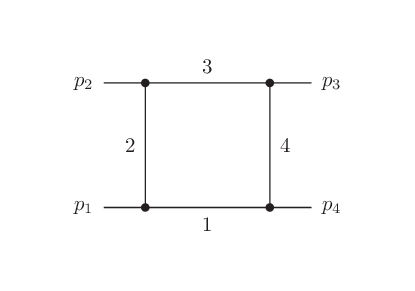}
\end{center}
\caption{
The one-loop box graph.
}
\label{fig:one_loop_box}
\end{figure}
The graph is shown in fig.~\ref{fig:one_loop_box}.
The Baikov representation on the maximal cut in $D=4-2\eps$ space-time dimensions reads
\bq
 e^{\eps \Eulerconstant} 
 t^{2+\eps}
 \int\limits_{{\mathcal C}_{\mathrm{maxcut}}}
 \frac{d^Dk_1}{i \pi^{2-\eps}}
 \frac{1}{\prod\limits_{j=1}^{4} \sigma_j}
 & = &
 \frac{2^{3+2\eps} \pi^{\frac{5}{2}} e^{\eps \Eulerconstant}}{\Gamma\left(\frac{1}{2}-\eps\right)} 
 x^{-1-\eps}
 \left(1+x\right)^\eps.
\eq
This is a zero-dimensional Baikov representation.
We have
\bq
 \prebaikov & = &
 \frac{2^{3+2\eps} \pi^{\frac{5}{2}} e^{\eps \Eulerconstant}}{\Gamma\left(\frac{1}{2}-\eps\right)} 
 x^{-1-\eps}
 \left(1+x\right)^\eps.
\eq
Setting 
\bq
 \preabs & = & \eps^2 x
\eq
one easily verifies that $\prebaikov \cdot \preabs$ is pure of transcendental weight zero.

In twisted cohomology, we consider ${\mathbb C}{\mathbb P}^0$. This is a point.
We have only one polynomial $P_0=z_0$ and the twist is simply $U=1$.
Following section~\ref{sect:localisation_on_a_point} there is one master integrand
\bq
 \differentialform_{1}\left[1\right]
 & = &
 -
 \frac{2^{3+2\eps} \pi^{\frac{5}{2}} e^{\eps \Eulerconstant} \eps^2}{\Gamma\left(\frac{1}{2}-\eps\right)} 
 x^{-\eps}
 \left(1+x\right)^\eps.
\eq
In this example, the integrand is a $0$-form.
The differential equation for $\differentialform_{1}[1]$ is already in $\eps$-factorised form:
\bq
 \frac{d}{dx} \differentialform_{1}\left[1\right] & = & \eps \left( \frac{1}{1+x}-\frac{1}{x}\right) \differentialform_{1}\left[1\right].
\eq
For this example we have an isomorphism $\iota : V^0 \rightarrow \Hgen^0_\omega$, given by
\bq
 \iota\left( - \eps^2 x \; I_{1111} \right) & = & \differentialform_{1}\left[1\right].
\eq
It is easily verified that $(- \eps^2 x \; I_{1111})$ is a pure master integral of transcendental weight zero
and satisfies an $\eps$-factorised differential equation, including sub-sectors.


\subsection{The two-loop double box}
\label{sect:double_box}

We continue with a simple example, the two-loop planar massless double box integral.
This is now an example with two master integrals in the top sector.
The inverse propagators are
\begin{align}
 \sigma_1 & = -\left(k_1-p_1\right)^2,
 &
 \sigma_2 & = -\left(k_1-p_{12}\right)^2,
 &
 \sigma_3 & = -k_1^2,
 \nonumber \\
 \sigma_4 & = -\left(k_1+k_2\right)^2,
 &
 \sigma_5 & = -\left(k_2+p_{12}\right)^2,
 &
 \sigma_6 & = -k_2^2,
 \nonumber \\
 \sigma_7 & = -\left(k_2+p_{123}\right)^2,
 & 
 \sigma_8 & = -\left(k_1-p_{13}\right)^2,
 &
 \sigma_9 & = -\left(k_2+p_{13}\right)^2.
\end{align}
The kinematics is as in the previous example.
\begin{figure}
\begin{center}
\includegraphics[scale=1.0]{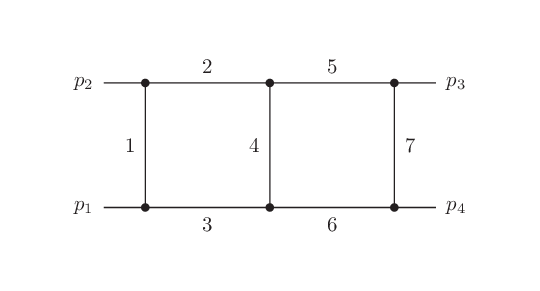}
\end{center}
\caption{
The two-loop double box graph.
}
\label{fig:double_box}
\end{figure}
The graph is shown in fig.~\ref{fig:double_box}.
We consider the loop-by-loop Baikov representation, where the inner loop is formed by the 
inverse propagators $\sigma_4, \sigma_5, \sigma_6, \sigma_7$.
With
\bq
 r
 & = & 
 \frac{t^5 \sigma_8 \left(\sigma_8-t\right)}{\sigma_1 \sigma_2 \sigma_3 \sigma_4 \sigma_5 \sigma_6 \sigma_7}
\eq
and $z_1=\sigma_8/t$
we obtain a Baikov representation with the minimal twist function:
\bq
\lefteqn{
 e^{2 \eps \Eulerconstant} 
 t^{-4+2\eps}
 \int\limits_{{\mathcal C}_{\mathrm{maxcut}}}
 \frac{d^Dk_1}{i \pi^{2-\eps}}
 \frac{d^Dk_2}{i \pi^{2-\eps}}
 r\left(\sigma\right)
 = } & &
 \nonumber \\
 & &
 \frac{2^{6+4\eps} \pi^{5} e^{2 \eps \Eulerconstant}}{\Gamma\left(\frac{1}{2}-\eps\right)^2} 
 x^{-2-2\eps}
 \left(1+x\right)^\eps
 \int \frac{dz_1}{\left(2\pi i\right)}
 z_1^{-2\eps} \left(z_1-1\right)^{-\eps} \left(z_1-x-1\right)^\eps
\eq
We read off
\bq
 \prebaikov
 &= &
 \frac{2^{6+4\eps} \pi^{5} e^{2 \eps \Eulerconstant}}{\Gamma\left(\frac{1}{2}-\eps\right)^2} 
 x^{-2-2\eps}
 \left(1+x\right)^\eps.
\eq
Setting
\bq
 \preabs & = & \eps^4 x^2
\eq
ensures that $\prebaikov \cdot \preabs$ is pure of transcendental weight zero.
In projective space, the minimal twist function reads 
\bq
 U
 & = &
 z_0^{2\eps}
 z_1^{-2\eps} \left(z_1-z_0\right)^{-\eps} \left[z_1-\left(x+1\right)z_0\right]^\eps.
\eq
We set
\begin{align}
 P_0 & = z_0,
 &
 P_1 & = z_1,
 &
 P_2 & = z_1-z_0,
 &
 P_3 & = z_1-\left(x+1\right)z_0.
\end{align}
The sets $I_{\mathrm{even}}^0$ and $I_{\mathrm{odd}}^0$ are given by
\bq
 I_{\mathrm{even}}^0 \; = \; \left\{0,1,2,3\right\},
 & &
 I_{\mathrm{odd}}^0 \; = \; \emptyset.
\eq
We have
\bq
 \dim V^1 \; = \; \dim \Hgen^1_\omega \; = \; 2,
\eq
hence there is an isomorphism between $V^1$ and $\Hgen^1_\omega$, and we do not need to worry about symmetries nor super-sectors.
The rational function $\hat{\Phi}_{\mu_0 \mu_1 \mu_2 \mu_3}[Q]$ has to be homogeneous of degree $(-2)$,
as $\eta=z_0 dz_1 - z_1 dz_0$ is homogeneous of degree $2$.
As we have a one-dimensional Baikov representation, we work in projective space ${\mathbb C}{\mathbb P}^1$.
The algorithm will first consider all possible localisations. As we have four even polynomials, we may localise
on the points
\bq
 \left[z_0:z_1\right]
 & \in & 
 \left\{
 \;\;
 \left[0:1\right],
 \;\;
 \left[1:0\right],
 \;\;
 \left[1:1\right],
 \;\;
 \left[1:1+x\right]
 \;\;
 \right\}.
\eq
From each localisation, we obtain one master integrand. We may obtain the same master integrand from different localisations.
An example is given by
\bq
 \differentialform_{1100}\left[1\right]
 & = & 
 -4 \eps^4 x^2 \prebaikov U \frac{\eta}{z_0z_1},
\eq
which has residues at $P_0=0$ and $P_1=0$.
Note that the number of consecutive non-zero residues equals one for this differential form.
After we have taken a residue at either $P_0=0$ or $P_1=0$, we localise to a point, and we may no longer take another
residue at another point.
Note further that the differential forms 
\bq
 \differentialform_{1100}\left[1\right],
 \;\;\;
 \differentialform_{1010}\left[1\right],
 \;\;\;
 \differentialform_{1001}\left[1\right]
\eq
are up to trivial prefactors equal on the localisation $P_0=0$.
It will depend on the unspecified dots in the order relation $\laportaorder$, which differential form is picked.
After considering all possible localisations, we will have up to four differential forms with $a=-2$.
Running the Laporta algorithm on the full sector will reduce them to two master integrands.
A possible choice for the master integrands is
\bq
\label{master_integrands_doublebox}
 \differentialform_{0110}\left[1\right]
 & = &
 2 \eps^4 x^2 \prebaikov U \frac{\eta}{z_1\left(z_1-z_0\right)},
 \nonumber \\
 \differentialform_{1010}\left[1\right]
 & = & 
 -2 \eps^4 x^2 \prebaikov U \frac{\eta}{z_0\left(z_1-z_0\right)}.
\eq
The exact choice will depend on the unspecified dots in the order relation $\laportaorder$. 
This choice is of no relevance here.
The isomorphism $\iota : V^1 \rightarrow \Hgen^1_\omega$ is given by
\bq
 \iota\left( 2 \eps^4 x^2 I_{111111100} \right) & = & \differentialform_{0110}\left[1\right],
 \nonumber \\
 \iota\left( -2 \eps^4 x^2 I_{1111111\left(-1\right)0} \right) & = & \differentialform_{1010}\left[1\right].
\eq
Up to trivial prefactors, this choice corresponds to the example discussed in ref.~\cite{Weinzierl:2022eaz}.
The differential equation for the master integrands defined in eq.~(\ref{master_integrands_doublebox}) is already in
$\eps$-factorised form, therefore there is nothing to be done in step $2$ for the top sector.
In the final step, we consider the full system, including all subsectors.
For the choice of the master integrands as in eq.~(\ref{master_integrands_doublebox})
one finds that $(-2 \eps^4 x^2 I_{1111111\left(-1\right)0})$ will receive corrections from subsectors.
These are obtained in a straightforward way with an ansatz as in step 2, and
the final result corresponds to the one given in ref.~\cite{Weinzierl:2022eaz}.

In summary, the Hodge-like diagram for the decomposition of $V^1$ 
with respect to the $W_\bullet$-filtration and the $\Fgeom^\bullet$-filtration is
\begin{center}
\begin{axopicture}(280,140)(0,0)
\Text(110,60)[c]{$0$}
\Text(150,60)[c]{$0$}
\Text(130,80)[c]{$2$}
\DashLine(30,70)(260,70){6}
\DashLine(30,90)(260,90){6}
\Text(253,60)[c]{$W_1$}
\Text(253,80)[c]{$W_2$}
\Line(260,70)(260,64)
\Line(260,90)(260,84)
\DashLine(120,10)(220,110){3}
\DashLine(80,10)(180,110){3}
\Text(110,20)[c]{$\Fgeom^0$}
\Text(70,20)[c]{$\Fgeom^1$}
\Line(120,10)(117,10)
\Line(80,10)(77,10)
\end{axopicture}
\end{center}

\subsection{The two-loop pentabox integral}
\label{sect:pentabox}

The next example we are considering is the two-loop five-point pentagon box integral, see fig.~\ref{fig:penta_box}. 
An $\eps$-factorised form for this family of Feynman integrals has been given in ref.~\cite{Gehrmann:2018yef}.
This is an example with three master integrals in the top sector.
The inverse propagators are defined as:
\begin{align}
 \sigma_1 & = -k_1^2 ,
 &
 \sigma_2 & = -\left(k_{1}+p_1\right)^2 ,
 &
 \sigma_3 & = -\left(k_{1}+p_{12}\right)^2,
 \nonumber \\
 \sigma_4 & = -\left(k_{1}+p_{123}\right)^2,
 &
 \sigma_5 & = - k_2^2,
 &
 \sigma_6 & = -\left(k_{2}+p_{123}\right)^2,
 \nonumber \\
 \sigma_7 & = -\left(k_{2}+p_{1234}\right)^2,
 &
 \sigma_8 & = -\left(k_1-k_2\right)^2,
 &
 \sigma_9 & = -\left(k_{1}+p_{1234}\right)^2,
 \nonumber \\
 \sigma_{10} & = -\left(k_{2}+p_{1}\right)^2,
 &
 \sigma_{11} & = -\left(k_{2}+p_{12}\right)^2.
\end{align}
\begin{figure}
\begin{center}
\includegraphics[scale=1.0]{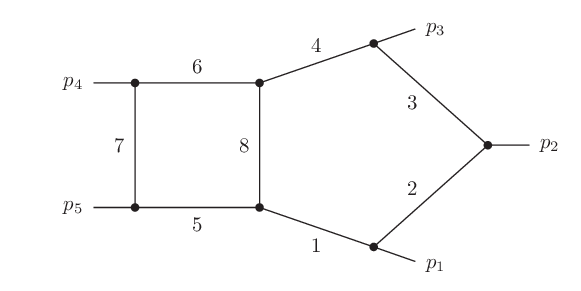}
\end{center}
\caption{
The two-loop pentagon box graph.
}
\label{fig:penta_box}
\end{figure}
The integral depends on four independent (dimensionless) kinematic variables
\bq
	x_1 = \dfrac{s_{12}}{s_{45}}, \quad x_2 = \dfrac{s_{23}}{s_{45}}, \quad x_3 = \dfrac{s_{34}}{s_{45}}, \quad x_4 = \dfrac{s_{15}}{s_{45}}, 
\eq
where $s_{ij}=2 p_i \cdot p_j$. We set $\arbitraryscale^2 = s_{45}$. A loop-by-loop Baikov representation on the maximal cut is
\begin{align}
& e^{2 \eps \Eulerconstant}   \left(s_{45}\right)^{4+2\eps} \int\limits_{{\mathcal C}_{\mathrm{maxcut}}} \frac{d^Dk_1}{i \pi^{2-\eps}} \frac{d^Dk_2}{i \pi^{2-\eps}} \prod_{i=1}^{8}\dfrac{1} {\sigma_i}=  \\
\prebaikov
& \int \frac{dz_1}{\left(2\pi i\right)}
 z_1^{-1-\eps}\left(1-z_1\right)^{\eps} \left(\left(1-x_1-x_2\right) z_1^2+\left(x_3 - x_2 x_3+x_1\left(x_2 -x_4 \right) + x_4\right) z_1+x_3 x_4 \right)^{-1-\eps},\nonumber
\end{align}
where $z_1=\sigma_9/s_{45}$ and 
\begin{align}
\prebaikov &= \dfrac{32 i\,e^{2 \eps \Eulerconstant}  \pi^{5}}{\Gamma\left( -2 \eps\right)} x_1^{-1-\eps} x_2^{-1-\eps} \Delta^{\frac{1}{2}+\eps}, \nonumber \\
\Delta &= \left(x_1 x_2+x_2 x_3-x_3 +x_4-x_1 x_4\right){}^2-4 x_1 x_2 x_3 \left(x_2-x_4-1\right).
\end{align}
By setting 
\bq
	\preabs= \dfrac{\eps^4 x_1 x_2}{ \sqrt{\Delta}}
\eq
the product $\prebaikov \cdot \preabs$ is pure of transcendental weight zero. The minimal twist function is 
\bq
	U =  z_0^{2 \eps }  z_1^{-\eps } \left(z_1-z_0\right)^{\eps } \left(\left(1-x_1-x_2\right) z_1^2+\left(x_3 - x_2 x_3+x_1\left(x_2 -x_4 \right) + x_4\right) z_0 z_1+x_3 x_4 z_0^2\right)^{-\eps}
\eq
in projective space.
We set 
\begin{align}
	P_0&=z_0, \qquad P_1=z_1, \qquad P_2=z_1 - z_0, \nonumber \\
	P_3&=\left(1-x_1-x_2\right) z_1^2+\left(x_3 - x_2 x_3+x_1\left(x_2 -x_4 \right) + x_4\right) z_0 z_1+x_3 x_4 z_0^2.
\end{align}
Therefore, the sets $I_{\mathrm{even}}^0$ and $I_{\mathrm{odd}}^0$ are 
\bq
 I_{\mathrm{even}}^0 \; = \; \left\{0,1,2,3\right\},
 & &
 I_{\mathrm{odd}}^0 \; = \; \emptyset.
\eq
We find that 
\bq
 \dim V^1 \; = \; \dim \Hgen^1_\omega \; = \; 3.
\eq
Consequently, there is an isomorphism between $V^1$ and $\Hgen^1_\omega$, and we do not have to worry about symmetries and super-sectors, as in the previous example.

Since $\eta=z_0 dz_1 - z_1 dz_0$ is homogeneous of degree $2$ and $d_U=0$, the rational functions $\hat{\Phi}_{\mu_0 \mu_1 \mu_2 \mu_3}[Q]$ have to be homogeneous of degree $(-2)$. We start by looking at all possible localisations. 
There are five possibilities to localise on a point, and we will get up to five differential forms with $a=-2$.
Running the Laporta algorithm on the full sector will reduce them to three master integrands.
A possible choice for the master integrands is
\bq
\label{master_integrands_pentabox}
 \differentialform_{0001}\left[1 \right]
  & = &  -\eps^4\, \frac{x_1x_2}{\sqrt{\Delta}} C_{\rm Baikov}\, U \frac{\eta}{P_3},
 \nonumber \\
 \differentialform_{0101}\left[z_0 \right]
 & = &
 \eps^4\, \frac{x_1x_2}{\sqrt{\Delta}} C_{\rm Baikov}\, U \frac{z_0\,\eta}{P_1P_3},
 \nonumber \\
 \differentialform_{1001}\left[z_1 \right]
 & = & -2\eps^4\,\frac{x_1x_2}{\sqrt{\Delta}} C_{\rm Baikov}\, U \frac{z_1\,\eta}{P_0P_3}.
\eq
The exact choice will depend on the unspecified dots in the order relation $\laportaorder$. 
This choice is of no relevance here.
The isomorphism $\iota : V^1 \rightarrow \Hgen^1_\omega$ is given by
\bq
 \iota\left( - \eps^4 \frac{x_1x_2}{\sqrt{\Delta}} I_{11111111(-1)00} \right) & = & \differentialform_{0001}\left[1\right],
 \nonumber \\
 \iota\left( \eps^4 \frac{x_1x_2}{\sqrt{\Delta}} I_{11111111000} \right) & = & \differentialform_{0101}\left[z_0\right],\\
 \iota\left( -2\eps^4 \frac{x_1x_2}{\sqrt{\Delta}} I_{11111111(-2)00} \right) & = & \differentialform_{1001}\left[z_1\right].\nonumber
\eq
The decomposition of $V^1$ with respect to $(W_\bullet,\Fgeom^\bullet)$ is
\begin{center}
\begin{axopicture}(280,140)(0,0)
\Text(110,60)[c]{$0$}
\Text(150,60)[c]{$0$}
\Text(130,80)[c]{$3$}
\DashLine(30,70)(260,70){6}
\DashLine(30,90)(260,90){6}
\Text(253,60)[c]{$W_1$}
\Text(253,80)[c]{$W_2$}
\Line(260,70)(260,64)
\Line(260,90)(260,84)
\DashLine(120,10)(220,110){3}
\DashLine(80,10)(180,110){3}
\Text(110,20)[c]{$\Fgeom^0$}
\Text(70,20)[c]{$\Fgeom^1$}
\Line(120,10)(117,10)
\Line(80,10)(77,10)
\end{axopicture}
\end{center}
The differential equation for $(\differentialform_{0001}\left[1\right], \differentialform_{0101}\left[z_0\right], \differentialform_{1001}\left[z_1\right])$
is in Laurent-polynomial form and compatible with the $\Fcomb^\bullet$-filtration. 
As there is only one non-trivial part in the $\Fcomb^\bullet$-filtration, we have that the matrix $A$ is linear in $\eps$.
We may then proceed to step $2$ of the algorithm and arrive at an $\eps$-factorised basis equivalent to the one given in ref.~\cite{Gehrmann:2018yef}.

\subsection{A two-loop contribution to M{\o}ller scattering}
\label{sect:sector_93_moeller}

As the next example, we consider a two-loop integral contributing to M{\o}ller scattering.
This is now a non-trivial example with five master integrals in the top sector and an elliptic curve associated to
the top sector.
We use this example to illustrate details of the construction of the intermediate basis $J$.
The inverse propagators are defined by
\begin{align}
 \sigma_1 & = -\left(k_1-p_1\right)^2 +m^2,
 &
 \sigma_2 & = -\left(k_1-p_{12}\right)^2,
 &
 \sigma_3 & = -k_1^2,
 \nonumber \\
 \sigma_4 & = -\left(k_1+k_2\right)^2 + m^2,
 &
 \sigma_5 & = -\left(k_2+p_{12}\right)^2,
 &
 \sigma_6 & = -k_2^2,
 \nonumber \\
 \sigma_7 & = -\left(k_2+p_{123}\right)^2 + m^2,
 & 
 \sigma_8 & = -\left(k_1-p_{13}\right)^2,
 &
 \sigma_9 & = -\left(k_2+p_{13}\right)^2.
\end{align}
\begin{figure}
\begin{center}
\includegraphics[scale=1.0]{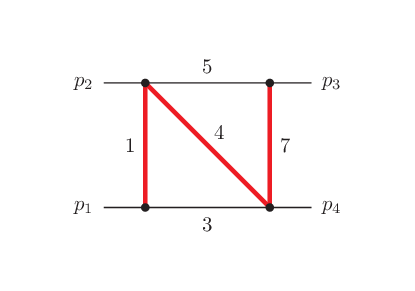}
\end{center}
\caption{
A two-loop contribution to M{\o}ller scattering.
Massive propagators are indicated by red lines.
}
\label{fig:moeller_sector_93}
\end{figure}
We are interested in sector $93$ with inverse propagators $\{\sigma_1,\sigma_3,\sigma_4,\sigma_5,\sigma_7\}$.
The graph is shown in fig.~\ref{fig:moeller_sector_93}.
We consider the loop-by-loop Baikov representation, where the inner loop is formed by the inverse propagators $\{\sigma_4,\sigma_5,\sigma_7\}$.
We set $x_1=m^2/s$, $x_2=t/s$.
With
\bq
 r
 & = & 
 \frac{s^4 \left(\sigma_8+m^2-t\right)}{\sigma_1 \sigma_3 \sigma_4 \sigma_5 \sigma_7}
\eq
and $z_1=(\sigma_8+m^2-t)/s$, $z_2=(\sigma_2+\sigma_8+m^2-t)/s$,
we obtain a Baikov representation with a minimal twist function:
\bq
 e^{2 \eps \Eulerconstant} 
 s^{-4+2\eps}
 \int\limits_{{\mathcal C}_{\mathrm{maxcut}}}
 \frac{d^Dk_1}{i \pi^{2-\eps}}
 \frac{d^Dk_2}{i \pi^{2-\eps}}
 r\left(\sigma\right)
 & = &
 \prebaikov
 \int \frac{dz_1}{\left(2\pi i\right)} \frac{dz_2}{\left(2\pi i\right)} \left. U\left(z\right) \right|_{z_0=1},
\eq
where
\bq
 \prebaikov
 & = & 
 - \frac{32 \pi^4 e^{2 \eps \Eulerconstant}}{\Gamma\left(1-2\eps\right)} x_1^{-\eps} x_2^\eps \left(1+x_2\right)^\eps.
\eq
Setting
\bq
 \preabs & = & \eps^4 
\eq
ensures that $\prebaikov \cdot \preabs$ is pure of transcendental weight zero.
In projective space, the minimal twist function reads 
\bq
 U & = &
 P_0^{2\eps} P_1^{2\eps} P_2^{-\eps} P_3^{-\frac{1}{2}-\eps}.
\eq
The polynomials are given by
\bq
 P_0
 & = &
 z_0,
 \nonumber \\
 P_1
 & = &
 z_1,
 \nonumber \\
 P_2
 & = &
 z_1 z_2 + x_1 z_0 z_2 - z_1^2,
 \nonumber \\
 P_3
 & = &
 \left[ x_2 z_1 - \left(1+x_2\right) z_2 + \left(x_1-x_2\right) z_0\right]^2 
 + 4 x_1 \left(1+x_2\right) z_0 z_2.
\eq
We have three even polynomials $I_{\mathrm{even}}^0=\{0,1,2\}$ and one odd polynomial $I_{\mathrm{odd}}^0=\{3\}$.
In this example, the dimensions of $V^2$ and $\Hgen^2_\omega$ do not match.
We find
\bq
 \dim V^2 \; = \; 5,
 & &
 \dim \Hgen^2_\omega \; = \; 6.
\eq
The mismatch in dimensions is related to symmetries. An example of a symmetry relation is
\bq
 I_{\nu_1 \nu_2 \nu_3 \nu_4 \nu_5 \nu_6 \nu_7 0 0}
 & = &
 I_{\nu_7 \nu_6 \nu_5 \nu_4 \nu_3 \nu_2 \nu_1 0 0}.
\eq
We have $d_U=-1$, hence $\hat{\Phi}$ must be homogeneous of degree $-2$.
In order to keep the notation compact, we will use 
whenever there is no ambiguity
the pre-images $\differentialform_{\mu_0 \mu_1 \mu_2 \mu_3}[Q]$
instead of the more lengthy notation $\mathrm{Res}_{{\mathcal P}_1,\dots,{\mathcal P}_r} \differentialform_{\mu_0 \mu_1 \mu_2 \mu_3}[Q,I,\Divisor_s]$.
 
In the construction of the intermediate basis $J$, we first consider recursively all possible localisations.
We have three even polynomials, and we consider in turn the localisation on each of them.

We start with the localisation at $P_0=0$. In this limit, 
$P_2$ factorises
\bq
 \left. P_2 \right|_{z_0=0} & = & 
 z_1 \left( z_2 - z_1\right),
\eq
and $P_3$ becomes a perfect square
\bq
 \left. P_3 \right|_{z_0=0} & = & 
 \left[ x_2 z_1 - \left(1+x_2\right) z_2 \right]^2.
\eq
We may further localise on $P_1$, $P_2$ or $P_3$, resulting in the localisations on the points
\bq
 \left[z_0:z_1:z_2\right]
 & \in & 
 \left\{
 \;
  \left[0:0:1\right],
 \;
  \left[0:1:1\right],
 \;
  \left[0:1+x_2:x_2\right]
 \;
 \right\}.
\eq
The differential form $\differentialform_{1100}[1]$ has non-zero residues at the points $[0:0:1]$ and $[0:1+x_2:x_2]$,
the differential form $\differentialform_{1010}\left[z_1\right]$ has non-zero residues at the points $[0:1:1]$ and $[0:1+x_2:x_2]$.
Hence, we obtain at weight $w=4$ the master integrands
\bq
\label{example_moeller_93_localisation_P0}
 \differentialform_{\mu_0 \mu_1 \mu_2 \mu_3}
 & \in &
 \left\{
 \;
  \differentialform_{1100}\left[1\right],
 \;
  \differentialform_{1010}\left[z_1\right]
 \;
 \right\}.
\eq
We assign $a=-4$, $r=2$ and $o=2$ to these three differential forms.

There is no merge procedure in going from zero-dimensional varieties to one-dimensional varieties, as there are no intersections between distinct points.

We then run the Laporta algorithm for the localisation at $P_0=0$. 
We choose $P_1$ as a scale polynomial.
This gives us
\bq
 U^{\mathrm{loc}}_{\langle P_0 \rangle} & = & P_1^{4\eps} P_2^{-\eps} P_3^{-\frac{1}{2}-\eps}
\eq
as the modified twist function.
From an analysis of the number of critical points on the localisation $P_0=0$, we expect to find one master integrand,
and indeed there is one linear relations among the two candidates in eq.~(\ref{example_moeller_93_localisation_P0}).
The unspecified dots in the order relation will determine the candidate to be eliminated.
We assume that $\differentialform_{1010}\left[z_1\right]$ is eliminated.
Thus we obtain 
\bq
 \differentialform_{1100}\left[1\right]
\eq
as master integrand from the localisation at $P_0=0$.

Next, we consider the localisation at $P_1=0$.
In this limit, we have
\bq
 \left. P_2 \right|_{z_1=0} & = & 
 x_1 z_0 z_2,
 \nonumber \\
 \left. P_3 \right|_{z_1=0} & = & 
 \left[ - \left(1+x_2\right) z_2 + \left(x_1-x_2\right) z_0\right]^2 
 + 4 x_1 \left(1+x_2\right) z_0 z_2.
\eq
In principle, the square root associated with $\left. P_3 \right|_{z_1=0}$ can be rationalised, but this is not needed.
After localisation on $z_1=0$, we may further localise on $z_0=0$ or on $z_2=0$,
i.e. on the points
\bq
 \left[z_0:z_1:z_2\right]
 & \in & 
 \left\{
 \;
  \left[0:0:1\right],
 \;
  \left[1:0:0\right]
 \;
 \right\}.
\eq
The first possibility corresponds to $\differentialform_{1100}[1]$, which we already obtained
from the localisation on $P_0=0$.
The second possibility gives $\differentialform_{0110}[z_0]$.
We assign $a=-4$ to these two candidates and run 
the Laporta algorithm on the localisation $P_1=0$.
As a scale polynomial, we choose $P_0$,
giving us the modified twist function
\bq
 U^{\mathrm{loc}}_{\langle P_1 \rangle}
 & = &
 P_0^{4\eps} P_2^{-\eps} P_3^{-\frac{1}{2}-\eps}.
\eq
There are no further linear relations among these two candidates and 
$\{\differentialform_{1100}[1],\differentialform_{0110}[z_0]\}$ forms a basis for the localisation on $P_1=0$.

We then consider the localisation at $P_2=0$.
The polynomial $P_2$ is of degree $2$, however, it is linear in $z_0$ and $z_2$.
From an analysis of the number of critical points on the localisation $P_2=0$, we expect to find five master integrands.
We first consider the possible localisations on points.
We find three points
\bq
 \left\{
 \;
  \left[0:0:1\right],
 \;
  \left[0:1:1\right],
 \;
  \left[1:0:0\right]
 \;
 \right\},
\eq
giving us the master integrands
\bq
 \left\{
 \;
 \differentialform_{1010}\left[z_2\right],
 \;
 \differentialform_{1010}\left[z_1\right],
 \;
 \differentialform_{0110}\left[z_0\right]
 \;
 \right\},
\eq
to which we assign $a=-4$.
As a scale polynomial for the localisation $\langle \Divisor_2 \rangle$, we choose $\Divisor_3$,
giving us the modified twist function
\bq
 U^{\mathrm{loc}}_{\langle P_2 \rangle}
 & = &
 P_0^{2\eps} P_1^{2\eps} P_3^{-\frac{1}{2}-2\eps}.
\eq
Note that the choice of $\Divisor_0$ or $\Divisor_1$ as scale polynomial would give us 
a non-generic modified twist function, where either the divisor $\Divisor_0$ or $\Divisor_1$
is absent.
Running the Laporta algorithm on the localisation $\langle P_2 \rangle$, one obtains two additional 
master integrands
\bq
 \left\{
 \;
  \differentialform_{0010}\left[1\right],
 \;
  \differentialform_{0011}\left[z_0^2\right]
 \;
 \right\},
\eq
to which we assign $a=-3$.

We have now done all localisations. It is instructive to discuss the merge step on the $1$-skeleton in detail.
In order to do so, we switch to the longer notation $\mathrm{Res}_{{\mathcal P}_1,\dots,{\mathcal P}_r} \differentialform_{\mu_0 \mu_1 \mu_2 \mu_3}[Q,I,\Divisor_s]$.
The $1$-skeleton is given by
\bq
 V\left(\langle \Divisor_0 \rangle\right)
 \cup
 V\left(\langle \Divisor_1 \rangle\right)
 \cup
 V\left(\langle \Divisor_2 \rangle\right).
\eq
The $0$-skeleton is given by the four points
\bq
 \left[0:0:1\right]
 \cup
 \left[0:1:1\right]
 \cup
 \left[1:0:0\right]
 \cup
 \left[0:1+x_2:x_2\right].
\eq
The intersections of the components of the $1$-skeleton are given by the three points
\bq
 \bigcup\limits_{(ij)} V({\mathcal I_{ij}})
 & = &
 \left[0:0:1\right]
 \cup
 \left[0:1:1\right]
 \cup
 \left[1:0:0\right].
\eq
In the notation of section~\ref{sect:order_relation}, we write
\bq
 \bigcup\limits_{(ij)} V({\mathcal I_{ij}})
 & = &
 C_1 \cup C_2 \cup C_3
\eq
with
\bq
 C_1 \; = \; \left\{ \left[0:0:1\right] \right\},
 \;\;\;
 C_2 \; = \; \left\{ \left[0:1:1\right] \right\},
 \;\;\;
 C_3 \; = \; \left\{ \left[1:0:0\right] \right\}.
\eq
The skeleton is sketched in fig.~\ref{fig:moeller_sector_93_skeleton}.
\begin{figure}
\begin{center}
\includegraphics[scale=1.0]{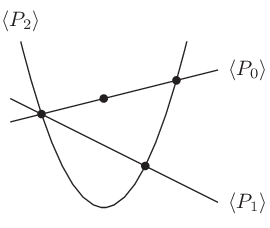}
\end{center}
\caption{
A sketch of the $1$-skeleton and the $0$-skeleton.
}
\label{fig:moeller_sector_93_skeleton}
\end{figure}
Note that
\bq
 C_1
 \; = \;
 V\left( \left\langle P_0 \right\rangle \right) \cap V\left( \left\langle P_1 \right\rangle \right),
 \;\;\;
 C_1 
 \; \subset \;  
 V\left( \left\langle P_0 \right\rangle \right) \cap V\left( \left\langle P_2 \right\rangle \right),
 \;\;\;
 C_1
 \; \subset \; 
 V\left( \left\langle P_1 \right\rangle \right) \cap V\left( \left\langle P_2 \right\rangle \right).
\eq
In the notation of section~\ref{sect:order_relation}, we have
\bq
 S_1 & = & \left\{ 0,1,2 \right\}.
\eq
The bases on the localisations of dimension $1$ are
\bq
\label{objects_from_localisations}
 V\left(\langle \Divisor_0 \rangle\right):
 & &
 \mathrm{Res}_{\Divisor_0} \differentialform_{1100}\left[1,\langle \Divisor_0 \rangle,\Divisor_1\right],
 \\
 V\left(\langle \Divisor_1 \rangle\right):
 & &
 \mathrm{Res}_{\Divisor_1} \differentialform_{1100}\left[1,\langle \Divisor_1 \rangle,\Divisor_0\right],
 \;
 \mathrm{Res}_{\Divisor_1} \differentialform_{0110}\left[z_0,\langle \Divisor_1 \rangle,\Divisor_0\right],
 \nonumber \\
 V\left(\langle \Divisor_2 \rangle\right):
 & &
 \mathrm{Res}_{\Divisor_2} \differentialform_{1010}\left[z_2,\langle \Divisor_2 \rangle,\Divisor_3\right],
 \;
 \mathrm{Res}_{\Divisor_2} \differentialform_{1010}\left[z_1,\langle \Divisor_2 \rangle,\Divisor_3\right],
 \;
 \mathrm{Res}_{\Divisor_2} \differentialform_{0110}\left[z_0,\langle \Divisor_2 \rangle,\Divisor_3\right],
 \nonumber \\
 & &
 \mathrm{Res}_{\Divisor_2} \differentialform_{0010}\left[1,\langle \Divisor_2 \rangle,\Divisor_3\right],
 \;
 \mathrm{Res}_{\Divisor_2} \differentialform_{0011}\left[z_0^2,\langle \Divisor_2 \rangle,\Divisor_3\right].
 \nonumber
\eq
Let us assume that we have chosen at the point $\left[0:0:1\right]$ the basis
\bq
 \left[0:0:1\right]: & &  \mathrm{Res}_{\Divisor_0,\Divisor_1} \differentialform_{1100}\left[1,\langle \Divisor_0,\Divisor_1 \rangle,\Divisor_3\right].
\eq
There are eight different objects appearing in eq.~(\ref{objects_from_localisations}).
From the intersections of the components of the $1$-skeleton, we obtain three equations:
\bq
 \left[0:0:1\right]:
 & &
 \mathrm{Res}_{\Divisor_0} \differentialform_{1100}\left[1,\langle \Divisor_0 \rangle,\Divisor_1\right]
 \; = \;
 \mathrm{Res}_{\Divisor_1} \differentialform_{1100}\left[1,\langle \Divisor_1 \rangle,\Divisor_0\right],
 \nonumber \\
 \left[0:1:1\right]:
 & &
 \mathrm{Res}_{\Divisor_0} \differentialform_{1010}\left[z_1,\langle \Divisor_0 \rangle,\Divisor_1\right]
 \; = \; 
 \mathrm{Res}_{\Divisor_2} \differentialform_{1010}\left[z_1,\langle \Divisor_2 \rangle,\Divisor_3\right],
 \nonumber \\
 \left[1:0:0\right]:
 & &
 \mathrm{Res}_{\Divisor_1} \differentialform_{0110}\left[z_0,\langle \Divisor_1 \rangle,\Divisor_0\right]
 \; = \;
 \mathrm{Res}_{\Divisor_2} \differentialform_{0110}\left[z_0,\langle \Divisor_2 \rangle,\Divisor_3\right].
\eq
Note that there is only one equation for the point $\left[0:0:1\right]$, as $\differentialform_{1100}\left[1\right]$ has a residue
on $\langle \Divisor_0 \rangle$ and $\langle \Divisor_1 \rangle$, but not on $\langle \Divisor_2 \rangle$.
Note further that $\mathrm{Res}_{\Divisor_0} \differentialform_{1010}\left[z_1,\langle \Divisor_0 \rangle,\Divisor_1\right]$
is not a master integrand on $\langle \Divisor_0 \rangle$.
We therefore add one equation, which relates this integrand to the master integrand on $\langle \Divisor_0 \rangle$:
\bq
 c_1 \mathrm{Res}_{\Divisor_0} \differentialform_{1100}\left[1,\langle \Divisor_0 \rangle,\Divisor_1\right]
 +
 c_2 \mathrm{Res}_{\Divisor_0} \differentialform_{1010}\left[z_1,\langle \Divisor_0 \rangle,\Divisor_1\right]
 & = & 0.
\eq
The coefficients are obtained from the reduction on $\langle \Divisor_0 \rangle$.
The actual values do not matter, it is only relevant for us that $c_1 \neq 0$ and $c_2 \neq 0$.
We now have nine objects and four equations, leaving us with five independent integrands.
After renaming, we have
\bq
 \differentialform_{1100}\left[1\right],
 \;
 \differentialform_{1010}\left[z_2\right],
 \;
 \differentialform_{0110}\left[z_0\right],
 \;
 \differentialform_{0010}\left[1\right],
 \;
 \differentialform_{0011}\left[z_0^2\right].
\eq
Finally, we run the reduction on the full system without any localisation.
This will give one additional master integrand
\bq
  \differentialform_{0001}\left[1\right].
\eq
The decomposition of $\Hgen^2_\omega$ with respect to $(W_\bullet,\Fgeom^\bullet)$ is
\begin{center}
\begin{axopicture}(280,140)(0,0)
\Text(70,60)[c]{$0$}
\Text(110,60)[c]{$1$}
\Text(150,60)[c]{$0$}
\Text(90,80)[c]{$1$}
\Text(130,80)[c]{$1$}
\Text(110,100)[c]{$3$}
\DashLine(30,70)(260,70){6}
\DashLine(30,90)(260,90){6}
\DashLine(30,110)(260,110){6}
\Text(253,60)[c]{$W_2$}
\Text(253,80)[c]{$W_3$}
\Text(253,100)[c]{$W_4$}
\Line(260,70)(260,64)
\Line(260,90)(260,84)
\Line(260,110)(260,104)
\DashLine(120,10)(240,130){3}
\DashLine(80,10)(200,130){3}
\DashLine(40,10)(160,130){3}
\Text(110,20)[c]{$\Fgeom^0$}
\Text(70,20)[c]{$\Fgeom^1$}
\Text(30,20)[c]{$\Fgeom^2$}
\Line(120,10)(117,10)
\Line(80,10)(77,10)
\Line(40,10)(37,10)
\end{axopicture}
\end{center}
Ordered by $|\mu|$ we have the following basis for $\Hgen^2_\omega$:
\bq
 \differentialform
 & \in &
 \left\{
  \differentialform_{0001}\left[1\right],
 \;
  \differentialform_{0010}\left[1\right],
 \;
  \differentialform_{0011}\left[z_0^2\right],
 \;
  \differentialform_{1100}\left[1\right],
 \;
  \differentialform_{1010}\left[z_2\right],
 \;
 \differentialform_{0110}[z_0]
 \;
  \right\}.
\eq
For this basis, one finds a differential equation of the form as in eq.~(\ref{refined_statement}).
For the map $\iota : V^2 \rightarrow \Hgen^2_\omega$, we have 
at weight four
\bq
 \iota\left( 4 \eps^4 I_{101110100} \right) & = & \differentialform_{1100}\left[1\right],
 \nonumber \\
 \iota\left( -2 \eps^3 I_{101210100}\left[z_2\right] \right) & = & \differentialform_{1010}\left[z_2\right],
 \nonumber \\
 \iota\left( \eps^3 \left( I_{101120100} + \frac{1}{x_1} I_{101210100}\left[z_2\right]  \right) \right) & = & \differentialform_{0110}\left[z_2\right],
\eq
where we used the notation
\bq
 I_{101210100}\left[z_2\right]
 & = & 
 I_{1\left(-1\right)1210100} + I_{1012101\left(-1\right)0} + \left(x_1-x_2\right) I_{101210100}.
\eq
The pre-image of $\differentialform_{0010}\left[1\right]$ is given by
\bq
 \iota\left( -\eps^3 I_{101210100} \right) & = & \differentialform_{0010}\left[1\right].
\eq
Note that instead of $\differentialform_{0011}\left[z_0^2\right]$
we could have chosen $\differentialform_{0011}\left[\partial_{x_2} P_3 \right]$, 
in the order relation they only differ in the unspecified dots.
The latter integrand has the pre-image
\bq
 \iota\left( -\eps^2 \partial_{x_2} I_{101210100} +\eps^3 \left(\frac{1}{x_2}+\frac{1}{1+x_2}\right)I_{101210100} \right) & = & \differentialform_{0011}\left[\partial_{x_2} P_3 \right].
\eq
We may either replace $\differentialform_{0011}\left[z_0^2\right]$ by $\differentialform_{0011}\left[\partial_{x_2} P_3 \right]$ or express $\differentialform_{0011}\left[\partial_{x_2} P_3 \right]$ in terms of $\differentialform_{0011}\left[z_0^2\right]$ and the other basis elements of
$\Hgen^2_\omega$.

The sector discussed in this example is a sub-sector of the planar double box integral
with three $Z$-boson exchanges contributing to M{\o}ller scattering.
The planar and the non-planar double box integrals with all sub-sectors included will be discussed in detail
in a forthcoming publication.


\subsection{A three-loop contribution to the electron self-energy}
\label{sect:electron_self_energy}

As the next example, we consider a specific three-loop contribution to the electron self-energy~\cite{Duhr:2024bzt}.
We discuss the three-loop banana integral with one massless propagator.
This is an example with non-normal crossing singularities.
The example has three master integrals in the top sector and an elliptic curve associated with
the top sector.
The inverse propagators are defined by
\begin{align}
 \sigma_1 & = -k_1^2,
 &
 \sigma_2 & = -k_2^2 + m^2,
 &
 \sigma_3 & = -k_3^2 + m^2,
 \nonumber \\
 \sigma_4 & = -\left(k_1+k_2+k_3-p\right)^2 + m^2,
 &
 \sigma_5 & = -\left(k_1+k_2-p\right)^2,
 &
 \sigma_6 & = -\left(k_1-p\right)^2,
 \nonumber \\
 \sigma_7 & = -\left(k_1+k_2\right)^2,
 & 
 \sigma_8 & = -\left(k_1+k_3\right)^2,
 &
 \sigma_9 & = -\left(k_2+k_3\right)^2.
\end{align}
\begin{figure}
\begin{center}
\includegraphics[scale=1.0]{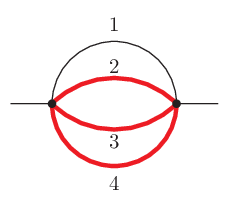}
\end{center}
\caption{
A three-loop contribution to the electron self-energy.
Massive propagators are indicated by red lines.
}
\label{fig:SE_electron}
\end{figure}
We are interested in sector $15$ with inverse propagators $\{\sigma_1,\sigma_2,\sigma_3,\sigma_4\}$.
The graph is shown in fig.~\ref{fig:SE_electron}.
It will be convenient to discuss this example in $D=2-2\eps$ space-time dimensions.
We consider the loop-by-loop Baikov representation, where the innermost loop is formed by the inverse propagators
$\{\sigma_3,\sigma_4\}$,
followed by the loop where we add the inverse propagator $\sigma_2$. For the third and final loop, we add 
the inverse propagator $\sigma_1$.
This yields a two-dimensional Baikov representation with Baikov variables $\sigma_5$ and $\sigma_6$.
We set $x=m^2/p^2$, $z_1=\sigma_5/p^2$ and $z_2=\sigma_6/p^2$.
We have three master integrals in sector $15$ (this is the sector of interest) and one master integral in sector $47$
(this is a super-sector).
In addition, there are two independent symmetry relations in $V^2$, which can be taken as
\bq
\label{example_symmetry_relation}
 2 I_{1 2 1 1 0 0 0 0 0} - I_{1 1 2 1 0 0 0 0 0} - I_{1 1 1 2 0 0 0 0 0}
 & = & 0,
 \nonumber \\
 \frac{d}{dx} \left( 2 I_{1 2 1 1 0 0 0 0 0} - I_{1 1 2 1 0 0 0 0 0} - I_{1 1 1 2 0 0 0 0 0} \right)
 & = & 0.
\eq
Thus we have
\bq
 \dim V^2 \; = \; 3,
 & &
 \dim \Hgen^2_\omega \; = \; 6.
\eq
With
\bq
 r
 & = & 
 \frac{\left(p^2\right)^3 \left(\sigma_6+p^2\right)}{\sigma_1 \sigma_2 \sigma_3 \sigma_4},
\eq
we obtain a Baikov representation with a minimal twist function:
\bq
 e^{3 \eps \Eulerconstant} 
 \left(p^2\right)^{-3+3\eps}
 \int\limits_{{\mathcal C}_{\mathrm{maxcut}}}
 \frac{d^Dk_1}{i \pi^{1-\eps}}
 \frac{d^Dk_2}{i \pi^{1-\eps}}
 \frac{d^Dk_3}{i \pi^{1-\eps}}
 r\left(\sigma\right)
 & = &
 \prebaikov
 \int \frac{dz_1}{\left(2\pi i\right)} \frac{dz_2}{\left(2\pi i\right)} \left. U\left(z\right) \right|_{z_0=1},
\eq
where
\bq
 \prebaikov
 & = & 
 - \frac{2^{6+6\eps} i \pi^{\frac{9}{2}} e^{3 \eps \Eulerconstant}}{\Gamma\left(\frac{1}{2}-\eps\right)^3}.
\eq
Setting
\bq
 \preabs & = & \eps^3
\eq
ensures that $\prebaikov \cdot \preabs$ is pure of transcendental weight zero.
The minimal twist function reads
\bq 
 U\left(z_0,z_1,z_2\right)
 & = &
 P_0^{4\eps}
 P_1^\eps 
 P_2^{-2\eps}
 P_3^{-\frac{1}{2}}
 P_4^{-\frac{1}{2}-\eps}
 P_5^{-\frac{1}{2}-\eps},
\eq
with
\begin{align}
 P_0 & = z_0, & P_3 & = z_1, \nonumber \\
 P_1 & = z_2, & P_4 & = z_1 + 4 x z_0, \nonumber \\
 P_2 &= z_2+z_0, & P_5 & = \left(z_2-z_1\right)^2 +2 x z_0 \left( z_1+z_2\right) + x^2 z_0^2.
\end{align}
We have $d_U=-2$, hence $\hat{\Phi}$ must be homogeneous of degree $-1$.
We have three even polynomials ($I_{\mathrm{even}}^0=\{0,1,2\}$) and we consider in turn 
the localisations on each of the even polynomials.
As we have two Baikov variables, we have to consider for each of those possibilities the localisation on a further polynomial.
As in the previous example, we keep the notation compact
and use whenever there is no ambiguity
the pre-images $\differentialform_{\mu_0 \mu_1 \mu_2 \mu_3}[Q]$
instead of the more lengthy notation $\mathrm{Res}_{{\mathcal P}_1,\dots,{\mathcal P}_r} \differentialform_{\mu_0 \mu_1 \mu_2 \mu_3}[Q,I,\Divisor_s]$.

We start with the localisation at $P_0=0$.
From the analysis of the critical points, we expect one master integrand from the localisation on $P_0=0$.
With the choice of $\Divisor_1$ as a scale polynomial, we obtain the (non-minimal) modified twist function
\bq
 U^{\mathrm{loc}}_{\langle \Divisor_0 \rangle}
 & = &
 P_1^{5\eps} 
 P_2^{-2\eps}
 P_3^{-\frac{1}{2}}
 P_4^{-\frac{1}{2}-\eps}
 P_5^{-\frac{1}{2}-\eps}.
\eq
We have
\bq
 \left. U^{\mathrm{loc}}_{\langle \Divisor_0 \rangle} \right|_{z_0=0}
 & = &
 z_2^{3\eps} 
 z_1^{-1-\eps}
 \left(z_2-z_1\right)^{-1-2\eps}.
\eq
We may take a second residue at the points
\begin{align}
 \left[0:0:1\right],
 &&
 \left[0:1:1\right],
 &&
 \left[0:1:0\right].
\end{align}
The differential form $\differentialform_{100000}\left[1\right]$ has a non-zero residue
on the first two points, 
the differential form $\differentialform_{110000}\left[z_1\right]$
has a non-zero residue at the third point (and at the second point).
Therefore we obtain from the three points two candidates 
\bq
\label{masters_electron_SE_localisation_P0}
 \differentialform_{100000}\left[1\right],
 \;
 \differentialform_{110000}\left[z_1\right],
\eq
to which we assign $a=-4$.

Running then in dimension one (i.e. at weight $3$) will find a relation among those and eliminate
one candidate. 
In summary, we get one master from this localisation, which we may take as
\bq
 \differentialform_{100000}\left[1\right]
\eq
with $|\mu|=1$, $r=2$ and $o=2$.

Next, we consider the localisation at $\langle P_1 \rangle$.
The analysis of the critical points indicates two master integrands.
We choose $\Divisor_0$ as a scale polynomial and obtain the (non-minimal) modified twist function
\bq
 U^{\mathrm{loc}}_{\langle \Divisor_1 \rangle}
 & = &
 P_0^{5\eps}
 P_2^{-2\eps}
 P_3^{-\frac{1}{2}}
 P_4^{-\frac{1}{2}-\eps}
 P_5^{-\frac{1}{2}-\eps}.
\eq
We have
\bq
 \left. U^{\mathrm{loc}}_{\langle \Divisor_1 \rangle} \right|_{z_2=0}
 & = &
 z_0^{3\eps}
 z_1^{-\frac{1}{2}}
 \left(z_1+4xz_0\right)^{-\frac{1}{2}-\eps}
 \left(z_1+xz_0\right)^{-1-2\eps}.
\eq
We may take a second residue at the points
\begin{align}
 \left[0:1:0\right],
 &&
 \left[1:-x:0\right].
\end{align}
The differential form $\differentialform_{110000}[z_1]$ has a two-fold non-zero residue at $[ 0:1:0 ]$.
We choose this differential form as a master integrand on the point $[ 0:1:0 ]$.
This differential form also has a two-fold non-zero residue at the other point
$[ 1:-x:0 ]$, however, at this point, there is a simpler choice, given by the differential form
$\differentialform_{010000}[1]$.
The algorithm will pick the latter.
From dimension zero, we obtain these two candidates.
Running then in dimension one, we find that there are no further linear relations and
these two differential forms constitute a basis for the localisation on $\langle P_1 \rangle$.
In summary, we get two masters from the localisation $\Divisor_1=0$, given by
\bq
 \differentialform_{010000}\left[1\right],
 \;
 \differentialform_{110000}\left[z_1\right].
\eq
The former differential form has the values $(|\mu|,r,o)=(1,2,2)$, the latter
has the values $(|\mu|,r,o)=(2,2,2)$.

We then consider the localisation at $\langle P_2 \rangle$.
An analysis of the critical points indicates three master integrands at $\langle P_2 \rangle$.
In order to reduce polynomials modulo the ideal $\langle P_2 \rangle=\langle z_0+z_2 \rangle$,
we need to choose a monomial order for $(z_0,z_1,z_2)$.
Let's assume that we eliminate $z_2$ in favour of $z_0$, e.g. $z_2=-z_0$ on $\langle P_2 \rangle$.
We choose $\Divisor_0$ as a scale polynomial and obtain, up to a prefactor, the modified twist function
\bq
 U^{\mathrm{loc}}_{\langle \Divisor_2 \rangle}
 & = &
 P_0^{2\eps}
 P_1^\eps 
 P_3^{-\frac{1}{2}}
 P_4^{-\frac{1}{2}-\eps}
 P_5^{-\frac{1}{2}-\eps}.
\eq
We have
\bq
 \left. U^{\mathrm{loc}}_{\langle \Divisor_2 \rangle} \right|_{z_2=-z_0}
 & = & 
 z_0^{3\eps}
 z_1^{-\frac{1}{2}}
 \left(z_1+4xz_0\right)^{-\frac{1}{2}-\eps}
 \left[ \left(z_0+z_1\right)^2 +2 x z_0 \left( z_1-z_0\right) + x^2 z_0^2 \right]^{-\frac{1}{2}-\eps}.
\eq
At the point $[ 0:1:0 ]$, we may take a second residue, and we obtain for this point the master integrand
\bq
 \differentialform_{101000}\left[z_1\right],
\eq
with $|\mu|=2$, $r=2$ and $o=2$. We assign $a=-4$ to this integrand.
The remaining two master integrands on the localisation $\Divisor_2=0$ are related to an elliptic curve and 
are obtained by running the Laporta algorithm on the localisation $\Divisor_2=0$.
It is instructive to discuss this in detail:
The first ``elliptic'' integrand is straightforward:
\bq
 \differentialform_{001000}\left[1\right],
\eq
and has $|\mu|=1$, $r=1$ and $o=1$.
For the second one, there is a choice.
The candidates should have $|\mu|=2$, $r=1$ and $o=2$.
If we further require that the numerator polynomial has minimal degree, we need to consider
\bq
 \differentialform_{001100}\left[Q\right], \differentialform_{001010}\left[Q\right]
\eq
with $\deg Q=1$.
We require that the candidates have poles of order at most two on ${\mathbb C}{\mathbb P}^2$, not just on the localisation.
This excludes, for example
\bq
 \differentialform_{001100}\left[z_0\right],
\eq
which has a pole of order three (after blow-up) at $z_1=0, z_2=-x z_0$.
We take
\bq
 \differentialform_{001010}\left[z_0\right]
\eq
as the second master integrand related to the elliptic curve. This integrand has $|\mu|=2$, $r=1$ and $o=2$.

We have now done all localisation. 
In the next step, we merge the preferred candidates from the localisations of dimension $1$.
We switch back to the longer notation 
\bq
 \mathrm{Res}_{{\mathcal P}_1,\dots,{\mathcal P}_r} \differentialform_{\mu_0 \mu_1 \mu_2 \mu_3 \mu_4 \mu_5}[Q,I,\Divisor_s].
\eq
The $1$-skeleton is given by
\bq
 V\left(\langle \Divisor_0 \rangle\right)
 \cup
 V\left(\langle \Divisor_1 \rangle\right)
 \cup
 V\left(\langle \Divisor_2 \rangle\right).
\eq
The $0$-skeleton is given by the four points
\bq
 \left[0:0:1\right]
 \cup
 \left[0:1:1\right]
 \cup
 \left[0:1:0\right]
 \cup
 \left[1:-x:0\right].
\eq
The intersections of the components of the $1$-skeleton is given by a single point
\bq
 \bigcup\limits_{(ij)} V({\mathcal I_{ij}})
 & = &
 C_1
 \; = \; 
 \left[0:1:0\right].
\eq
We have $S_1=\{0,1,2\}$.
The skeleton is sketched in fig.~\ref{fig:SE_electron_skeleton}.
\begin{figure}
\begin{center}
\includegraphics[scale=1.0]{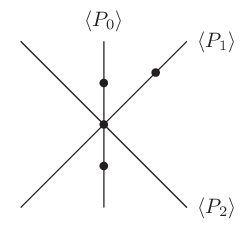}
\end{center}
\caption{
A sketch of the $1$-skeleton and the $0$-skeleton.
}
\label{fig:SE_electron_skeleton}
\end{figure}
The bases on the localisations of dimension $1$ are
\bq
\label{objects_from_localisations_electron}
 V\left(\langle \Divisor_0 \rangle\right):
 & &
 \mathrm{Res}_{\Divisor_0} \differentialform_{100000}\left[1,\langle \Divisor_0 \rangle,\Divisor_1\right]
 \\
 V\left(\langle \Divisor_1 \rangle\right):
 & &
 \mathrm{Res}_{\Divisor_1} \differentialform_{010000}\left[1,\langle \Divisor_1 \rangle,\Divisor_0\right],
 \;
 \mathrm{Res}_{\Divisor_1} \differentialform_{110000}\left[z_1,\langle \Divisor_1 \rangle,\Divisor_0\right],
 \nonumber \\
 V\left(\langle \Divisor_2 \rangle\right):
 & &
 \mathrm{Res}_{\Divisor_2} \differentialform_{101000}\left[z_1,\langle \Divisor_2 \rangle,\Divisor_0\right],
 \;
 \mathrm{Res}_{\Divisor_2} \differentialform_{001000}\left[1,\langle \Divisor_2 \rangle,\Divisor_0\right],
 \;
 \mathrm{Res}_{\Divisor_2} \differentialform_{001010}\left[z_0,\langle \Divisor_2 \rangle,\Divisor_0\right].
 \nonumber
\eq
These are six different objects.
Let's assume that we have chosen $\differentialform_{110000}[z_1]$ as a master integrand on the point $[0:1:0]$.
This differential form has a residue on $\langle P_0 \rangle$ and $\langle P_1 \rangle$, but not
on $\langle P_2 \rangle$. 
From the intersections of the components of the $1$-skeleton, we get, therefore, only one equation:
\bq
 \left[0:1:0\right]:
 & &
 \mathrm{Res}_{\Divisor_0} \differentialform_{110000}\left[z_1,\langle \Divisor_0 \rangle,\Divisor_1\right]
 \; = \;
 \mathrm{Res}_{\Divisor_1} \differentialform_{110000}\left[z_1,\langle \Divisor_1 \rangle,\Divisor_0\right].
\eq
$\mathrm{Res}_{\Divisor_0} \differentialform_{110000}\left[z_1,\langle \Divisor_0 \rangle,\Divisor_1\right]$
is not a master integrand on $\langle \Divisor_0 \rangle$.
We add one equation, which relates this integrand to the master integrand on $\langle \Divisor_0 \rangle$:
\bq
 c_1 \mathrm{Res}_{\Divisor_0} \differentialform_{110000}\left[z_1,\langle \Divisor_0 \rangle,\Divisor_1\right]
 +
 c_2 \mathrm{Res}_{\Divisor_0} \differentialform_{100000}\left[1,\langle \Divisor_0 \rangle,\Divisor_1\right]
 & = & 0.
\eq
The coefficients are obtained from the reduction on $\langle \Divisor_0 \rangle$.
The actual values do not matter, for us it is only relevant that $c_1 \neq 0$ and $c_2 \neq 0$.
We now have seven objects and two equations, leaving us with five independent integrands.
After renaming, we have
\bq
 \differentialform_{100000}\left[1\right],
 \;
 \differentialform_{010000}\left[1\right],
 \;
 \differentialform_{101000}\left[z_1\right],
 \;
 \differentialform_{001000}\left[1\right],
 \;
 \differentialform_{001010}\left[z_0\right].
\eq
We then run the full Laporta algorithm and find the last master integrand:
\bq
 \differentialform_{000010}\left[1\right].
\eq
Note that the algorithm excludes 
the candidate $\differentialform_{000100}\left[1\right]$, as this integrand has a pole of order three (after blow-up) at $z_1=0, z_2=-x z_0$.
In summary, we find the basis
\bq
 \mathrm{weight} \; 4:
 & &
  \differentialform_{100000}\left[1\right],
  \differentialform_{010000}\left[1\right],
  \differentialform_{101000}\left[z_1\right],
 \nonumber \\
 \mathrm{weight} \; 3:
 & &
 \differentialform_{001000}\left[1\right],
 \differentialform_{001010}\left[z_0\right],
 \nonumber \\
 \mathrm{weight} \; 2:
 & &
 \differentialform_{000010}\left[1\right].
\eq
The decomposition of $\Hgen^2_\omega$ with respect to $(W_\bullet,\Fgeom^\bullet)$ is
\begin{center}
\begin{axopicture}(280,140)(0,0)
\Text(70,60)[c]{$0$}
\Text(110,60)[c]{$1$}
\Text(150,60)[c]{$0$}
\Text(90,80)[c]{$1$}
\Text(130,80)[c]{$1$}
\Text(110,100)[c]{$3$}
\DashLine(30,70)(260,70){6}
\DashLine(30,90)(260,90){6}
\DashLine(30,110)(260,110){6}
\Text(253,60)[c]{$W_2$}
\Text(253,80)[c]{$W_3$}
\Text(253,100)[c]{$W_4$}
\Line(260,70)(260,64)
\Line(260,90)(260,84)
\Line(260,110)(260,104)
\DashLine(120,10)(240,130){3}
\DashLine(80,10)(200,130){3}
\DashLine(40,10)(160,130){3}
\Text(110,20)[c]{$\Fgeom^0$}
\Text(70,20)[c]{$\Fgeom^1$}
\Text(30,20)[c]{$\Fgeom^2$}
\Line(120,10)(117,10)
\Line(80,10)(77,10)
\Line(40,10)(37,10)
\end{axopicture}
\end{center}
We order the basis by 
$|\mu|$
\bq
 \left\{
  \differentialform_{000010}\left[1\right],
  \differentialform_{001000}\left[1\right],
  \differentialform_{100000}\left[1\right],
  \differentialform_{010000}\left[1\right],
  \differentialform_{001010}\left[z_0\right],
  \differentialform_{101000}\left[z_1\right]
 \right\}
\eq
and verify that it leads to a $\Fcomb^\bullet$-compatible differential equation.

Before proceeding to step $2$, it is advantageous to map back to the space $V^2$ of Feynman integrals.
The integrand on the left-hand side of the first symmetry relation in eq.~(\ref{example_symmetry_relation})
corresponds in $\Agen_\omega^2$ to
\bq
 \differentialform_{000001}\left[z_1+z_2+x z_0\right] - \differentialform_{000010}\left[1\right].
\eq
The symmetry relation tells us that this integrand integrates to zero.
In $\Hgen_\omega^2$ we may reduce the integrand by integration -by-parts identities and we obtain
the relation
\bq
\label{symmetry_1}
 6x \differentialform_{000010}\left[1\right]
 + \differentialform_{001000}\left[1\right]
 + \differentialform_{100000}\left[1\right]
 & = & 0.
\eq
We may proceed in a similar way with the second symmetry relation in eq.~(\ref{example_symmetry_relation}). We obtain a second symmetry relation in $\Hgen_\omega^2$, which we may write 
(by adding a suitable multiple of the first one) as
\bq
\label{symmetry_2}
 4\left(1+3x\right) \differentialform_{001000}\left[1\right]
 + 2 \differentialform_{100000}\left[1\right]
 + 3 \differentialform_{101000}\left[z_1\right]
 & = & 0.
\eq
Note that these two equations are independent of $\eps$. 
We may use these equations to eliminate in each equation the integrand with the highest $\absmu$-value,
this will preserve the compatibility with the $\Fcomb^\bullet$-filtration.
Thus we eliminate $\differentialform_{101000}\left[z_1\right]$ from the second equation.
For the first equation there is a choice, as all differential forms have $\absmu=1$.
Let us assume that we eliminate $\differentialform_{000010}\left[1\right]$.

We may take $\differentialform_{010000}\left[1\right]$ as the master integrand for the super sector $47$.
This leaves us with the integrands
\bq
 \left\{
  \differentialform_{001000}\left[1\right],
  \differentialform_{100000}\left[1\right],
  \differentialform_{001010}\left[z_0\right]
 \right\}
\eq
for sector $15$.
For the map $\iota : V^2 \rightarrow \Hgen^2_\omega$, we have 
\bq
 \iota\left( - 2 \eps^3 I_{111100000} \right) & = & \differentialform_{001000}\left[1\right],
 \nonumber \\
 \iota\left( 4 \eps^3 \left( I_{111100000} + I_{11110\left(-1\right)000} \right) \right) & = & \differentialform_{100000}\left[1\right],
 \nonumber \\
 \iota\left( \eps^2 I_{111200000} \right) & = &  \differentialform_{001010}\left[z_0\right].
\eq
We therefore set
\bq
 J_1
 & = &
 - 2 \eps^3 I_{111100000},
 \nonumber \\
 J_2
 & = & 
 4 \eps^3 \left( I_{111100000} + I_{11110\left(-1\right)000} \right),
 \nonumber \\
 J_3
 & = & 
 \eps^2 I_{111200000}.
\eq
For $J=(J_1,J_2,J_3)^T$ one obtains the differential equation in the form of eq.~(\ref{A_reorganised})
\bq
 d_B J \; = \; A J,
 & & 
 A \; = \; B^{(-1)} + B^{(0)} + B^{(1)}.
\eq
The matrices $B^{(-1)}$, $B^{(0)}$ and $B^{(1)}$ are explicitly given by
\bq
 B^{(-1)}
 & = &
 \left( \begin{array}{cc|c}
  0 & 0 & 0  \\
  0 & 0  & 0  \\
 \hline
 -\frac{3x-1}{2 \eps x\left(x-1\right)\left(9x-1\right)} & 0 & - \frac{27x^2-20x+1}{x\left(x-1\right)\left(9x-1\right)} \\
 \end{array} \right),
 \nonumber \\
 B^{(0)}
 & = & 
 \left( \begin{array}{cc|c}
  0 & 0 & 0 \\
  0 & 0 & 0 \\
 \hline
  -\frac{21x-5}{2 x\left(x-1\right)\left(9x-1\right)} & 0 & 0 \\
 \end{array} \right),
 \nonumber \\
 B^{(1)}
 & = &
 \left( \begin{array}{cc|c}
  0 & 0 & 6 \eps \\
  -\frac{4\eps}{x} & -\frac{4\eps}{x} & 0 \\
 \hline
  -\frac{\left(18x^2-3x+1\right)\eps}{x^2\left(x-1\right)\left(9x-1\right)} & -\frac{\eps}{x^2\left(x-1\right)\left(9x-1\right)} & -\frac{\left(63x^2-30x-1\right)\eps}{x\left(x-1\right)\left(9x-1\right)}\\
 \end{array} \right).
\eq
In these matrices, we indicated the block structure due to the $\Fcomb^\bullet$-filtration.
We then rotate the system to an $\eps$-form with the rotation matrix
\bq
 R_2 & = & R_2^{(-1)} R_2^{(0)}.
\eq
The ansatz for $R_2^{(-1)}$ and $R_2^{(0)}$ is
\begin{align}
 R_2^{(-1)}
 & = 
 \left( \begin{array}{cc|c}
  R^{(-1)}_{11} & R^{(-1)}_{12} & 0  \\
  R^{(-1)}_{21} & R^{(-1)}_{22} & 0 \\
 \hline
  \frac{1}{\eps} R^{(-1)}_{31} & \frac{1}{\eps} R^{(-1)}_{32} & R^{(-1)}_{33} \\
 \end{array} \right),
 &
 R_2^{(0)}
 & = 
 \left( \begin{array}{cc|c}
  1 & 0 & 0 \\
  0 & 1 & 0 \\
 \hline
  R^{(0)}_{31} & R^{(0)}_{32} & 1 \\
 \end{array} \right).
\end{align}
Note that the matrix $R_2^{(-1)}$ is required to be invertible. 
We first consider the differential equation for $R_2^{(-1)}$,
For this matrix we start with the first block-column, involving the six unknown functions
$R^{(-1)}_{11}$, $R^{(-1)}_{12}$, $R^{(-1)}_{21}$, $R^{(-1)}_{22}$, $R^{(-1)}_{31}$ and $R^{(-1)}_{32}$.
We immediately obtain
\bq
 \frac{d}{dx} R^{(-1)}_{21} \; = \; 0,
 & &
 \frac{d}{dx} R^{(-1)}_{22} \; = \; 0.
\eq
Thus these two functions are constant and we can choose $R^{(-1)}_{21}=0$ and $R^{(-1)}_{22}=1$.
Note that we cannot choose $R^{(-1)}_{21}=0$ and $R^{(-1)}_{22}=0$, as the matrix $R_2^{(-1)}$ would then not be invertible.
For the four remaining functions of the first-block column we obtain
\begin{align}
 \frac{d}{dx} R^{(-1)}_{1j} & = 6 R^{(-1)}_{3j},
 &
 \frac{d}{dx} R^{(-1)}_{3j}
 & =
 -\frac{3x-1}{2 x\left(x-1\right)\left(9x-1\right)} R^{(-1)}_{1j} - \frac{27x^2-20x+1}{x\left(x-1\right)\left(9x-1\right)} R^{(-1)}_{3j},
 &
 j \in \{1,2\}.
\end{align}
For $R^{(-1)}_{12}$ and $R^{(-1)}_{32}$ (or $R^{(-1)}_{11}$ and $R^{(-1)}_{31}$) we are free to pick the trivial solution $R^{(-1)}_{12}=0$ and $R^{(-1)}_{32}=0$,
but we cannot do this simultaneously for the second pair, otherwise the matrix $R_2^{(-1)}$ would not be invertible.
Eliminating $R^{(-1)}_{31}$ we obtain a second-order differential equation for $R^{(-1)}_{11}$:
\bq
\label{Picard_Fuchs}
 \left[
 \frac{d^2}{dx^2} 
 + \frac{27x^2-20x+1}{x\left(x-1\right)\left(9x-1\right)} \frac{d}{dx} 
 + \frac{3\left(3x-1\right)}{x\left(x-1\right)\left(9x-1\right)} 
 \right] R^{(-1)}_{11}
 & = &
 0.
\eq
This is the Picard-Fuchs equation of a family of elliptic curves, parameterised by $x$.
The function $R^{(-1)}_{11}$ is therefore a period of the elliptic curve.

We then turn to the second block-column. In this block-column, there is only one unknown function $R^{(-1)}_{33}$.
The differential equation for $R^{(-1)}_{33}$ reads
\bq
 \frac{d}{dx} R^{(-1)}_{33}
 & = &
 \left[ -\frac{27x^2-20x+1}{x\left(x-1\right)\left(9x-1\right)} - \frac{d}{dx} \ln R^{(-1)}_{11} \right] R^{(-1)}_{33}
\eq
A possible solution is
\bq
 R^{(-1)}_{33} & = & \frac{1}{x\left(x-1\right)\left(9x-1\right)R^{(-1)}_{11}}.
\eq
Thus the matrix $R_2^{(-1)}$ reads
\bq
 R_2^{(-1)}
 & = &
 \left( \begin{array}{cc|c}
  R^{(-1)}_{11} & 0 & 0  \\
  0 & 1 & 0 \\
 \hline
  \frac{1}{6\eps} \frac{d}{dx} R^{(-1)}_{11} & 0 & \frac{1}{x\left(x-1\right)\left(9x-1\right)R^{(-1)}_{11}} \\
 \end{array} \right),
\eq
where $R^{(-1)}_{11}$ is determined by eq.~(\ref{Picard_Fuchs}).

Rotating by $R_2^{(-1)}$ we obtain a matrix $\tilde{A}^{(0)}$, which contains terms of $B$-order $0$ and $1$.
The terms of $B$-order $0$ are then removed by the rotation with the matrix $R_2^{(0)}$.
The matrix $R_2^{(0)}$ contains two unknown functions 
$R^{(0)}_{31}$ and $R^{(0)}_{32}$. The differential equation for the latter is trivial
\bq
 \frac{d}{dx} R^{(0)}_{32} & = & 0
\eq
and we may set $R^{(0)}_{32}=0$.
The former satisfies the differential equation
\bq
\label{eq_R0_31}
 \frac{d}{dx} R^{(0)}_{31}
 & = &
 - \frac{1}{6} R^{(-1)}_{11} \left[ \left(63x-15\right) R^{(-1)}_{11} + \left(63x^2-30x-1\right) \frac{d}{dx} R^{(-1)}_{11} \right],
\eq
which can be solved by direct integration.
One obtains
\bq
\label{sol_R0_31}
 R^{(0)}_{31}
 & = &
 - \frac{1}{12} \left(63x^2-30x-1\right) \left( R^{(-1)}_{11} \right)^2.
\eq
Note that we may freely choose the integration constant.
Thus $R_2^{(0)}$ reads
\bq
 R_2^{(0)}
 & = &
 \left( \begin{array}{cc|c}
  1 & 0 & 0 \\
  0 & 1 & 0 \\
 \hline
  R^{(0)}_{31} & 0 & 1 \\
 \end{array} \right),
\eq
with $R^{(0)}_{31}$ given by eq.~(\ref{sol_R0_31}).
After rotation by $R_2^{(0)}$ we then obtain the $\eps$-factorised form.
In this example there is nothing to be done for the tadpole sub-sector.

The tadpole integral is rather simple. We set
\bq
 K_0 & = & \eps^3 I_{011100000}.
\eq
The master integrals integrals $K_1$, $K_2$ and $K_3$ for the sector $15$ are obtained 
by the rotating $R_2= R_2^{(-1)}R_2^{(0)}$ from the basis $J$.
This yields
\bq
 K_1
 & = &
 \frac{J_1}{R^{(-1)}_{11}},
 \nonumber \\
 K_2
 & = &
 J_2,
 \nonumber \\
 K_3
 & = &
 \left[ 
  \frac{1}{12} \left(63x^2-30x-1\right) R^{(-1)}_{11}
  - \frac{1}{6\eps} x \left(x-1\right) \left(9x-1\right) \frac{d}{dx} R^{(-1)}_{11}
 \right] J_1
 \nonumber \\
 & & 
 + x \left(x-1\right) \left(9x-1\right) R^{(-1)}_{11} J_3.
\eq
$K=(K_0,K_1,K_2,K_3)^T$ is a basis of the full system with an $\eps$-factorised differential equation
\bq
 d_B K & = & \eps A K,
\eq
where
\bq
 A
 =
 \left( \begin{array}{cccc}
 -\frac{3}{x} & 0 & 0 & 0 \\
 0 & -\frac{\left(63x^2-30x-1\right)}{2x \left(x-1\right) \left(9x-1\right)} & 0 & \frac{6}{x \left(x-1\right) \left(9x-1\right)\left( R^{(-1)}_{11} \right)^2} \\
 \frac{4}{x} & -\frac{4R^{(-1)}_{11}}{x} & -\frac{4}{x} & 0 \\
 \frac{2 R^{(-1)}_{11}}{x} & \frac{\left(81x^4+1188x^3-594x^2+372x-23\right)\left( R^{(-1)}_{11} \right)^2}{24x \left(x-1\right) \left(9x-1\right)} & -\frac{R^{(-1)}_{11}}{x} & -\frac{\left(63x^2-30x-1\right)}{2x \left(x-1\right) \left(9x-1\right)}  \\
 \end{array} \right).
 \;\;\;
\eq

\subsection{The four-loop equal-mass banana integral}
\label{sect:four_loop_banana}

\begin{figure}[!htp]
\begin{center}
\includegraphics[scale=1.0]{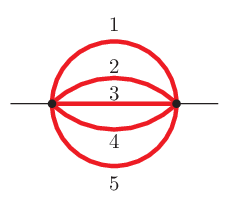}
\end{center}
\caption{
The four-loop equal mass banana integral. Red lines are massive propagators.
}
\label{fig:four_loop_banana}
\end{figure}

As our final example, we consider the four-loop equal mass banana integral~\cite{Pogel:2022ken}.
This is an example with a three-dimensional Baikov representation.
The associated geometry is a Calabi-Yau threefold.
The inverse propagators are defined by
\begin{align}
 \sigma_1 & = -k_1^2 + m^2,
 &
 \sigma_2 & = -k_2^2 + m^2,
 &
 \sigma_3 & = -k_3^2 + m^2,
 \nonumber \\
 \sigma_4 & = -k_4^2 + m^2,
 &
 \sigma_5 & = -\left(k_{1234}-p\right)^2 + m^2,
 &
 \sigma_6 & = -\left(k_{123}-p\right)^2,
 \nonumber \\
 \sigma_7 & = -\left(k_{12}-p\right)^2,
 &
 \sigma_8 & = -\left(k_1-p\right)^2,
 &
 \sigma_9 & = -k_{12}^2,
 \nonumber \\
 \sigma_{10} & = -k_{13}^2,
 &
 \sigma_{11} & = -k_{23}^2,
 &
 \sigma_{12} & = -k_{14}^2,
 \nonumber \\
 \sigma_{13} & = -k_{24}^2,
 &
 \sigma_{14} & = -k_{34}^2,
\end{align}
with $k_{ij}=k_i+k_j$, $k_{ijk}=k_{ij}+k_k$ and $k_{ijkl}=k_{ijk}+k_l$. The graph is shown in fig.~\ref{fig:four_loop_banana}.
It will be convenient to discuss this example in $D=2-2\eps$ space-time dimensions.

\begin{figure}[!htp]
\begin{center}
\includegraphics[scale=0.88]{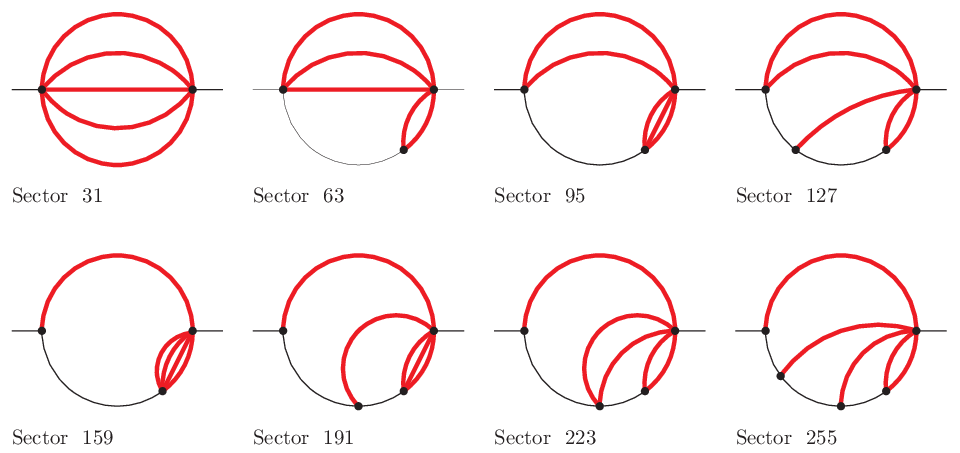}
\end{center}
\caption{
The sector of the equal-mass banana integral at four loops together with the relevant super-sectors.
}
\label{fig:banana4_super_sectors}
\end{figure}

Following the notation of sunrise and three-loop banana integrals, we define $x=m^2/(-p^2),\,\sigma_6=-p^2 z_1,\,\sigma_7= -p^2 z_2, \sigma_8=-p^2 z_3$. The loop-by-loop Baikov representation under the maximal cut of sector 31 (with inverse propagator $\{\sigma_1, \sigma_2, \sigma_3, \sigma_4, \sigma_5\}$) reads
\begin{equation}
	\begin{aligned}
		e^{4\eps\gamma_E} (-s)^{1+4\eps}\int\limits_{\mathcal{C}_{\rm maxcut}} \left(\prod_{i=1}^4\frac{d^D k_i}{i\pi^{1-\eps}}\right) \frac{1}{\sigma_1\sigma_2\sigma_3\sigma_4\sigma_5} = \prebaikov \int \left(\prod_{i=1}^3\frac{d z_i}{2\pi i}\right) \left. U\left(z\right)\right|_{z_0=1}, 
	\end{aligned}
\end{equation}
with 
\begin{equation}
	\prebaikov = e^{4\eps\gamma_E}\frac{2^{8+8\eps}\pi^6}{\Gamma\left(\frac{1}{2}-\eps\right)^{4}}. 
\end{equation}
Setting $\preabs=\eps^4$ ensures that $\prebaikov \cdot \preabs$ is of uniform transcendental weight zero. 

The minimal twist function $U$ for four loops reads
\begin{equation}
    \label{eq:banana4Ltwist}
	U(z_0, z_1, z_2, z_3) = P_0^{\,5\eps}\, P_2^{\,\eps}\, P_3^{\,\eps}\,  P_1^{-\frac{1}{2}}\,  P_4^{-\frac{1}{2}-\eps}\,  P_5^{-\frac{1}{2}-\eps}\,  P_6^{-\frac{1}{2}-\eps}\,  P_7^{-\frac{1}{2}-\eps},
\end{equation}
where
\begin{equation}
    \label{eq:banana4Ldivisors}
	\begin{aligned}
		P_0 &= z_0,\quad P_1 = z_1,\quad P_2=z_2,\quad P_3=z_3,\quad P_4 = z_1 + 4x\,z_0,\\
		P_5 &= \big(z_2 - z_1\big)^2 + 2x (z_2+z_1)z_0 +x^2 z_0^2,\\
		P_6 &= \big(z_3 - z_2\big)^2 + 2x (z_3+z_2)z_0 +x^2 z_0^2,\\
		P_7 &= z_3^2 -2(1-x)z_3 z_0 + (1+x)^2 z_0^2.
	\end{aligned}
\end{equation}
One can generalise the minimal twist function to $l$ loops as
\begin{equation}
    \label{eq:bananalloopstwist}
	U_l(z_0, z_1,\cdots, z_{l-1}) = P_0^{(l+1)\eps}\, P_1^{-\frac{1}{2}}\, P_l^{-\frac{1}{2}-\eps}\left( \,\prod_{i= 2}^{l-1}  P_{i}^{\eps}\, P_{l-1+i}^{-\frac{1}{2}-\eps}\,\right) P_{2l-1}^{-\frac{1}{2}-\eps},
\end{equation}
where
\begin{equation}
	\begin{aligned}
		P_0 &= z_0,\quad P_1 = z_1,\quad P_l = z_1 + 4x z_0,\\
		P_{2l-1} &= z_{l-1}^2 - 2(1-x)z_{l-1} z_0 +(1+x)^2z_0^2,
	\end{aligned}
\end{equation}
and
\begin{equation}
	\begin{aligned}
	    P_i &= z_i, \\
	    P_{l-1+i} &= \left(z_i-z_{i-1}\right)^2+2x\left(z_i + z_{i-1}\right)z_0 + x^2 z_0^2,\qquad 2\leq i\leq l-1.
	\end{aligned}
\end{equation}
The subscript in the twist in \eqref{eq:bananalloopstwist} manifests the loop number. One can find a generalisation for the unequal-mass cases in a similar way. 

In the equal-mass four-loop case, $d_U = -4$, hence $\hat{\Phi}$ must be homogeneous of degree 0.  We can read off from eq.~\eqref{eq:banana4Ltwist} that there are three even polynomials $I_{\rm even}^0 = \{0, 2, 3\}$ and five odd polynomials $I_{\rm odd}^0 = \{1, 4, 5, 6, 7\}$. The dimensions of $V^3$ and $H_\omega^3$ do not match, and we find that 
\begin{equation}
	\dim V^3 = 4,\quad \dim H_\omega^3 = 25.
\end{equation}
The mismatch is due to super-sectors and symmetry relations.
Figure~\ref{fig:banana4_super_sectors} shows all relevant super-sectors.
We find that the super-sector 95 has 3 extra master integrals, and that super-sector 159 has one extra master integral. 
All the other super-sectors have no extra master integrals. 

The four-loop equal-mass banana integral has a high degree of symmetry 
and, as a consequence, there are many symmetry relations (to be concrete, seventeen in this example).
This explains the high dimension of $H_\omega^3$.
As an outlook, let us note that one may incorporate symmetries directly in twisted cohomology following the ideas of \cite{Gasparotto:2023roh,Duhr:2025xyy}.
This will improve the efficiency in examples which have a high degree of symmetry.

The decomposition of $V^3$ with respect to $(W_\bullet,\Fgeom^\bullet)$ is
\begin{center}
\begin{axopicture}(310,160)(0,0)
\Text(70,60)[c]{$1$}
\Text(110,60)[c]{$1$}
\Text(150,60)[c]{$1$}
\Text(190,60)[c]{$1$}
\Text(90,80)[c]{$0$}
\Text(130,80)[c]{$0$}
\Text(170,80)[c]{$0$}
\Text(110,100)[c]{$0$}
\Text(150,100)[c]{$0$}
\Text(130,120)[c]{$0$}
\DashLine(30,70)(300,70){6}
\DashLine(30,90)(300,90){6}
\DashLine(30,110)(300,110){6}
\DashLine(30,130)(300,130){6}
\Text(293,60)[c]{$W_3$}
\Text(293,80)[c]{$W_4$}
\Text(293,100)[c]{$W_5$}
\Text(293,120)[c]{$W_6$}
\Line(300,70)(300,64)
\Line(300,90)(300,84)
\Line(300,110)(300,104)
\Line(300,130)(300,124)
\DashLine(160,10)(290,140){3}
\DashLine(120,10)(250,140){3}
\DashLine(80,10)(210,140){3}
\DashLine(40,10)(170,140){3}
\Text(150,20)[c]{$\Fgeom^0$}
\Text(110,20)[c]{$\Fgeom^1$}
\Text(70,20)[c]{$\Fgeom^2$}
\Text(30,20)[c]{$\Fgeom^3$}
\Line(160,10)(157,10)
\Line(120,10)(117,10)
\Line(80,10)(77,10)
\Line(40,10)(37,10)
\end{axopicture}
\end{center}
Step $2$ of the algorithm proceeds similarly to the calculation of ref.~\cite{Pogel:2022ken}.


\section{Conclusions}
\label{sect:conclusions}

In this work, we provided a detailed description of an algorithm outlined in ref.~\cite{e-collaboration:2025frv}.
This algorithm identifies a basis of master integrals whose differential equations are $\eps$-factorised. 
The approach has the advantage of not requiring any previous knowledge of the underlying geometry associated with the maximally cut Feynman integral. 
The algorithm consists of two steps: first, the selection of master integrands that satisfy differential equations in Laurent polynomial form in $\eps$ and compatible with a filtration, 
and second, the rotation of these differential equations to $\eps$-factorised form. 

In the first step, we perform integration-by-parts reduction within the framework of twisted cohomology, supplemented with a specific ordering criterion in the Laporta algorithm. 
We observe that for all tested cases, the corresponding system of differential equations exhibits a Laurent polynomial form with respect to the regulator $\eps$. 
Even more constraining, the deepest pole in the differential equation is dictated by a certain combinatorial filtration of the basis elements, 
whereas the maximum exponent in the $\eps$ expansion is bounded to be one. We say that such a differential equation is $F^{\bullet}$-compatible.

In the second step of the algorithm, we employ a recursive procedure to find a transformation that leads to $\eps$-factorised form. 
We prove that such a transformation can be systematically found if the starting differential equation is $F^{\bullet}$-compatible. 
At each step of this procedure, we transform the differential equations such that the deepest pole in $\eps$ is removed until all terms are  $\eps$-factorised.

We demonstrate the applicability of the method on a range of examples of varying complexity. 
In particular, we first show that the algorithm correctly produces a basis with desired properties in polylogarithmic cases, 
such as the massless double-box integral and the massless two-loop pentabox.
More importantly, the algorithm is designed for cutting-edge Feynman integrals, characterised by non-trivial geometries with sub-structures, 
like an elliptic curve with additional punctures or a K3-surface with punctures.
For the elliptic case we present two non-trivial examples in section~\ref{sect:sector_93_moeller} and section~\ref{sect:electron_self_energy}.
The case of a K3-surface appeared in a separate publication \cite{Pogel:2025bca}.

Among possible future avenues, it would be interesting to establish, for step one, mathematical proofs of our findings. 
In particular, one can attempt to provide a proof of existence for our candidate basis following the dlog-filtration of twisted cohomology introduced in \cite{Aomoto2015}.

From a practical standpoint, it will be interesting to stress-test the algorithm against even more demanding examples with more scales, more loops, and more Baikov variables. 
Furthermore, the suggested algorithm provides an alternative setup for integration-by-parts reduction, both in the generation of equations and ordering criteria. 
It would be interesting to combine those with other efficiency improvements, e.g.~the ones proposed in refs.~\cite{Peraro:2019svx,Guan:2024byi,Lange:2025fba,Smith:2025xes}. 

As a potential improvement of the algorithm, we can exploit the remaining degeneracy among the differential forms that are equal under the ordering criteria on the maximal cut, but differ by lower-sector terms.
Presently, we do not take advantage of this. 
This could yield additional gains in the performance of the algorithm or provide a more compact lower sector coupling coefficients. 
The latter seems to be hinted at in ref.~\cite{Pogel:2025bca}, where projecting onto the dot basis has resulted in the same expression on the maximal cut, while offering a significant simplification of lower sector couplings. 

An additional improvement can be the incorporation of symmetries directly in twisted cohomology following the ideas of \cite{Gasparotto:2023roh,Duhr:2025xyy}.
This will improve the efficiency in examples which have a high degree of symmetry.
 
Furthermore, in ref.~\cite{Pogel:2024sdi}, it has proven useful to find a self-dual form of the $\eps$-factorised differential equations. 
This was demonstrated to reduce the number of implicitly defined functions in the system. 
It would be interesting to see whether this operation can be performed independently of the underlying geometry, as it would improve the performance of the second step of the algorithm.

\subsection*{Acknowledgements}

This work has been supported by the Research Unit ``Modern Foundations of Scattering Amplitudes'' (FOR 5582)
funded by the German Research Foundation (DFG).
X.W. is supported by the University Development Fund of The Chinese University of Hong Kong, Shenzhen, under the Grant No. UDF01003912.
X.W. is also supported in part by the National Natural Science Foundation of China with Grant No. 12535006.
The work of F.G. is supported by the European Research Council (ERC) under the European Union’s Horizon 2022
Research and Innovation Program (ERC Consolidator Grant LoCoMotive 101043686).
X.X. has received funding from the European Research Council (ERC) under the European Union’s Horizon 2022
Research and Innovation Program (ERC Advanced Grant No.~101097780, EFT4jets). 
Views and opinions expressed
are however those of the authors only and do not necessarily reflect those of the European Union or the European
Research Council Executive Agency. Neither the European Union nor the granting authority can be held responsible for
them.

{\footnotesize
\bibliography{biblio}
\bibliographystyle{h-physrev5}
}

\end{document}